\documentclass[10pt,twocolumn]{article}
\usepackage{latex8}
\usepackage[utf8]{inputenc}
\usepackage[T1]{fontenc}
\usepackage{lmodern}
\usepackage[english]{babel}
\usepackage[unicode]{hyperref}
\usepackage{graphicx}
\usepackage{tikz}
\usepackage{stmaryrd}
\usepackage{slashbox}
\usepackage{mathpartir}
\usepackage{amsmath}
\usepackage{amsthm}
\usepackage[all,2cell,ps]{xy}
\xyoption{curve}
\UseAllTwocells
\SilentMatrices
\CompileMatrices
\usepackage{times}

\newlength{\reduce}
\setlength{\reduce}{-1.1ex}
\newenvironment{mymath}{\vspace\reduce\vspace{-1mm}\begin{displaymath}}{\vspace\reduce\vspace{-1mm}\end{displaymath}\hspace{-0.6ex}}
\newenvironment{myequation}{\vspace{-1mm}\vspace\reduce\begin{equation}}{\vspace\reduce\vspace{-1mm}\end{equation}\hspace{-0.6ex}}
\newcommand{\catquot}[1]{#1/\!\!\!\equiv}

\newcommand{\sstrid}[1]{\input{#1.tex}}
\newcommand{\svxym}[1]{\vcenter{\xymatrix@C=7ex@R=7ex{#1}}}

\newcommand{\axioms}{\text{\emph{Ax}}}
\newcommand{\sgramor}{\ \ \ |\ \ \ }

\allowdisplaybreaks[1]

\usepackage{shortcuts}
\renewcommand{\wrt}{wrt\xspace}

\hyphenation{mo-noi-dal bi-mo-nad mo-noi-da-li-ty mo-noid en-do-func-tors a-na-ly-sis par-ti-cu-lar}

\title{The Structure of First-Order Causality}

\hypersetup{
  pdftitle={\csname @title\endcsname},
  pdfauthor={Samuel Mimram},
  unicode=true,
  colorlinks=true,
  linkcolor=black,
  citecolor=black,
  urlcolor=black
}


\author{Samuel Mimram\\
  CNRS -- Université Paris Diderot\\
  \href{mailto:samuel.mimram@ens-lyon.org}{\texttt{samuel.mimram@pps.jussieu.fr}}
}

\newcommand{\FMR}{\mathbf{MRel}}

\newcommand{\Games}{\mathbf{Games}}
\newcommand{\Alg}[2]{\mathbf{Alg}_{#1}^{#2}}
\newcommand{\intset}[1]{\underline{#1}}

\newcommand{\moves}[1]{M_{#1}}

\newcommand{\qforall}[1]{\forall{#1}.}
\newcommand{\qexists}[1]{\exists{#1}.}

\newtheorem{definition}{Definition}

\newtheorem{lemma}[definition]{Lemma}
\newtheorem{property}[definition]{Property}

\newtheorem{theorem}[definition]{Theorem}

\renewcommand{\paragraph}[1]{\bigskip\noindent\textbf{#1}\;}

\newcommand{\eqth}[1]{\mathfrak{#1}}
\newcommand{\intp}[1]{\llbracket{#1}\rrbracket}
\newcommand{\represents}[1]{\widetilde{#1}}
\newcommand{\card}[1]{\left|#1\right|}
\newcommand{\size}[1]{\left|#1\right|}
\newcommand{\lrule}[1]{\text{(#1)}}
\newcommand{\before}{\varolessthan}

\begin{document}
\maketitle

\begin{abstract}
  Game semantics describe the interactive behavior of proofs by interpreting
  formulas as games on which proofs induce strategies. Such a semantics is
  introduced here for capturing dependencies induced by quantifications in
  first-order propositional logic. One of the main difficulties that has to be
  faced during the elaboration of this kind of semantics is to characterize
  definable strategies, that is strategies which actually behave like a
  proof. This is usually done by restricting the model to strategies satisfying
  subtle combinatorial conditions, whose preservation under composition is often
  difficult to show. Here, we present an original methodology to achieve this
  task, which requires to combine advanced tools from game semantics, rewriting
  theory and categorical algebra. We introduce a diagrammatic presentation of
  the monoidal category of definable strategies of our model, by the means of
  generators and relations: those strategies can be generated from a finite set
  of atomic strategies and the equality between strategies admits a finite
  axiomatization, this equational structure corresponding to a polarized
  variation of the notion of bialgebra. This work thus bridges algebra and
  denotational semantics in order to reveal the structure of dependencies
  induced by first-order quantifiers, and lays the foundations for a mechanized
  analysis of causality in programming languages.
\end{abstract}

Denotational semantics were introduced to provide useful abstract invariants of
proofs and programs modulo cut-elimination or reduction. In particular, game
semantics, introduced in the nineties, have been very successful in capturing
precisely the interactive behavior of programs. In these semantics, every type
is interpreted as a \emph{game} (that is as a set of \emph{moves} that can be
played during the game) together with the rules of the game (formalized by a
partial order on the moves of the game indicating the dependencies between
them). Every move is to be played by one of the two players, called
\emph{Proponent} and \emph{Opponent}, who should be thought respectively as the
program and its environment.
A program is characterized by the sequences of moves that it can exchange with
its environment during an execution and thus defines a \emph{strategy}
reflecting the interactive behavior of the program inside the game specified by
the type of the program.

The notion of \emph{pointer game}, introduced by Hyland and
Ong~\cite{hyland-ong:full-abstraction-pcf}, gave one of the first fully abstract
models of PCF (a simply-typed \hbox{$\lambda$-calculus} extended with recursion,
conditional branching and arithmetical constants). It has revealed that PCF
programs generate strategies with partial memory, called \emph{innocent} because
they react to Opponent moves according to their own \emph{view} of the
play. Innocence
is in this setting the main ingredient to characterize \emph{definable}
strategies, that is strategies which are the interpretation of a PCF term,
because it describes the behavior of the purely functional core of the language
(\ie $\lambda$-terms), which also corresponds to proofs in propositional
logic. This seminal work has lead to an extremely successful series of
semantics: by relaxing in various ways the innocence constraint on strategies,
it became suddenly possible to generalize this characterization to PCF programs
extended with imperative features such as references, control, non-determinism,
etc.

Unfortunately, these constraints are quite specific to game semantics and remain
difficult to link with other areas of computer science or algebra. They are
moreover very subtle and combinatorial and thus sometimes difficult to work
with. This work is an attempt to find new ways to describe the behavior of
proofs.


\paragraph{Generating instead of restricting.}
In this paper, we introduce a game semantics capturing dependencies induced by
quantifiers in first-order propositional logic, forming a strict monoidal
category called $\Games$. Instead of characterizing definable strategies of the
model by restricting to strategies satisfying particular conditions, we show
here that we can equivalently use here a kind of converse approach. We show how
to \emph{generate} definable strategies by giving a \emph{presentation} of those
strategies: a finite set of definable strategies can be used to generate all
definable strategies by composition and tensoring, and the equality between
strategies obtained this way can be finitely axiomatized.

What we mean precisely by a presentation is a generalization of the usual notion
of presentation of a monoid to monoidal categories. For example, consider the
additive monoid $\mathbb{N}^2=\mathbb{N}\times\mathbb{N}$. It admits the
presentation~$\pangle{\;p,q\;|\;qp=pq\;}$, where $p$ and $q$ are two
\emph{generators} and $qp=pq$ is a relation between two elements of the free
monoid $M$ on $\{p,q\}$. This means that $\mathbb{N}^2$ is isomorphic to the
free monoid $M$ on the two generators, quotiented by the smallest congruence
$\equiv$ (\wrt{} multiplication) such that $qp\equiv pq$.
More generally, a (strict) monoidal category~$\mathcal{C}$ (such as~$\Games$)
can be presented by a \emph{polygraph}, consisting of typed generators in
dimension 1 and~2 and relations in dimension 3, such that the
category~$\mathcal{C}$ is monoidally equivalent to the free monoidal category on
the generators, quotiented by the congruence generated by the relations.


\paragraph{Reasoning locally.}
The usefulness of our construction is both theoretic and practical. It reveals
that the essential algebraic structure of dependencies induced by quantifiers is
a polarized variation of the well-known structure of bialgebra, thus bridging
game semantics and algebra. It also proves very useful from a technical point of
view: this presentation allows us to reason locally about strategies. In
particular, it enables us to deduce a posteriori that the strategies of the
category~$\Games$ are definable (one only needs to check that generators are
definable) and that these strategies actually compose, which is not
trivial. Finally, the presentation gives a finite description of the category,
that we can hope to manipulate with a computer, paving the way for a series of
new tools to automate the study of semantics of programming languages.

\paragraph{A game semantics capturing first-order causality.}
Game semantics has revealed that proofs in logic describe particular strategies
to explore formulas. Namely, a formula $A$ is a syntactic tree expressing in
which order its connectives must be introduced in cut-free proofs of $A$. In
this sense, it can be seen as the rules of a game whose moves correspond to
connectives. For instance, consider a formula of the form
\vspace{-1.6ex}
\begin{equation}
  \label{eq:ex-formula}
  \qforall x P\quad\Rightarrow\quad\qforall y\qexists z Q
  \vspace{-0.3ex}
\end{equation}
%
where $P$ and $Q$ are propositional formulas which may contain free
variables. When searching for a proof of~\eqref{eq:ex-formula}, the $\forall y$
quantification must be introduced before the $\exists z$ quantification, and the
$\forall x$ quantification can be introduced independently. Here, introducing an
existential quantification should be thought as playing a Proponent move (the
strategy gives a witness for which the formula holds) and introducing an
universal quantification as playing an Opponent move (the strategy receives a
term from its environment, for which it has to show that the formula holds). So,
the game associated to the formula~\eqref{eq:ex-formula} will be the partial
order on the first-order quantifications appearing in the formula, depicted
below (to be read from the top to the bottom): \vspace\reduce
\[
  \xymatrix@R=4ex@C=4ex{
    \forall x&\ar@{-}[d]\forall y\\
    &\exists z\\
  }
\]


To understand exactly which dependencies induced by proofs are interesting, we
shall examine proofs of the formula $\qexists x P\Rightarrow\qexists y Q$,
which induces the following game:
\vspace{-2ex}
\[
\xymatrix{
  \exists x&\exists y\\
}
\vspace\reduce
\]
By permuting the order of introduction rules, the proof of this formula on the
left-hand side of
\begin{mymath}
\inferrule{
\inferrule{
\inferrule{\pi}
{P\vdash Q[t/y]}
}
{P\vdash \qexists y Q}
}
{\qexists x P\vdash \qexists y Q}
\qquad
\rightsquigarrow
\qquad
\inferrule{
\inferrule{
\inferrule{\pi}
{P\vdash Q[t/y]}
}
{\qexists x P\vdash Q[t/y]}
}
{\qexists x P\vdash \qexists y Q}
\end{mymath}
might be reorganized as the proof on the right-hand side if and only if the term
$t$ used in the introduction rule of the $\exists y$ connective does not have
$x$ as free variable. If the variable~$x$ is free in $t$ then the rule
introducing $\exists y$ can only be used after the rule introducing the $\exists
x$ connective. In this case, it will be reflected by a causal dependency in the
strategy corresponding to the proof, depicted by an oriented wire:
\vspace\reduce
\[
\begin{tikzpicture}[xscale=0.40,yscale=0.40]
\useasboundingbox (-0.5,-0.5) rectangle (4.5,0.5);
\draw[,] (1.00,0.00) -- (3.00,0.00);
\draw (1.85,0.10) -- (2.15,0.00);
\draw (1.85,-0.10) -- (2.15,0.00);
\draw (0.00,0.00) node{$\exists x$};
\draw (4.00,0.00) node{$\exists y$};
\end{tikzpicture}

\vspace{-1mm}
\vspace\reduce
\]
and we sometimes say that the move~$\exists x$ \emph{justifies} the
move~$\exists y$. A simple further study of permutability of introduction rules
of first-order quantifiers shows that this is the only kind of relevant
dependencies.
These permutations of rules where the motivation for the introduction of
non-alternating asynchronous game semantics~\cite{mellies-mimram:ag5}. However,
we focus here on causality and define strategies by the dependencies they induce
on moves.

We thus build a strict monoidal category whose objects are games and whose
morphisms are strategies, in which we can interpret formulas and proofs in
first-order propositional logic, and write~$\Games$ for the subcategory of
definable strategies. This paper is devoted to the construction of a
presentation for this category. We introduce formally the notion of presentation
of a monoidal category in Section~\ref{sec:pres} and recall some useful
classical algebraic structures in Section~\ref{sec:alg-struct}. Then, we give a
presentation of the category of relations in
Section~\ref{section:presentation-rel} and extend this presentation to the
category~$\Games$, that we define formally in
Section~\ref{section:games-strategies}.


\section{Presentations of monoidal categories}
\label{sec:pres}
For the lack of space, we don't recall here the basic definitions in category
theory, such as the definition of monoidal categories. The interested reader can
find a presentation of these concepts in MacLane's reference
book~\cite{maclane:cwm}.

\paragraph{Monoidal theories.}
A \emph{monoidal theory} $\mathbb{T}$ is a strict monoidal category whose
objects are the natural integers, such that the tensor product on objects is the
addition of integers. By an integer $\intset{n}$, we mean here the finite
ordinal $\intset{n}=\{0,1,\ldots,n-1\}$ and the addition is given by
$\intset{m}+\intset{n}=\intset{m+n}$. An \emph{algebra} $F$ of a monoidal theory
$\mathbb{T}$ in a strict monoidal category $\mathcal{C}$ is a strict monoidal
functor from~$\mathbb{T}$ to $\mathcal{C}$; we
write~$\Alg{\mathbb{T}}{\mathcal{C}}$ for the category of algebras from
$\mathbb{T}$ to $\mathcal{C}$ and monoidal natural transformations between
them. Monoidal theories are sometimes called PRO, this terminology was
introduced by Mac Lane in~\cite{maclane:ca} as an abbreviation for ``category
with products''. They generalize equational theories (or Lawere
theories~\cite{lawvere:phd}) in the sense that operations are typed and can
moreover have multiple outputs as well as multiple inputs, and are not
necessarily cartesian but only monoidal.

\paragraph{Presentations of monoidal categories.}
\label{subsection:moncat-presentation}
We now recall the notion of \emph{presentation} of a monoidal category by the
means of typed 1- and 2-dimensional generators and relations.

Suppose that we are given a set $E_1$ whose elements are called \emph{atomic
  types}. We write $E_1^*$ for the free monoid on the set $E_1$ and $i_1:E_1\to
E_1^*$ for the corresponding injection; the product of this monoid is written
$\otimes$.
The elements of~$E_1^*$ are called \emph{types}. Suppose moreover that we are
given a set $E_2$, whose elements are called \emph{generators}, together with
two functions $s_1,t_1:E_2\to E_1^*$, which to every generator associate a type
called respectively its \emph{source} and \emph{target}. We call a
\emph{signature} such a 4-uple $(E_1,s_1,t_1,E_2)$:
%
\vspace{-0.5ex}
\begin{mymath}
\svxym{
  E_1\ar[r]^{i_1}&E_1^*&\ar@<-0.7ex>[l]_{s_1}\ar@<0.7ex>[l]^{t_1}E_2\\
}
\vspace{-0.5ex}
\end{mymath}
\noindent
Every such signature $(E_1,s_1,t_1,E_2)$ generates a free strict monoidal
category $\mathcal{E}$, whose objects are the elements of~$E_1^*$ and whose
morphisms are formal composite and tensor products of elements of $E_2$,
quotiented by suitable laws imposing associativity of composition and tensor and
compatibility of composition with tensor, see~\cite{burroni:higher-word}. If we
write $E_2^*$ for the morphisms of this category and \hbox{$i_2:E_2\to E_2^*$}
for the injection of the generators into this category, we get a diagram
\begin{mymath}
\svxym{
  E_1\ar[d]_{i_1}&\ar@<-0.7ex>[dl]_{s_1}\ar@<0.7ex>[dl]^{t_1}E_2\ar[d]_{i_2}\\
  E_1^*&\ar@<-0.7ex>[l]_{\overline{s_1}}\ar@<0.7ex>[l]^{\overline{t_1}}E_2^*\\
}
\end{mymath}
in $\Set$ together with a structure of monoidal category~$\mathcal{E}$ on the
graph
\vspace{-1ex}
\begin{mymath}
\xymatrix@C=10ex@R=10ex{
  E_1^*&\ar@<-0.7ex>[l]_{\overline{s_1}}\ar@<0.7ex>[l]^{\overline{t_1}}E_2^*\\
}
\end{mymath}
where the morphisms $\overline{s_1},\overline{t_1}:E_2^*\to E_1^*$ are the
morphisms (unique by universality of $E_2^*$) such that
\hbox{$s_1=\overline{s_1}\circ i_2$} and \hbox{$t_1=\overline{t_1}\circ i_2$}.
The \emph{size} $\size{f}$ of a morphism $f:A\to B$ in $E_2^*$ is defined
inductively by
\begin{mymath}
\begin{array}{r@{ = }l@{\qquad}r@{ = }l}
  \size{\id}&0
  &
  \size{f}&1
  \text{\quad if $f$ is a generator}
  \\
  \size{f_1\otimes f_2}
  &
  \size{f_1}+\size{f_2}
  &
  \size{f_2\circ f_1}
  &
  \size{f_1}+\size{f_2}
\end{array}
\end{mymath}
In particular, a morphism is of size $0$ if and only if it is an identity.

Our constructions are a particular case of Burroni's
polygraphs~\cite{burroni:higher-word} (and Street's
2-computads~\cite{street:limit-indexed-by-functors}) who made precise the sense
in which the generated monoidal category is free on the signature. In
particular, the following notion of equational theory is a specialization of the
definition of a 3-polygraph to the case where there is only one 0-cell.


\begin{definition}
  A \textbf{monoidal equational theory} is a 7-uple
  \begin{mymath}
  \eqth{E}=(E_1,s_1,t_1,E_2,s_2,t_2,E_3)
  \end{mymath}
  where $(E_1,s_1,t_1,E_2)$ is a signature together with a set $E_3$ of
  \emph{relations} and two morphisms $s_2,t_2:E_3\to E_2^*$, as pictured in the
  diagram
  \begin{mymath}
  \svxym{
    E_1\ar[d]_{i_1}&\ar@<-0.7ex>[dl]_{s_1}\ar@<0.7ex>[dl]^{t_1}E_2\ar[d]_{i_2}&\ar@<-0.7ex>[dl]_{s_2}\ar@<0.7ex>[dl]^{t_2}E_3\\
    E_1^*&\ar@<-0.7ex>[l]_{\overline{s_1}}\ar@<0.7ex>[l]^{\overline{t_1}}E_2^*\\
  }
  \end{mymath}
  such that $\overline{s_1}\circ s_2=\overline{s_1}\circ t_2$ and
  $\overline{t_1}\circ s_2=\overline{t_1}\circ t_2$.
\end{definition}
Every equational theory defines a monoidal category
$\mathbb{E}=\catquot{\mathcal{E}}$ obtained from the monoidal category
$\mathcal{E}$ generated by the signature $(E_1,s_1,t_1,E_2)$ by quotienting the
morphisms by the congruence $\equiv$ generated by the relations of the
equational theory $\eqth{E}$: it is the smallest congruence (\wrt{} both
composition and tensoring) such that $s_2(e)\equiv t_2(e)$ for every element $e$
of $E_3$. We say that a monoidal equational theory $\eqth{E}$ is a
\emph{presentation} of a strict monoidal category $\mathcal{M}$ when
$\mathcal{M}$ is monoidally equivalent to the category $\mathbb{E}$ generated by
$\eqth{E}$.

We sometimes informally say that an equational theory has a \emph{generator}
$f:A\to B$ to mean that $f$ is an element of $E_2$ such that $s_1(f)=A$ and
$t_1(f)=B$. We also say that the equational theory has a \emph{relation} $f=g$
to mean that there exists an element $e$ of $E_3$ such that $s_2(e)=f$
and~$t_2(e)=g$.



It is remarkable that every monoidal equational theory
$(E_1,s_1,t_1,E_2,s_2,t_2,E_3)$ where the set~$E_1$ is reduced to only one
object $\{1\}$ generates a monoidal category which is a monoidal theory ($\N$ is
the free monoid on one object), thus giving a notion of presentation of those
categories.

\paragraph{Presented categories as models.}
Suppose that a strict monoidal category $\mathcal{M}$ is presented by an
equational theory $\eqth{E}$, generating a category
$\mathbb{E}=\catquot{\mathcal{E}}$. The proof that $\eqth{E}$
presents~$\mathcal{M}$ can generally be decomposed in two parts:
\begin{enumerate}
\item \emph{$\mathcal{M}$ is a model of the equational theory $\eqth{E}$}: there
  exists a functor $\represents{-}$ from the category $\mathbb{E}$ to
  $\mathcal{M}$. This amounts to checking that there exists a functor
  $F:\mathcal{E}\to\mathcal{M}$ such that for all morphisms \hbox{$f,g:A\to
    B$} in $\mathcal{E}$, $f\equiv g$ implies $Ff=Fg$.
\item \emph{$\mathcal{M}$ is a fully-complete model of the equational theory
    $\eqth{E}$}: the functor $\represents{-}$ is full and faithful.
\end{enumerate}
We say that a morphism $f:A\to B$ of $\mathbb{E}$ \emph{represents} the morphism
$\represents{f}:\represents{A}\to\represents{B}$ of $\mathcal{M}$.

Usually, the first point is a straightforward verification. Proving that the
functor $\represents{-}$ is full and faithful often requires more work. In this
paper, we use the methodology introduced by Burroni~\cite{burroni:higher-word}
and refined by Lafont~\cite{lafont:boolean-circuits}. We first define
\emph{canonical forms} which are canonical representatives of the equivalence
classes of morphisms of $\mathcal{E}$ under the congruence $\equiv$ generated by
the relations of $\eqth{E}$. Proving that every morphism is equal to a canonical
form can be done by induction on the size of the morphisms. Then, we show that
the functor $\represents{-}$ is full and faithful by showing that the canonical
forms are in bijection with the morphisms of~$\mathcal{M}$.

It should be noted that this is not the only technique to prove that an
equational theory presents a monoidal category. In particular, Joyal and Street
have used topological methods~\cite{joyal-street:geometry-tensor-calculus} by
giving a geometrical construction of the category generated by a signature, in
which morphisms are equivalence classes under continuous deformation of
progressive plane diagrams (we give some more details about those diagrams, also
called string diagrams, later on). Their work is for example extended by Baez
and Langford in~\cite{baez-langford:two-tangles} to give a presentation of the
2-category of 2-tangles in 4~dimensions. The other general methodology the
author is aware of, is given by Lack in~\cite{lack:composing-props}, by
constructing elaborate monoidal theories from simpler monoidal theories. Namely,
a monoidal theory can be seen as a monad in a particular span bicategory, and
monoidal theories can therefore be ``composed'' given a distributive law between
their corresponding monads. We chose not to use those me\-thods because, even
though they can be very helpful to build intuitions, they are difficult to
formalize and even more to mechanize.

\paragraph{String diagrams.}
\label{subsection:string-diagrams}
\emph{String diagrams} provide a convenient way to represent and manipulate the
morphisms in the category generated by a presentation. Given an object $M$ in a
strict monoidal category $\mathcal{C}$, a morphism $\mu:M\otimes M\to M$ can be
drawn graphically as a device with two inputs and one output of type $M$ as
follows:
\begin{mymath}
\begin{tikzpicture}[xscale=0.40,yscale=0.40]
\useasboundingbox (-0.5,-0.5) rectangle (4.5,2.5);
\draw[,] (2.00,1.00) -- (3.00,1.00);
\draw[] (1.00,2.00) -- (1.06,1.97) -- (1.13,1.94) -- (1.19,1.91) -- (1.26,1.88) -- (1.32,1.85) -- (1.38,1.81) -- (1.44,1.78) -- (1.50,1.74) -- (1.56,1.71) -- (1.61,1.67) -- (1.66,1.63) -- (1.71,1.59) -- (1.76,1.55) -- (1.80,1.51) -- (1.84,1.46) -- (1.87,1.41) -- (1.90,1.36) -- (1.93,1.31) -- (1.96,1.26) -- (1.97,1.20);
\draw[] (1.97,1.20) -- (1.98,1.19) -- (1.98,1.18) -- (1.98,1.17) -- (1.98,1.16) -- (1.99,1.15) -- (1.99,1.14) -- (1.99,1.13) -- (1.99,1.12) -- (1.99,1.11) -- (1.99,1.10) -- (1.99,1.09) -- (2.00,1.08) -- (2.00,1.07) -- (2.00,1.06) -- (2.00,1.05) -- (2.00,1.04) -- (2.00,1.03) -- (2.00,1.02) -- (2.00,1.01) -- (2.00,1.00);
\draw[] (2.00,1.00) -- (2.00,0.99) -- (2.00,0.98) -- (2.00,0.97) -- (2.00,0.96) -- (2.00,0.95) -- (2.00,0.94) -- (2.00,0.93) -- (2.00,0.92) -- (1.99,0.91) -- (1.99,0.90) -- (1.99,0.89) -- (1.99,0.88) -- (1.99,0.87) -- (1.99,0.86) -- (1.99,0.85) -- (1.98,0.84) -- (1.98,0.83) -- (1.98,0.82) -- (1.98,0.81) -- (1.97,0.80);
\draw[] (1.97,0.80) -- (1.96,0.74) -- (1.93,0.69) -- (1.90,0.64) -- (1.87,0.59) -- (1.84,0.54) -- (1.80,0.49) -- (1.76,0.45) -- (1.71,0.41) -- (1.66,0.37) -- (1.61,0.33) -- (1.56,0.29) -- (1.50,0.26) -- (1.44,0.22) -- (1.38,0.19) -- (1.32,0.15) -- (1.26,0.12) -- (1.19,0.09) -- (1.13,0.06) -- (1.06,0.03) -- (1.00,0.00);
\filldraw[fill=white] (2.00,1.00) ellipse (0.60cm and 0.50cm);
\draw (0.50,2.00) node{$M$};
\draw (2.00,1.00) node{$\mu$};
\draw (3.50,1.00) node{$M$};
\draw (0.50,0.00) node{$M$};
\end{tikzpicture}
\qquad\text{or simply as}\qquad
\begin{tikzpicture}[xscale=0.40,yscale=0.40]
\useasboundingbox (-0.5,-0.5) rectangle (4.5,2.5);
\draw[,] (2.00,1.00) -- (3.00,1.00);
\draw[] (1.00,2.00) -- (1.06,1.97) -- (1.13,1.94) -- (1.19,1.91) -- (1.26,1.88) -- (1.32,1.85) -- (1.38,1.81) -- (1.44,1.78) -- (1.50,1.74) -- (1.56,1.71) -- (1.61,1.67) -- (1.66,1.63) -- (1.71,1.59) -- (1.76,1.55) -- (1.80,1.51) -- (1.84,1.46) -- (1.87,1.41) -- (1.90,1.36) -- (1.93,1.31) -- (1.96,1.26) -- (1.97,1.20);
\draw[] (1.97,1.20) -- (1.98,1.19) -- (1.98,1.18) -- (1.98,1.17) -- (1.98,1.16) -- (1.99,1.15) -- (1.99,1.14) -- (1.99,1.13) -- (1.99,1.12) -- (1.99,1.11) -- (1.99,1.10) -- (1.99,1.09) -- (2.00,1.08) -- (2.00,1.07) -- (2.00,1.06) -- (2.00,1.05) -- (2.00,1.04) -- (2.00,1.03) -- (2.00,1.02) -- (2.00,1.01) -- (2.00,1.00);
\draw[] (2.00,1.00) -- (2.00,0.99) -- (2.00,0.98) -- (2.00,0.97) -- (2.00,0.96) -- (2.00,0.95) -- (2.00,0.94) -- (2.00,0.93) -- (2.00,0.92) -- (1.99,0.91) -- (1.99,0.90) -- (1.99,0.89) -- (1.99,0.88) -- (1.99,0.87) -- (1.99,0.86) -- (1.99,0.85) -- (1.98,0.84) -- (1.98,0.83) -- (1.98,0.82) -- (1.98,0.81) -- (1.97,0.80);
\draw[] (1.97,0.80) -- (1.96,0.74) -- (1.93,0.69) -- (1.90,0.64) -- (1.87,0.59) -- (1.84,0.54) -- (1.80,0.49) -- (1.76,0.45) -- (1.71,0.41) -- (1.66,0.37) -- (1.61,0.33) -- (1.56,0.29) -- (1.50,0.26) -- (1.44,0.22) -- (1.38,0.19) -- (1.32,0.15) -- (1.26,0.12) -- (1.19,0.09) -- (1.13,0.06) -- (1.06,0.03) -- (1.00,0.00);
\draw (0.50,2.00) node{$M$};
\draw (3.50,1.00) node{$M$};
\draw (0.50,0.00) node{$M$};
\end{tikzpicture}
\end{mymath}
when it is clear from the context which morphism of type $M\otimes M\to M$ we
are picturing (we sometimes even omit the source and target of the
morphisms). Similarly, the identity $\id_M:M\to M$ (which we sometimes simply
write $M$) can be pictured as a wire
\vspace\reduce
\[
\begin{tikzpicture}[xscale=0.40,yscale=0.40]
\useasboundingbox (-0.5,-0.5) rectangle (4.5,0.5);
\draw[] (1.00,0.00) -- (1.05,0.00) -- (1.10,0.00) -- (1.15,0.00) -- (1.20,0.00) -- (1.25,0.00) -- (1.30,0.00) -- (1.35,0.00) -- (1.40,0.00) -- (1.45,0.00) -- (1.50,0.00) -- (1.55,0.00) -- (1.60,0.00) -- (1.65,0.00) -- (1.70,0.00) -- (1.75,0.00) -- (1.80,0.00) -- (1.85,0.00) -- (1.90,0.00) -- (1.95,0.00) -- (2.00,0.00);
\draw[] (2.00,0.00) -- (2.05,0.00) -- (2.10,0.00) -- (2.15,0.00) -- (2.20,0.00) -- (2.25,0.00) -- (2.30,0.00) -- (2.35,0.00) -- (2.40,0.00) -- (2.45,0.00) -- (2.50,0.00) -- (2.55,0.00) -- (2.60,0.00) -- (2.65,0.00) -- (2.70,0.00) -- (2.75,0.00) -- (2.80,0.00) -- (2.85,0.00) -- (2.90,0.00) -- (2.95,0.00) -- (3.00,0.00);
\draw (0.50,0.00) node{$M$};
\draw (3.50,0.00) node{$M$};
\end{tikzpicture}
\vspace\reduce
\]
The tensor $f\otimes g$ of two morphisms $f:A\to B$ and \hbox{$g:C\to D$} is
obtained by putting the diagram corresponding to $f$ above the diagram
corresponding to~$g$.
So, for instance, the morphism $\mu\otimes M$
can be drawn diagrammatically as
\[
\begin{tikzpicture}[xscale=0.40,yscale=0.40]
\useasboundingbox (-0.5,-0.5) rectangle (4.5,3.5);
\draw[,] (2.00,2.00) -- (3.00,2.00);
\draw[] (1.00,3.00) -- (1.06,2.97) -- (1.13,2.94) -- (1.19,2.91) -- (1.26,2.88) -- (1.32,2.85) -- (1.38,2.81) -- (1.44,2.78) -- (1.50,2.74) -- (1.56,2.71) -- (1.61,2.67) -- (1.66,2.63) -- (1.71,2.59) -- (1.76,2.55) -- (1.80,2.51) -- (1.84,2.46) -- (1.87,2.41) -- (1.90,2.36) -- (1.93,2.31) -- (1.96,2.26) -- (1.97,2.20);
\draw[] (1.97,2.20) -- (1.98,2.19) -- (1.98,2.18) -- (1.98,2.17) -- (1.98,2.16) -- (1.99,2.15) -- (1.99,2.14) -- (1.99,2.13) -- (1.99,2.12) -- (1.99,2.11) -- (1.99,2.10) -- (1.99,2.09) -- (2.00,2.08) -- (2.00,2.07) -- (2.00,2.06) -- (2.00,2.05) -- (2.00,2.04) -- (2.00,2.03) -- (2.00,2.02) -- (2.00,2.01) -- (2.00,2.00);
\draw[] (2.00,2.00) -- (2.00,1.99) -- (2.00,1.98) -- (2.00,1.97) -- (2.00,1.96) -- (2.00,1.95) -- (2.00,1.94) -- (2.00,1.93) -- (2.00,1.92) -- (1.99,1.91) -- (1.99,1.90) -- (1.99,1.89) -- (1.99,1.88) -- (1.99,1.87) -- (1.99,1.86) -- (1.99,1.85) -- (1.98,1.84) -- (1.98,1.83) -- (1.98,1.82) -- (1.98,1.81) -- (1.97,1.80);
\draw[] (1.97,1.80) -- (1.96,1.74) -- (1.93,1.69) -- (1.90,1.64) -- (1.87,1.59) -- (1.84,1.54) -- (1.80,1.49) -- (1.76,1.45) -- (1.71,1.41) -- (1.66,1.37) -- (1.61,1.33) -- (1.56,1.29) -- (1.50,1.26) -- (1.44,1.22) -- (1.38,1.19) -- (1.32,1.15) -- (1.26,1.12) -- (1.19,1.09) -- (1.13,1.06) -- (1.06,1.03) -- (1.00,1.00);
\draw[] (1.00,0.00) -- (1.05,0.00) -- (1.10,0.00) -- (1.15,0.00) -- (1.20,0.00) -- (1.25,0.00) -- (1.30,0.00) -- (1.35,0.00) -- (1.40,0.00) -- (1.45,0.00) -- (1.50,0.00) -- (1.55,0.00) -- (1.60,0.00) -- (1.65,0.00) -- (1.70,0.00) -- (1.75,0.00) -- (1.80,0.00) -- (1.85,0.00) -- (1.90,0.00) -- (1.95,0.00) -- (2.00,0.00);
\draw[] (2.00,0.00) -- (2.05,0.00) -- (2.10,0.00) -- (2.15,0.00) -- (2.20,0.00) -- (2.25,0.00) -- (2.30,0.00) -- (2.35,0.00) -- (2.40,0.00) -- (2.45,0.00) -- (2.50,0.00) -- (2.55,0.00) -- (2.60,0.00) -- (2.65,0.00) -- (2.70,0.00) -- (2.75,0.00) -- (2.80,0.00) -- (2.85,0.00) -- (2.90,0.00) -- (2.95,0.00) -- (3.00,0.00);
\draw (0.50,3.00) node{$M$};
\draw (3.50,2.00) node{$M$};
\draw (0.50,1.00) node{$M$};
\draw (0.50,0.00) node{$M$};
\draw (3.50,0.00) node{$M$};
\end{tikzpicture}
\vspace\reduce
\]
Finally, the composite $g\circ f:A\to C$ of two morphisms $f:A\to B$ and $g:B\to
C$ can be drawn diagrammatically by putting the diagram corresponding to $g$ at
the right of the diagram corresponding to $f$ and ``linking the wires''.
The diagram corresponding to the morphism $\mu\circ(\mu\otimes M)$
is thus
\begin{mymath}
\begin{tikzpicture}[xscale=0.40,yscale=0.40]
\useasboundingbox (-0.5,-0.5) rectangle (6.5,3.5);
\draw[] (2.00,2.00) -- (2.05,2.01) -- (2.10,2.01) -- (2.15,2.02) -- (2.20,2.02) -- (2.25,2.03) -- (2.30,2.03) -- (2.35,2.04) -- (2.40,2.04) -- (2.45,2.04) -- (2.50,2.05) -- (2.55,2.05) -- (2.60,2.05) -- (2.65,2.05) -- (2.70,2.04) -- (2.75,2.04) -- (2.80,2.03) -- (2.85,2.03) -- (2.90,2.02) -- (2.95,2.01) -- (3.00,2.00);
\draw[] (3.00,2.00) -- (3.06,1.98) -- (3.13,1.96) -- (3.19,1.94) -- (3.25,1.92) -- (3.32,1.89) -- (3.38,1.86) -- (3.44,1.83) -- (3.50,1.79) -- (3.55,1.75) -- (3.61,1.71) -- (3.66,1.67) -- (3.71,1.63) -- (3.75,1.58) -- (3.80,1.53) -- (3.84,1.48) -- (3.87,1.43) -- (3.90,1.37) -- (3.93,1.32) -- (3.96,1.26) -- (3.97,1.20);
\draw[] (3.97,1.20) -- (3.98,1.19) -- (3.98,1.18) -- (3.98,1.17) -- (3.98,1.16) -- (3.99,1.15) -- (3.99,1.14) -- (3.99,1.13) -- (3.99,1.12) -- (3.99,1.11) -- (3.99,1.10) -- (3.99,1.09) -- (4.00,1.08) -- (4.00,1.07) -- (4.00,1.06) -- (4.00,1.05) -- (4.00,1.04) -- (4.00,1.03) -- (4.00,1.02) -- (4.00,1.01) -- (4.00,1.00);
\draw[] (4.00,1.00) -- (4.00,0.99) -- (4.00,0.98) -- (4.00,0.97) -- (4.00,0.96) -- (4.00,0.95) -- (4.00,0.94) -- (4.00,0.93) -- (4.00,0.92) -- (3.99,0.91) -- (3.99,0.90) -- (3.99,0.89) -- (3.99,0.88) -- (3.99,0.87) -- (3.99,0.86) -- (3.99,0.85) -- (3.98,0.84) -- (3.98,0.83) -- (3.98,0.82) -- (3.98,0.81) -- (3.97,0.80);
\draw[] (3.97,0.80) -- (3.96,0.74) -- (3.93,0.68) -- (3.90,0.63) -- (3.87,0.57) -- (3.84,0.52) -- (3.80,0.47) -- (3.75,0.42) -- (3.71,0.37) -- (3.66,0.33) -- (3.61,0.28) -- (3.55,0.24) -- (3.50,0.21) -- (3.44,0.17) -- (3.38,0.14) -- (3.32,0.11) -- (3.25,0.08) -- (3.19,0.06) -- (3.13,0.03) -- (3.06,0.02) -- (3.00,0.00);
\draw[] (3.00,0.00) -- (2.95,-0.01) -- (2.90,-0.02) -- (2.85,-0.03) -- (2.80,-0.03) -- (2.75,-0.03) -- (2.70,-0.04) -- (2.65,-0.04) -- (2.60,-0.04) -- (2.55,-0.04) -- (2.50,-0.04) -- (2.45,-0.03) -- (2.40,-0.03) -- (2.35,-0.03) -- (2.30,-0.02) -- (2.25,-0.02) -- (2.20,-0.02) -- (2.15,-0.01) -- (2.10,-0.01) -- (2.05,-0.00) -- (2.00,0.00);
\draw[] (2.00,0.00) -- (1.95,0.00) -- (1.90,0.01) -- (1.85,0.01) -- (1.80,0.01) -- (1.75,0.01) -- (1.70,0.01) -- (1.65,0.01) -- (1.60,0.01) -- (1.55,0.01) -- (1.50,0.01) -- (1.45,0.01) -- (1.40,0.01) -- (1.35,0.01) -- (1.30,0.01) -- (1.25,0.01) -- (1.20,0.01) -- (1.15,0.00) -- (1.10,0.00) -- (1.05,0.00) -- (1.00,0.00);
\draw[] (1.00,3.00) -- (1.06,2.97) -- (1.13,2.94) -- (1.19,2.91) -- (1.26,2.88) -- (1.32,2.85) -- (1.38,2.81) -- (1.44,2.78) -- (1.50,2.74) -- (1.56,2.71) -- (1.61,2.67) -- (1.66,2.63) -- (1.71,2.59) -- (1.76,2.55) -- (1.80,2.51) -- (1.84,2.46) -- (1.87,2.41) -- (1.90,2.36) -- (1.93,2.31) -- (1.96,2.26) -- (1.97,2.20);
\draw[] (1.97,2.20) -- (1.98,2.19) -- (1.98,2.18) -- (1.98,2.17) -- (1.98,2.16) -- (1.99,2.15) -- (1.99,2.14) -- (1.99,2.13) -- (1.99,2.12) -- (1.99,2.11) -- (1.99,2.10) -- (1.99,2.09) -- (2.00,2.08) -- (2.00,2.07) -- (2.00,2.06) -- (2.00,2.05) -- (2.00,2.04) -- (2.00,2.03) -- (2.00,2.02) -- (2.00,2.01) -- (2.00,2.00);
\draw[] (2.00,2.00) -- (2.00,1.99) -- (2.00,1.98) -- (2.00,1.97) -- (2.00,1.96) -- (2.00,1.95) -- (2.00,1.94) -- (2.00,1.93) -- (2.00,1.92) -- (1.99,1.91) -- (1.99,1.90) -- (1.99,1.89) -- (1.99,1.88) -- (1.99,1.87) -- (1.99,1.86) -- (1.99,1.85) -- (1.98,1.84) -- (1.98,1.83) -- (1.98,1.82) -- (1.98,1.81) -- (1.97,1.80);
\draw[] (1.97,1.80) -- (1.96,1.74) -- (1.93,1.69) -- (1.90,1.64) -- (1.87,1.59) -- (1.84,1.54) -- (1.80,1.49) -- (1.76,1.45) -- (1.71,1.41) -- (1.66,1.37) -- (1.61,1.33) -- (1.56,1.29) -- (1.50,1.26) -- (1.44,1.22) -- (1.38,1.19) -- (1.32,1.15) -- (1.26,1.12) -- (1.19,1.09) -- (1.13,1.06) -- (1.06,1.03) -- (1.00,1.00);
\draw[,] (4.00,1.00) -- (5.00,1.00);
\draw (0.50,3.00) node{$M$};
\draw (0.50,1.00) node{$M$};
\draw (5.50,1.00) node{$M$};
\draw (0.50,0.00) node{$M$};
\end{tikzpicture}
\end{mymath}

Suppose that $(E_1,s_1,t_1,E_2)$ is a signature. Every element $f$ of $E_2$ such
that
\begin{mymath}
s_1(f)=A_1\otimes\cdots\otimes A_m
\qtand
t_1(f)=B_1\otimes\cdots\otimes B_n
\end{mymath}
where the $A_i$ and $B_i$ are elements of $E_1$, can be similarly represented by
a diagram
\begin{mymath}
\begin{tikzpicture}[xscale=0.40,yscale=0.40]
\useasboundingbox (-0.5,-0.5) rectangle (6.5,3.5);
\draw[] (1.00,3.00) -- (1.05,3.01) -- (1.11,3.01) -- (1.16,3.02) -- (1.22,3.03) -- (1.27,3.03) -- (1.32,3.04) -- (1.38,3.04) -- (1.43,3.05) -- (1.48,3.05) -- (1.53,3.05) -- (1.58,3.05) -- (1.63,3.05) -- (1.68,3.05) -- (1.73,3.05) -- (1.78,3.05) -- (1.82,3.04) -- (1.87,3.03) -- (1.91,3.02) -- (1.96,3.01) -- (2.00,3.00);
\draw[] (2.00,3.00) -- (2.09,2.96) -- (2.17,2.92) -- (2.25,2.86) -- (2.32,2.80) -- (2.39,2.73) -- (2.45,2.65) -- (2.51,2.56) -- (2.56,2.47) -- (2.61,2.37) -- (2.66,2.26) -- (2.70,2.15) -- (2.74,2.03) -- (2.78,1.91) -- (2.82,1.79) -- (2.85,1.66) -- (2.88,1.53) -- (2.91,1.40) -- (2.94,1.27) -- (2.97,1.13) -- (3.00,1.00);
\draw[] (5.00,3.00) -- (4.95,3.01) -- (4.89,3.01) -- (4.84,3.02) -- (4.78,3.03) -- (4.73,3.03) -- (4.68,3.04) -- (4.62,3.04) -- (4.57,3.05) -- (4.52,3.05) -- (4.47,3.05) -- (4.42,3.05) -- (4.37,3.05) -- (4.32,3.05) -- (4.27,3.05) -- (4.22,3.05) -- (4.18,3.04) -- (4.13,3.03) -- (4.09,3.02) -- (4.04,3.01) -- (4.00,3.00);
\draw[] (4.00,3.00) -- (3.91,2.96) -- (3.83,2.92) -- (3.75,2.86) -- (3.68,2.80) -- (3.61,2.73) -- (3.55,2.65) -- (3.49,2.56) -- (3.44,2.47) -- (3.39,2.37) -- (3.34,2.26) -- (3.30,2.15) -- (3.26,2.03) -- (3.22,1.91) -- (3.18,1.79) -- (3.15,1.66) -- (3.12,1.53) -- (3.09,1.40) -- (3.06,1.27) -- (3.03,1.13) -- (3.00,1.00);
\draw[] (1.00,2.00) -- (1.05,2.01) -- (1.11,2.01) -- (1.16,2.02) -- (1.21,2.03) -- (1.26,2.03) -- (1.32,2.04) -- (1.37,2.04) -- (1.42,2.05) -- (1.47,2.05) -- (1.52,2.05) -- (1.57,2.06) -- (1.62,2.06) -- (1.67,2.05) -- (1.72,2.05) -- (1.77,2.05) -- (1.82,2.04) -- (1.86,2.03) -- (1.91,2.03) -- (1.96,2.01) -- (2.00,2.00);
\draw[] (2.00,2.00) -- (2.06,1.98) -- (2.12,1.95) -- (2.18,1.92) -- (2.23,1.88) -- (2.29,1.85) -- (2.34,1.80) -- (2.40,1.76) -- (2.45,1.71) -- (2.50,1.66) -- (2.55,1.61) -- (2.59,1.56) -- (2.64,1.50) -- (2.69,1.44) -- (2.73,1.38) -- (2.78,1.32) -- (2.82,1.26) -- (2.87,1.19) -- (2.91,1.13) -- (2.96,1.06) -- (3.00,1.00);
\draw[] (5.00,2.00) -- (4.95,2.01) -- (4.89,2.01) -- (4.84,2.02) -- (4.79,2.03) -- (4.74,2.03) -- (4.68,2.04) -- (4.63,2.04) -- (4.58,2.05) -- (4.53,2.05) -- (4.48,2.05) -- (4.43,2.06) -- (4.38,2.06) -- (4.33,2.05) -- (4.28,2.05) -- (4.23,2.05) -- (4.18,2.04) -- (4.14,2.03) -- (4.09,2.03) -- (4.04,2.01) -- (4.00,2.00);
\draw[] (4.00,2.00) -- (3.94,1.98) -- (3.88,1.95) -- (3.82,1.92) -- (3.77,1.88) -- (3.71,1.85) -- (3.66,1.80) -- (3.60,1.76) -- (3.55,1.71) -- (3.50,1.66) -- (3.45,1.61) -- (3.41,1.56) -- (3.36,1.50) -- (3.31,1.44) -- (3.27,1.38) -- (3.22,1.32) -- (3.18,1.26) -- (3.13,1.19) -- (3.09,1.13) -- (3.04,1.06) -- (3.00,1.00);
\draw[] (3.00,1.00) -- (3.04,0.94) -- (3.09,0.87) -- (3.13,0.81) -- (3.18,0.74) -- (3.22,0.68) -- (3.27,0.62) -- (3.31,0.56) -- (3.36,0.50) -- (3.41,0.44) -- (3.45,0.39) -- (3.50,0.34) -- (3.55,0.29) -- (3.60,0.24) -- (3.66,0.20) -- (3.71,0.15) -- (3.77,0.12) -- (3.82,0.08) -- (3.88,0.05) -- (3.94,0.02) -- (4.00,0.00);
\draw[] (4.00,0.00) -- (4.04,-0.01) -- (4.09,-0.03) -- (4.14,-0.03) -- (4.18,-0.04) -- (4.23,-0.05) -- (4.28,-0.05) -- (4.33,-0.05) -- (4.38,-0.06) -- (4.43,-0.06) -- (4.48,-0.05) -- (4.53,-0.05) -- (4.58,-0.05) -- (4.63,-0.04) -- (4.68,-0.04) -- (4.74,-0.03) -- (4.79,-0.03) -- (4.84,-0.02) -- (4.89,-0.01) -- (4.95,-0.01) -- (5.00,0.00);
\draw[] (1.00,0.00) -- (1.05,-0.01) -- (1.11,-0.01) -- (1.16,-0.02) -- (1.21,-0.03) -- (1.26,-0.03) -- (1.32,-0.04) -- (1.37,-0.04) -- (1.42,-0.05) -- (1.47,-0.05) -- (1.52,-0.05) -- (1.57,-0.06) -- (1.62,-0.06) -- (1.67,-0.05) -- (1.72,-0.05) -- (1.77,-0.05) -- (1.82,-0.04) -- (1.86,-0.03) -- (1.91,-0.03) -- (1.96,-0.01) -- (2.00,0.00);
\draw[] (2.00,0.00) -- (2.06,0.02) -- (2.12,0.05) -- (2.18,0.08) -- (2.23,0.12) -- (2.29,0.15) -- (2.34,0.20) -- (2.40,0.24) -- (2.45,0.29) -- (2.50,0.34) -- (2.55,0.39) -- (2.59,0.44) -- (2.64,0.50) -- (2.69,0.56) -- (2.73,0.62) -- (2.78,0.68) -- (2.82,0.74) -- (2.87,0.81) -- (2.91,0.87) -- (2.96,0.94) -- (3.00,1.00);
\filldraw[fill=white] (3.00,1.00) ellipse (0.60cm and 0.50cm);
\draw (0.50,3.00) node{$A_1$};
\draw (5.50,3.00) node{$B_1$};
\draw (0.50,2.00) node{$A_2$};
\draw (5.50,2.00) node{$B_2$};
\draw (0.50,1.00) node{$\vdots$};
\draw (3.00,1.00) node{$f$};
\draw (5.50,1.00) node{$\vdots$};
\draw (0.50,0.00) node{$A_m$};
\draw (5.50,0.00) node{$B_n$};
\end{tikzpicture}
\end{mymath}
Bigger diagrams can be constructed from these diagrams by composing and
tensoring them, as explained above. Joyal and Street have shown in details
in~\cite{joyal-street:geometry-tensor-calculus} that the category of those
diagrams, modulo continuous deformations, is precisely the free category
generated by a signature (which they call a ``tensor scheme''). For example, the
equality
\begin{mymath}
(M\otimes\mu)\circ(\mu\otimes M\otimes M)
\qeq
(\mu\otimes M)\circ(M\otimes M\otimes\mu)
\end{mymath}
in the category $\mathcal{C}$ of the above example, which holds because of the
axioms satisfied in any monoidal category, can be shown by continuously
deforming the diagram on the left-hand side below into the diagram on the
right-hand side:
\begin{mymath}
\sstrid{mu_x_mu_r}
\qeq
\sstrid{mu_x_mu_l}
\end{mymath}
All the equalities satisfied in any monoidal category generated by a signature
have a similar geometrical interpretation. And conversely, any deformation of
diagrams corresponds to an equality of morphisms in monoidal categories.

\section{Algebraic structures}
\label{sec:alg-struct}
In this section, we recall the categorical formulation of some well-known
algebraic structures.
We give those definitions in the setting of a \emph{strict} monoidal category
which is \emph{not} required to be symmetric.  We suppose that
$(\mathcal{C},\otimes,I)$ is a strict monoidal category, fixed throughout the
section. For the lack of space, we will only give graphical representations of
axioms, but they can always be reformulated as commutative diagrams.

\paragraph{Symmetric objects.}
A \emph{symmetric object} of $\mathcal{C}$ is an object~$S$ together with a
morphism
\begin{mymath}
\gamma:S\otimes S\to S\otimes S
\end{mymath}
called \emph{symmetry} and pictured as
\begin{myequation}
  \label{eq:sym-string}
  \begin{tikzpicture}[xscale=0.40,yscale=0.40]
\useasboundingbox (-0.5,-0.5) rectangle (6.5,2.5);
\draw[] (1.00,2.00) -- (1.05,2.01) -- (1.11,2.01) -- (1.16,2.02) -- (1.21,2.03) -- (1.26,2.03) -- (1.32,2.04) -- (1.37,2.04) -- (1.42,2.05) -- (1.47,2.05) -- (1.52,2.05) -- (1.57,2.06) -- (1.62,2.06) -- (1.67,2.05) -- (1.72,2.05) -- (1.77,2.05) -- (1.82,2.04) -- (1.86,2.03) -- (1.91,2.03) -- (1.96,2.01) -- (2.00,2.00);
\draw[] (2.00,2.00) -- (2.06,1.98) -- (2.12,1.95) -- (2.18,1.92) -- (2.23,1.88) -- (2.29,1.85) -- (2.34,1.80) -- (2.40,1.76) -- (2.45,1.71) -- (2.50,1.66) -- (2.55,1.61) -- (2.59,1.56) -- (2.64,1.50) -- (2.69,1.44) -- (2.73,1.38) -- (2.78,1.32) -- (2.82,1.26) -- (2.87,1.19) -- (2.91,1.13) -- (2.96,1.06) -- (3.00,1.00);
\draw[] (3.00,1.00) -- (3.04,0.94) -- (3.09,0.87) -- (3.13,0.81) -- (3.18,0.74) -- (3.22,0.68) -- (3.27,0.62) -- (3.31,0.56) -- (3.36,0.50) -- (3.41,0.44) -- (3.45,0.39) -- (3.50,0.34) -- (3.55,0.29) -- (3.60,0.24) -- (3.66,0.20) -- (3.71,0.15) -- (3.77,0.12) -- (3.82,0.08) -- (3.88,0.05) -- (3.94,0.02) -- (4.00,0.00);
\draw[] (4.00,0.00) -- (4.04,-0.01) -- (4.09,-0.03) -- (4.14,-0.03) -- (4.18,-0.04) -- (4.23,-0.05) -- (4.28,-0.05) -- (4.33,-0.05) -- (4.38,-0.06) -- (4.43,-0.06) -- (4.48,-0.05) -- (4.53,-0.05) -- (4.58,-0.05) -- (4.63,-0.04) -- (4.68,-0.04) -- (4.74,-0.03) -- (4.79,-0.03) -- (4.84,-0.02) -- (4.89,-0.01) -- (4.95,-0.01) -- (5.00,0.00);
\draw[] (5.00,2.00) -- (4.95,2.01) -- (4.89,2.01) -- (4.84,2.02) -- (4.79,2.03) -- (4.74,2.03) -- (4.68,2.04) -- (4.63,2.04) -- (4.58,2.05) -- (4.53,2.05) -- (4.48,2.05) -- (4.43,2.06) -- (4.38,2.06) -- (4.33,2.05) -- (4.28,2.05) -- (4.23,2.05) -- (4.18,2.04) -- (4.14,2.03) -- (4.09,2.03) -- (4.04,2.01) -- (4.00,2.00);
\draw[] (4.00,2.00) -- (3.94,1.98) -- (3.88,1.95) -- (3.82,1.92) -- (3.77,1.88) -- (3.71,1.85) -- (3.66,1.80) -- (3.60,1.76) -- (3.55,1.71) -- (3.50,1.66) -- (3.45,1.61) -- (3.41,1.56) -- (3.36,1.50) -- (3.31,1.44) -- (3.27,1.38) -- (3.22,1.32) -- (3.18,1.26) -- (3.13,1.19) -- (3.09,1.13) -- (3.04,1.06) -- (3.00,1.00);
\draw[] (3.00,1.00) -- (2.96,0.94) -- (2.91,0.87) -- (2.87,0.81) -- (2.82,0.74) -- (2.78,0.68) -- (2.73,0.62) -- (2.69,0.56) -- (2.64,0.50) -- (2.59,0.44) -- (2.55,0.39) -- (2.50,0.34) -- (2.45,0.29) -- (2.40,0.24) -- (2.34,0.20) -- (2.29,0.15) -- (2.23,0.12) -- (2.18,0.08) -- (2.12,0.05) -- (2.06,0.02) -- (2.00,0.00);
\draw[] (2.00,0.00) -- (1.96,-0.01) -- (1.91,-0.03) -- (1.86,-0.03) -- (1.82,-0.04) -- (1.77,-0.05) -- (1.72,-0.05) -- (1.67,-0.05) -- (1.62,-0.06) -- (1.57,-0.06) -- (1.52,-0.05) -- (1.47,-0.05) -- (1.42,-0.05) -- (1.37,-0.04) -- (1.32,-0.04) -- (1.26,-0.03) -- (1.21,-0.03) -- (1.16,-0.02) -- (1.11,-0.01) -- (1.05,-0.01) -- (1.00,0.00);
\draw (0.50,2.00) node{$S$};
\draw (5.50,2.00) node{$S$};
\draw (0.50,0.00) node{$S$};
\draw (5.50,0.00) node{$S$};
\end{tikzpicture}
\end{myequation}
such that the two equalities
\begin{mymath}
\hspace{-3ex}
\begin{array}{rcl}
  \sstrid{yang_baxter_r}
  &\qeq&
  \sstrid{yang_baxter_l}
  \\
  \sstrid{sym_sym}
  &\qeq&
  \sstrid{id_x_id}
\end{array}
\end{mymath}
hold (the first equation is sometimes called the Yang-Baxter equation for
braids). In particular, in a symmetric monoidal category, every object is
canonically equipped with a structure of symmetric object.


\paragraph{Monoids.}
\label{subsection:monoids}
A \emph{monoid} $(M,\mu,\eta)$ in $\mathcal{C}$ is an object $M$ together with
two morphisms
\begin{mymath}
\mu : M\otimes M\to M
\qtand
\eta : I\to M
\end{mymath}
called respectively \emph{multiplication} and \emph{unit} and pictured respectively as
\begin{myequation}
  \label{eq:monoid-string}
  \begin{tikzpicture}[xscale=0.40,yscale=0.40]
\useasboundingbox (-0.5,-0.5) rectangle (4.5,2.5);
\draw[,] (2.00,1.00) -- (3.00,1.00);
\draw[] (1.00,2.00) -- (1.06,1.97) -- (1.13,1.94) -- (1.19,1.91) -- (1.26,1.88) -- (1.32,1.85) -- (1.38,1.81) -- (1.44,1.78) -- (1.50,1.74) -- (1.56,1.71) -- (1.61,1.67) -- (1.66,1.63) -- (1.71,1.59) -- (1.76,1.55) -- (1.80,1.51) -- (1.84,1.46) -- (1.87,1.41) -- (1.90,1.36) -- (1.93,1.31) -- (1.96,1.26) -- (1.97,1.20);
\draw[] (1.97,1.20) -- (1.98,1.19) -- (1.98,1.18) -- (1.98,1.17) -- (1.98,1.16) -- (1.99,1.15) -- (1.99,1.14) -- (1.99,1.13) -- (1.99,1.12) -- (1.99,1.11) -- (1.99,1.10) -- (1.99,1.09) -- (2.00,1.08) -- (2.00,1.07) -- (2.00,1.06) -- (2.00,1.05) -- (2.00,1.04) -- (2.00,1.03) -- (2.00,1.02) -- (2.00,1.01) -- (2.00,1.00);
\draw[] (2.00,1.00) -- (2.00,0.99) -- (2.00,0.98) -- (2.00,0.97) -- (2.00,0.96) -- (2.00,0.95) -- (2.00,0.94) -- (2.00,0.93) -- (2.00,0.92) -- (1.99,0.91) -- (1.99,0.90) -- (1.99,0.89) -- (1.99,0.88) -- (1.99,0.87) -- (1.99,0.86) -- (1.99,0.85) -- (1.98,0.84) -- (1.98,0.83) -- (1.98,0.82) -- (1.98,0.81) -- (1.97,0.80);
\draw[] (1.97,0.80) -- (1.96,0.74) -- (1.93,0.69) -- (1.90,0.64) -- (1.87,0.59) -- (1.84,0.54) -- (1.80,0.49) -- (1.76,0.45) -- (1.71,0.41) -- (1.66,0.37) -- (1.61,0.33) -- (1.56,0.29) -- (1.50,0.26) -- (1.44,0.22) -- (1.38,0.19) -- (1.32,0.15) -- (1.26,0.12) -- (1.19,0.09) -- (1.13,0.06) -- (1.06,0.03) -- (1.00,0.00);
\draw (0.50,2.00) node{$M$};
\draw (3.50,1.00) node{$M$};
\draw (0.50,0.00) node{$M$};
\end{tikzpicture}
  \qtand
  \begin{tikzpicture}[xscale=0.40,yscale=0.40]
\useasboundingbox (-0.5,-0.5) rectangle (2.5,2.5);
\draw[,] (0.00,1.00) -- (1.00,1.00);
\filldraw[fill=white] (0.00,1.00) ellipse (0.14cm and 0.14cm);
\draw (1.50,1.00) node{$M$};
\end{tikzpicture}

\end{myequation}
satisfying the three equations
\begin{myequation}
  \label{eq:monoid}
  \hspace{-0.1cm}
  \begin{array}{c}
    \begin{tikzpicture}[xscale=0.40,yscale=0.40]
\useasboundingbox (-0.5,-0.5) rectangle (4.5,3.5);
\draw[] (1.00,2.00) -- (1.05,2.01) -- (1.10,2.01) -- (1.15,2.02) -- (1.20,2.02) -- (1.25,2.03) -- (1.30,2.03) -- (1.35,2.04) -- (1.40,2.04) -- (1.45,2.04) -- (1.50,2.05) -- (1.55,2.05) -- (1.60,2.05) -- (1.65,2.05) -- (1.70,2.04) -- (1.75,2.04) -- (1.80,2.03) -- (1.85,2.03) -- (1.90,2.02) -- (1.95,2.01) -- (2.00,2.00);
\draw[] (2.00,2.00) -- (2.06,1.98) -- (2.13,1.96) -- (2.19,1.94) -- (2.25,1.92) -- (2.32,1.89) -- (2.38,1.86) -- (2.44,1.83) -- (2.50,1.79) -- (2.55,1.75) -- (2.61,1.71) -- (2.66,1.67) -- (2.71,1.63) -- (2.75,1.58) -- (2.80,1.53) -- (2.84,1.48) -- (2.87,1.43) -- (2.90,1.37) -- (2.93,1.32) -- (2.96,1.26) -- (2.97,1.20);
\draw[] (2.97,1.20) -- (2.98,1.19) -- (2.98,1.18) -- (2.98,1.17) -- (2.98,1.16) -- (2.99,1.15) -- (2.99,1.14) -- (2.99,1.13) -- (2.99,1.12) -- (2.99,1.11) -- (2.99,1.10) -- (2.99,1.09) -- (3.00,1.08) -- (3.00,1.07) -- (3.00,1.06) -- (3.00,1.05) -- (3.00,1.04) -- (3.00,1.03) -- (3.00,1.02) -- (3.00,1.01) -- (3.00,1.00);
\draw[] (3.00,1.00) -- (3.00,0.99) -- (3.00,0.98) -- (3.00,0.97) -- (3.00,0.96) -- (3.00,0.95) -- (3.00,0.94) -- (3.00,0.93) -- (3.00,0.92) -- (2.99,0.91) -- (2.99,0.90) -- (2.99,0.89) -- (2.99,0.88) -- (2.99,0.87) -- (2.99,0.86) -- (2.99,0.85) -- (2.98,0.84) -- (2.98,0.83) -- (2.98,0.82) -- (2.98,0.81) -- (2.97,0.80);
\draw[] (2.97,0.80) -- (2.96,0.74) -- (2.93,0.68) -- (2.90,0.63) -- (2.87,0.57) -- (2.84,0.52) -- (2.80,0.47) -- (2.75,0.42) -- (2.71,0.37) -- (2.66,0.33) -- (2.61,0.28) -- (2.55,0.24) -- (2.50,0.21) -- (2.44,0.17) -- (2.38,0.14) -- (2.32,0.11) -- (2.25,0.08) -- (2.19,0.06) -- (2.13,0.03) -- (2.06,0.02) -- (2.00,0.00);
\draw[] (2.00,0.00) -- (1.95,-0.01) -- (1.90,-0.02) -- (1.85,-0.03) -- (1.80,-0.03) -- (1.75,-0.03) -- (1.70,-0.04) -- (1.65,-0.04) -- (1.60,-0.04) -- (1.55,-0.04) -- (1.50,-0.04) -- (1.45,-0.03) -- (1.40,-0.03) -- (1.35,-0.03) -- (1.30,-0.02) -- (1.25,-0.02) -- (1.20,-0.02) -- (1.15,-0.01) -- (1.10,-0.01) -- (1.05,-0.00) -- (1.00,0.00);
\draw[] (1.00,0.00) -- (0.95,0.00) -- (0.90,0.01) -- (0.85,0.01) -- (0.80,0.01) -- (0.75,0.01) -- (0.70,0.01) -- (0.65,0.01) -- (0.60,0.01) -- (0.55,0.01) -- (0.50,0.01) -- (0.45,0.01) -- (0.40,0.01) -- (0.35,0.01) -- (0.30,0.01) -- (0.25,0.01) -- (0.20,0.01) -- (0.15,0.00) -- (0.10,0.00) -- (0.05,0.00) -- (0.00,0.00);
\draw[] (0.00,3.00) -- (0.06,2.97) -- (0.13,2.94) -- (0.19,2.91) -- (0.26,2.88) -- (0.32,2.85) -- (0.38,2.81) -- (0.44,2.78) -- (0.50,2.74) -- (0.56,2.71) -- (0.61,2.67) -- (0.66,2.63) -- (0.71,2.59) -- (0.76,2.55) -- (0.80,2.51) -- (0.84,2.46) -- (0.87,2.41) -- (0.90,2.36) -- (0.93,2.31) -- (0.96,2.26) -- (0.97,2.20);
\draw[] (0.97,2.20) -- (0.98,2.19) -- (0.98,2.18) -- (0.98,2.17) -- (0.98,2.16) -- (0.99,2.15) -- (0.99,2.14) -- (0.99,2.13) -- (0.99,2.12) -- (0.99,2.11) -- (0.99,2.10) -- (0.99,2.09) -- (1.00,2.08) -- (1.00,2.07) -- (1.00,2.06) -- (1.00,2.05) -- (1.00,2.04) -- (1.00,2.03) -- (1.00,2.02) -- (1.00,2.01) -- (1.00,2.00);
\draw[] (1.00,2.00) -- (1.00,1.99) -- (1.00,1.98) -- (1.00,1.97) -- (1.00,1.96) -- (1.00,1.95) -- (1.00,1.94) -- (1.00,1.93) -- (1.00,1.92) -- (0.99,1.91) -- (0.99,1.90) -- (0.99,1.89) -- (0.99,1.88) -- (0.99,1.87) -- (0.99,1.86) -- (0.99,1.85) -- (0.98,1.84) -- (0.98,1.83) -- (0.98,1.82) -- (0.98,1.81) -- (0.97,1.80);
\draw[] (0.97,1.80) -- (0.96,1.74) -- (0.93,1.69) -- (0.90,1.64) -- (0.87,1.59) -- (0.84,1.54) -- (0.80,1.49) -- (0.76,1.45) -- (0.71,1.41) -- (0.66,1.37) -- (0.61,1.33) -- (0.56,1.29) -- (0.50,1.26) -- (0.44,1.22) -- (0.38,1.19) -- (0.32,1.15) -- (0.26,1.12) -- (0.19,1.09) -- (0.13,1.06) -- (0.06,1.03) -- (0.00,1.00);
\draw[,] (3.00,1.00) -- (4.00,1.00);
\end{tikzpicture}
    \qeq
    \begin{tikzpicture}[xscale=0.40,yscale=0.40]
\useasboundingbox (-0.5,-0.5) rectangle (4.5,3.5);
\draw[] (0.00,3.00) -- (0.05,3.00) -- (0.10,3.00) -- (0.15,3.00) -- (0.20,2.99) -- (0.25,2.99) -- (0.30,2.99) -- (0.35,2.99) -- (0.40,2.99) -- (0.45,2.99) -- (0.50,2.99) -- (0.55,2.99) -- (0.60,2.99) -- (0.65,2.99) -- (0.70,2.99) -- (0.75,2.99) -- (0.80,2.99) -- (0.85,2.99) -- (0.90,2.99) -- (0.95,3.00) -- (1.00,3.00);
\draw[] (1.00,3.00) -- (1.05,3.00) -- (1.10,3.01) -- (1.15,3.01) -- (1.20,3.02) -- (1.25,3.02) -- (1.30,3.02) -- (1.35,3.03) -- (1.40,3.03) -- (1.45,3.03) -- (1.50,3.04) -- (1.55,3.04) -- (1.60,3.04) -- (1.65,3.04) -- (1.70,3.04) -- (1.75,3.03) -- (1.80,3.03) -- (1.85,3.03) -- (1.90,3.02) -- (1.95,3.01) -- (2.00,3.00);
\draw[] (2.00,3.00) -- (2.06,2.98) -- (2.13,2.97) -- (2.19,2.94) -- (2.25,2.92) -- (2.32,2.89) -- (2.38,2.86) -- (2.44,2.83) -- (2.50,2.79) -- (2.55,2.76) -- (2.61,2.72) -- (2.66,2.67) -- (2.71,2.63) -- (2.75,2.58) -- (2.80,2.53) -- (2.84,2.48) -- (2.87,2.43) -- (2.90,2.37) -- (2.93,2.32) -- (2.96,2.26) -- (2.97,2.20);
\draw[] (2.97,2.20) -- (2.98,2.19) -- (2.98,2.18) -- (2.98,2.17) -- (2.98,2.16) -- (2.99,2.15) -- (2.99,2.14) -- (2.99,2.13) -- (2.99,2.12) -- (2.99,2.11) -- (2.99,2.10) -- (2.99,2.09) -- (3.00,2.08) -- (3.00,2.07) -- (3.00,2.06) -- (3.00,2.05) -- (3.00,2.04) -- (3.00,2.03) -- (3.00,2.02) -- (3.00,2.01) -- (3.00,2.00);
\draw[] (3.00,2.00) -- (3.00,1.99) -- (3.00,1.98) -- (3.00,1.97) -- (3.00,1.96) -- (3.00,1.95) -- (3.00,1.94) -- (3.00,1.93) -- (3.00,1.92) -- (2.99,1.91) -- (2.99,1.90) -- (2.99,1.89) -- (2.99,1.88) -- (2.99,1.87) -- (2.99,1.86) -- (2.99,1.85) -- (2.98,1.84) -- (2.98,1.83) -- (2.98,1.82) -- (2.98,1.81) -- (2.97,1.80);
\draw[] (2.97,1.80) -- (2.96,1.74) -- (2.93,1.68) -- (2.90,1.63) -- (2.87,1.57) -- (2.84,1.52) -- (2.80,1.47) -- (2.75,1.42) -- (2.71,1.37) -- (2.66,1.33) -- (2.61,1.29) -- (2.55,1.25) -- (2.50,1.21) -- (2.44,1.17) -- (2.38,1.14) -- (2.32,1.11) -- (2.25,1.08) -- (2.19,1.06) -- (2.13,1.04) -- (2.06,1.02) -- (2.00,1.00);
\draw[] (2.00,1.00) -- (1.95,0.99) -- (1.90,0.98) -- (1.85,0.97) -- (1.80,0.97) -- (1.75,0.96) -- (1.70,0.96) -- (1.65,0.95) -- (1.60,0.95) -- (1.55,0.95) -- (1.50,0.95) -- (1.45,0.96) -- (1.40,0.96) -- (1.35,0.96) -- (1.30,0.97) -- (1.25,0.97) -- (1.20,0.98) -- (1.15,0.98) -- (1.10,0.99) -- (1.05,0.99) -- (1.00,1.00);
\draw[,] (3.00,2.00) -- (4.00,2.00);
\draw[] (0.00,2.00) -- (0.06,1.97) -- (0.13,1.94) -- (0.19,1.91) -- (0.26,1.88) -- (0.32,1.85) -- (0.38,1.81) -- (0.44,1.78) -- (0.50,1.74) -- (0.56,1.71) -- (0.61,1.67) -- (0.66,1.63) -- (0.71,1.59) -- (0.76,1.55) -- (0.80,1.51) -- (0.84,1.46) -- (0.87,1.41) -- (0.90,1.36) -- (0.93,1.31) -- (0.96,1.26) -- (0.97,1.20);
\draw[] (0.97,1.20) -- (0.98,1.19) -- (0.98,1.18) -- (0.98,1.17) -- (0.98,1.16) -- (0.99,1.15) -- (0.99,1.14) -- (0.99,1.13) -- (0.99,1.12) -- (0.99,1.11) -- (0.99,1.10) -- (0.99,1.09) -- (1.00,1.08) -- (1.00,1.07) -- (1.00,1.06) -- (1.00,1.05) -- (1.00,1.04) -- (1.00,1.03) -- (1.00,1.02) -- (1.00,1.01) -- (1.00,1.00);
\draw[] (1.00,1.00) -- (1.00,0.99) -- (1.00,0.98) -- (1.00,0.97) -- (1.00,0.96) -- (1.00,0.95) -- (1.00,0.94) -- (1.00,0.93) -- (1.00,0.92) -- (0.99,0.91) -- (0.99,0.90) -- (0.99,0.89) -- (0.99,0.88) -- (0.99,0.87) -- (0.99,0.86) -- (0.99,0.85) -- (0.98,0.84) -- (0.98,0.83) -- (0.98,0.82) -- (0.98,0.81) -- (0.97,0.80);
\draw[] (0.97,0.80) -- (0.96,0.74) -- (0.93,0.69) -- (0.90,0.64) -- (0.87,0.59) -- (0.84,0.54) -- (0.80,0.49) -- (0.76,0.45) -- (0.71,0.41) -- (0.66,0.37) -- (0.61,0.33) -- (0.56,0.29) -- (0.50,0.26) -- (0.44,0.22) -- (0.38,0.19) -- (0.32,0.15) -- (0.26,0.12) -- (0.19,0.09) -- (0.13,0.06) -- (0.06,0.03) -- (0.00,0.00);
\end{tikzpicture}
    \\
    \sstrid{mult_unit_l}
    \qeq
    \sstrid{mult_unit_c}
    \qeq
    \sstrid{mult_unit_r}
  \end{array}
\end{myequation}

A \emph{symmetric monoid} is a monoid which admits a symmetry $\gamma:M\otimes
M\to M\otimes M$ which is compatible with the operations of the monoid in the
sense
that the equalities
\begin{myequation}
  \label{eq:monoid-nat}
  \begin{array}{cc}
    \begin{tikzpicture}[xscale=0.40,yscale=0.40]
\useasboundingbox (-0.5,-0.5) rectangle (5.5,3.5);
\draw[] (0.00,3.00) -- (0.11,3.02) -- (0.22,3.05) -- (0.34,3.07) -- (0.45,3.09) -- (0.56,3.11) -- (0.67,3.13) -- (0.77,3.15) -- (0.88,3.16) -- (0.99,3.17) -- (1.09,3.18) -- (1.19,3.19) -- (1.29,3.19) -- (1.39,3.18) -- (1.49,3.17) -- (1.58,3.16) -- (1.67,3.14) -- (1.76,3.11) -- (1.84,3.08) -- (1.92,3.04) -- (2.00,3.00);
\draw[] (2.00,3.00) -- (2.05,2.96) -- (2.10,2.92) -- (2.15,2.88) -- (2.20,2.84) -- (2.25,2.79) -- (2.30,2.74) -- (2.34,2.69) -- (2.39,2.64) -- (2.44,2.58) -- (2.48,2.53) -- (2.53,2.47) -- (2.58,2.42) -- (2.62,2.36) -- (2.67,2.31) -- (2.72,2.25) -- (2.77,2.20) -- (2.83,2.15) -- (2.88,2.10) -- (2.94,2.05) -- (3.00,2.00);
\draw[] (3.00,2.00) -- (3.06,1.96) -- (3.11,1.92) -- (3.17,1.89) -- (3.23,1.85) -- (3.29,1.82) -- (3.35,1.78) -- (3.41,1.75) -- (3.47,1.71) -- (3.52,1.68) -- (3.58,1.64) -- (3.63,1.61) -- (3.69,1.57) -- (3.74,1.53) -- (3.78,1.49) -- (3.83,1.45) -- (3.86,1.41) -- (3.90,1.36) -- (3.93,1.31) -- (3.95,1.26) -- (3.97,1.20);
\draw[] (3.97,1.20) -- (3.98,1.19) -- (3.98,1.18) -- (3.98,1.17) -- (3.98,1.16) -- (3.99,1.15) -- (3.99,1.14) -- (3.99,1.13) -- (3.99,1.12) -- (3.99,1.11) -- (3.99,1.10) -- (4.00,1.09) -- (4.00,1.08) -- (4.00,1.07) -- (4.00,1.06) -- (4.00,1.05) -- (4.00,1.04) -- (4.00,1.03) -- (4.00,1.02) -- (4.00,1.01) -- (4.00,1.00);
\draw[] (4.00,1.00) -- (4.00,0.99) -- (4.00,0.98) -- (4.00,0.97) -- (4.00,0.96) -- (4.00,0.95) -- (4.00,0.94) -- (4.00,0.93) -- (4.00,0.92) -- (3.99,0.91) -- (3.99,0.90) -- (3.99,0.89) -- (3.99,0.88) -- (3.99,0.87) -- (3.99,0.86) -- (3.99,0.85) -- (3.98,0.84) -- (3.98,0.83) -- (3.98,0.82) -- (3.98,0.81) -- (3.97,0.80);
\draw[] (3.97,0.80) -- (3.96,0.74) -- (3.93,0.69) -- (3.91,0.63) -- (3.87,0.58) -- (3.84,0.53) -- (3.80,0.48) -- (3.76,0.43) -- (3.71,0.39) -- (3.67,0.34) -- (3.61,0.30) -- (3.56,0.27) -- (3.50,0.23) -- (3.45,0.19) -- (3.39,0.16) -- (3.33,0.13) -- (3.26,0.10) -- (3.20,0.07) -- (3.13,0.05) -- (3.07,0.02) -- (3.00,0.00);
\draw[] (3.00,0.00) -- (2.95,-0.02) -- (2.89,-0.03) -- (2.84,-0.05) -- (2.79,-0.06) -- (2.73,-0.07) -- (2.68,-0.08) -- (2.63,-0.09) -- (2.58,-0.09) -- (2.52,-0.10) -- (2.47,-0.10) -- (2.42,-0.10) -- (2.37,-0.10) -- (2.32,-0.09) -- (2.27,-0.09) -- (2.22,-0.08) -- (2.18,-0.07) -- (2.13,-0.06) -- (2.09,-0.04) -- (2.04,-0.02) -- (2.00,0.00);
\draw[] (2.00,0.00) -- (1.94,0.03) -- (1.89,0.07) -- (1.83,0.12) -- (1.78,0.17) -- (1.73,0.22) -- (1.68,0.27) -- (1.64,0.33) -- (1.59,0.39) -- (1.54,0.45) -- (1.50,0.51) -- (1.45,0.57) -- (1.41,0.63) -- (1.36,0.68) -- (1.31,0.74) -- (1.27,0.79) -- (1.22,0.84) -- (1.16,0.89) -- (1.11,0.93) -- (1.06,0.97) -- (1.00,1.00);
\draw[] (1.00,1.00) -- (0.96,1.02) -- (0.91,1.03) -- (0.87,1.05) -- (0.82,1.06) -- (0.78,1.07) -- (0.73,1.07) -- (0.68,1.08) -- (0.63,1.08) -- (0.58,1.08) -- (0.53,1.08) -- (0.48,1.07) -- (0.43,1.07) -- (0.38,1.06) -- (0.32,1.06) -- (0.27,1.05) -- (0.22,1.04) -- (0.16,1.03) -- (0.11,1.02) -- (0.05,1.01) -- (0.00,1.00);
\draw[] (0.00,0.00) -- (0.05,-0.01) -- (0.11,-0.01) -- (0.16,-0.02) -- (0.21,-0.03) -- (0.27,-0.03) -- (0.32,-0.04) -- (0.37,-0.04) -- (0.42,-0.05) -- (0.48,-0.05) -- (0.53,-0.05) -- (0.58,-0.05) -- (0.63,-0.05) -- (0.68,-0.05) -- (0.73,-0.05) -- (0.77,-0.04) -- (0.82,-0.04) -- (0.87,-0.03) -- (0.91,-0.02) -- (0.96,-0.01) -- (1.00,0.00);
\draw[] (1.00,0.00) -- (1.15,0.06) -- (1.28,0.15) -- (1.40,0.26) -- (1.52,0.38) -- (1.62,0.53) -- (1.71,0.69) -- (1.80,0.86) -- (1.89,1.03) -- (1.97,1.22) -- (2.05,1.41) -- (2.13,1.60) -- (2.20,1.79) -- (2.28,1.98) -- (2.37,2.16) -- (2.45,2.34) -- (2.55,2.50) -- (2.65,2.65) -- (2.75,2.79) -- (2.87,2.90) -- (3.00,3.00);
\draw[] (3.00,3.00) -- (3.08,3.04) -- (3.16,3.08) -- (3.24,3.11) -- (3.33,3.14) -- (3.42,3.16) -- (3.51,3.17) -- (3.61,3.18) -- (3.70,3.18) -- (3.80,3.18) -- (3.91,3.18) -- (4.01,3.17) -- (4.12,3.16) -- (4.22,3.15) -- (4.33,3.13) -- (4.44,3.11) -- (4.55,3.09) -- (4.66,3.07) -- (4.77,3.05) -- (4.89,3.02) -- (5.00,3.00);
\draw[,] (4.00,1.00) -- (5.00,1.00);
\end{tikzpicture}
    \qeq
    \begin{tikzpicture}[xscale=0.40,yscale=0.40]
\useasboundingbox (-0.5,-0.5) rectangle (5.5,3.5);
\draw[] (0.00,0.00) -- (0.11,-0.02) -- (0.21,-0.04) -- (0.32,-0.05) -- (0.42,-0.07) -- (0.53,-0.09) -- (0.63,-0.10) -- (0.74,-0.11) -- (0.84,-0.12) -- (0.94,-0.13) -- (1.04,-0.14) -- (1.15,-0.14) -- (1.25,-0.14) -- (1.34,-0.14) -- (1.44,-0.13) -- (1.54,-0.12) -- (1.63,-0.11) -- (1.73,-0.09) -- (1.82,-0.06) -- (1.91,-0.03) -- (2.00,0.00);
\draw[] (2.00,0.00) -- (2.06,0.03) -- (2.12,0.06) -- (2.18,0.09) -- (2.24,0.13) -- (2.30,0.16) -- (2.36,0.20) -- (2.41,0.25) -- (2.47,0.29) -- (2.52,0.34) -- (2.57,0.39) -- (2.62,0.44) -- (2.67,0.50) -- (2.72,0.55) -- (2.76,0.61) -- (2.81,0.67) -- (2.85,0.73) -- (2.89,0.80) -- (2.93,0.86) -- (2.96,0.93) -- (3.00,1.00);
\draw[] (3.00,1.00) -- (3.05,1.11) -- (3.10,1.23) -- (3.15,1.35) -- (3.19,1.47) -- (3.23,1.59) -- (3.27,1.71) -- (3.30,1.84) -- (3.34,1.96) -- (3.38,2.08) -- (3.42,2.19) -- (3.46,2.30) -- (3.50,2.41) -- (3.55,2.51) -- (3.59,2.60) -- (3.65,2.69) -- (3.71,2.77) -- (3.77,2.84) -- (3.84,2.91) -- (3.92,2.96) -- (4.00,3.00);
\draw[] (4.00,3.00) -- (4.04,3.01) -- (4.08,3.03) -- (4.12,3.04) -- (4.17,3.05) -- (4.21,3.05) -- (4.26,3.06) -- (4.31,3.06) -- (4.36,3.06) -- (4.41,3.06) -- (4.46,3.06) -- (4.51,3.06) -- (4.56,3.05) -- (4.62,3.05) -- (4.67,3.04) -- (4.72,3.04) -- (4.78,3.03) -- (4.83,3.02) -- (4.89,3.02) -- (4.94,3.01) -- (5.00,3.00);
\draw[] (1.00,2.00) -- (1.05,2.01) -- (1.11,2.01) -- (1.16,2.02) -- (1.21,2.03) -- (1.26,2.03) -- (1.32,2.04) -- (1.37,2.04) -- (1.42,2.05) -- (1.47,2.05) -- (1.52,2.05) -- (1.57,2.06) -- (1.62,2.06) -- (1.67,2.05) -- (1.72,2.05) -- (1.77,2.05) -- (1.82,2.04) -- (1.86,2.03) -- (1.91,2.03) -- (1.96,2.01) -- (2.00,2.00);
\draw[] (2.00,2.00) -- (2.06,1.98) -- (2.12,1.95) -- (2.18,1.92) -- (2.23,1.88) -- (2.29,1.85) -- (2.34,1.80) -- (2.40,1.76) -- (2.45,1.71) -- (2.50,1.66) -- (2.55,1.61) -- (2.59,1.56) -- (2.64,1.50) -- (2.69,1.44) -- (2.73,1.38) -- (2.78,1.32) -- (2.82,1.26) -- (2.87,1.19) -- (2.91,1.13) -- (2.96,1.06) -- (3.00,1.00);
\draw[] (3.00,1.00) -- (3.04,0.94) -- (3.09,0.87) -- (3.13,0.81) -- (3.18,0.74) -- (3.22,0.68) -- (3.27,0.62) -- (3.31,0.56) -- (3.36,0.50) -- (3.41,0.44) -- (3.45,0.39) -- (3.50,0.34) -- (3.55,0.29) -- (3.60,0.24) -- (3.66,0.20) -- (3.71,0.15) -- (3.77,0.12) -- (3.82,0.08) -- (3.88,0.05) -- (3.94,0.02) -- (4.00,0.00);
\draw[] (4.00,0.00) -- (4.04,-0.01) -- (4.09,-0.03) -- (4.14,-0.03) -- (4.18,-0.04) -- (4.23,-0.05) -- (4.28,-0.05) -- (4.33,-0.05) -- (4.38,-0.06) -- (4.43,-0.06) -- (4.48,-0.05) -- (4.53,-0.05) -- (4.58,-0.05) -- (4.63,-0.04) -- (4.68,-0.04) -- (4.74,-0.03) -- (4.79,-0.03) -- (4.84,-0.02) -- (4.89,-0.01) -- (4.95,-0.01) -- (5.00,0.00);
\draw[] (0.00,1.00) -- (0.06,1.03) -- (0.13,1.06) -- (0.19,1.09) -- (0.26,1.12) -- (0.32,1.15) -- (0.38,1.19) -- (0.44,1.22) -- (0.50,1.26) -- (0.56,1.29) -- (0.61,1.33) -- (0.66,1.37) -- (0.71,1.41) -- (0.76,1.45) -- (0.80,1.49) -- (0.84,1.54) -- (0.87,1.59) -- (0.90,1.64) -- (0.93,1.69) -- (0.96,1.74) -- (0.97,1.80);
\draw[] (0.97,1.80) -- (0.98,1.81) -- (0.98,1.82) -- (0.98,1.83) -- (0.98,1.84) -- (0.99,1.85) -- (0.99,1.86) -- (0.99,1.87) -- (0.99,1.88) -- (0.99,1.89) -- (0.99,1.90) -- (0.99,1.91) -- (1.00,1.92) -- (1.00,1.93) -- (1.00,1.94) -- (1.00,1.95) -- (1.00,1.96) -- (1.00,1.97) -- (1.00,1.98) -- (1.00,1.99) -- (1.00,2.00);
\draw[] (1.00,2.00) -- (1.00,2.01) -- (1.00,2.02) -- (1.00,2.03) -- (1.00,2.04) -- (1.00,2.05) -- (1.00,2.06) -- (1.00,2.07) -- (1.00,2.08) -- (0.99,2.09) -- (0.99,2.10) -- (0.99,2.11) -- (0.99,2.12) -- (0.99,2.13) -- (0.99,2.14) -- (0.99,2.15) -- (0.98,2.16) -- (0.98,2.17) -- (0.98,2.18) -- (0.98,2.19) -- (0.97,2.20);
\draw[] (0.97,2.20) -- (0.96,2.26) -- (0.93,2.31) -- (0.90,2.36) -- (0.87,2.41) -- (0.84,2.46) -- (0.80,2.51) -- (0.76,2.55) -- (0.71,2.59) -- (0.66,2.63) -- (0.61,2.67) -- (0.56,2.71) -- (0.50,2.74) -- (0.44,2.78) -- (0.38,2.81) -- (0.32,2.85) -- (0.26,2.88) -- (0.19,2.91) -- (0.13,2.94) -- (0.06,2.97) -- (0.00,3.00);
\end{tikzpicture}
    \\
    \begin{tikzpicture}[xscale=0.40,yscale=0.40]
\useasboundingbox (-0.5,-0.5) rectangle (5.5,2.5);
\draw[] (1.00,2.00) -- (1.05,2.01) -- (1.11,2.01) -- (1.16,2.02) -- (1.21,2.03) -- (1.26,2.03) -- (1.32,2.04) -- (1.37,2.04) -- (1.42,2.05) -- (1.47,2.05) -- (1.52,2.05) -- (1.57,2.06) -- (1.62,2.06) -- (1.67,2.05) -- (1.72,2.05) -- (1.77,2.05) -- (1.82,2.04) -- (1.86,2.03) -- (1.91,2.03) -- (1.96,2.01) -- (2.00,2.00);
\draw[] (2.00,2.00) -- (2.06,1.98) -- (2.12,1.95) -- (2.18,1.92) -- (2.23,1.88) -- (2.29,1.85) -- (2.34,1.80) -- (2.40,1.76) -- (2.45,1.71) -- (2.50,1.66) -- (2.55,1.61) -- (2.59,1.56) -- (2.64,1.50) -- (2.69,1.44) -- (2.73,1.38) -- (2.78,1.32) -- (2.82,1.26) -- (2.87,1.19) -- (2.91,1.13) -- (2.96,1.06) -- (3.00,1.00);
\draw[] (3.00,1.00) -- (3.04,0.94) -- (3.09,0.87) -- (3.13,0.81) -- (3.18,0.74) -- (3.22,0.68) -- (3.27,0.62) -- (3.31,0.56) -- (3.36,0.50) -- (3.41,0.44) -- (3.45,0.39) -- (3.50,0.34) -- (3.55,0.29) -- (3.60,0.24) -- (3.66,0.20) -- (3.71,0.15) -- (3.77,0.12) -- (3.82,0.08) -- (3.88,0.05) -- (3.94,0.02) -- (4.00,0.00);
\draw[] (4.00,0.00) -- (4.04,-0.01) -- (4.09,-0.03) -- (4.14,-0.03) -- (4.18,-0.04) -- (4.23,-0.05) -- (4.28,-0.05) -- (4.33,-0.05) -- (4.38,-0.06) -- (4.43,-0.06) -- (4.48,-0.05) -- (4.53,-0.05) -- (4.58,-0.05) -- (4.63,-0.04) -- (4.68,-0.04) -- (4.74,-0.03) -- (4.79,-0.03) -- (4.84,-0.02) -- (4.89,-0.01) -- (4.95,-0.01) -- (5.00,0.00);
\draw[] (0.00,0.00) -- (0.11,-0.02) -- (0.22,-0.04) -- (0.32,-0.06) -- (0.43,-0.08) -- (0.54,-0.09) -- (0.65,-0.11) -- (0.75,-0.12) -- (0.86,-0.14) -- (0.96,-0.15) -- (1.06,-0.15) -- (1.16,-0.16) -- (1.26,-0.16) -- (1.36,-0.15) -- (1.46,-0.14) -- (1.55,-0.13) -- (1.65,-0.12) -- (1.74,-0.10) -- (1.83,-0.07) -- (1.92,-0.04) -- (2.00,0.00);
\draw[] (2.00,0.00) -- (2.06,0.03) -- (2.12,0.06) -- (2.17,0.10) -- (2.23,0.14) -- (2.28,0.18) -- (2.33,0.22) -- (2.38,0.27) -- (2.44,0.31) -- (2.49,0.36) -- (2.54,0.42) -- (2.58,0.47) -- (2.63,0.52) -- (2.68,0.58) -- (2.73,0.64) -- (2.77,0.69) -- (2.82,0.75) -- (2.86,0.81) -- (2.91,0.88) -- (2.96,0.94) -- (3.00,1.00);
\draw[] (3.00,1.00) -- (3.04,1.06) -- (3.09,1.13) -- (3.13,1.19) -- (3.18,1.25) -- (3.22,1.31) -- (3.27,1.37) -- (3.32,1.43) -- (3.36,1.49) -- (3.41,1.55) -- (3.46,1.60) -- (3.51,1.66) -- (3.56,1.71) -- (3.61,1.76) -- (3.66,1.80) -- (3.71,1.84) -- (3.77,1.88) -- (3.82,1.92) -- (3.88,1.95) -- (3.94,1.98) -- (4.00,2.00);
\draw[] (4.00,2.00) -- (4.04,2.01) -- (4.09,2.03) -- (4.14,2.04) -- (4.18,2.04) -- (4.23,2.05) -- (4.28,2.05) -- (4.33,2.06) -- (4.38,2.06) -- (4.43,2.06) -- (4.48,2.06) -- (4.53,2.05) -- (4.58,2.05) -- (4.63,2.05) -- (4.68,2.04) -- (4.74,2.04) -- (4.79,2.03) -- (4.84,2.02) -- (4.89,2.01) -- (4.95,2.01) -- (5.00,2.00);
\filldraw[fill=white] (1.00,2.00) ellipse (0.14cm and 0.14cm);
\end{tikzpicture}
    \qeq
    \begin{tikzpicture}[xscale=0.40,yscale=0.40]
\useasboundingbox (-0.5,-0.5) rectangle (5.5,2.5);
\draw[] (0.00,0.00) -- (0.05,-0.01) -- (0.11,-0.01) -- (0.16,-0.02) -- (0.21,-0.03) -- (0.26,-0.03) -- (0.32,-0.04) -- (0.37,-0.04) -- (0.42,-0.04) -- (0.47,-0.05) -- (0.52,-0.05) -- (0.57,-0.05) -- (0.62,-0.05) -- (0.67,-0.05) -- (0.72,-0.05) -- (0.77,-0.04) -- (0.82,-0.04) -- (0.86,-0.03) -- (0.91,-0.02) -- (0.96,-0.01) -- (1.00,0.00);
\draw[] (1.00,0.00) -- (1.12,0.05) -- (1.24,0.11) -- (1.35,0.18) -- (1.46,0.26) -- (1.56,0.36) -- (1.66,0.46) -- (1.75,0.57) -- (1.85,0.69) -- (1.94,0.81) -- (2.03,0.94) -- (2.12,1.06) -- (2.21,1.19) -- (2.30,1.31) -- (2.39,1.43) -- (2.48,1.54) -- (2.58,1.65) -- (2.68,1.76) -- (2.78,1.85) -- (2.89,1.93) -- (3.00,2.00);
\draw[] (3.00,2.00) -- (3.08,2.04) -- (3.17,2.08) -- (3.26,2.11) -- (3.35,2.13) -- (3.44,2.15) -- (3.53,2.16) -- (3.63,2.17) -- (3.73,2.17) -- (3.83,2.17) -- (3.93,2.17) -- (4.03,2.16) -- (4.14,2.15) -- (4.24,2.14) -- (4.35,2.12) -- (4.46,2.11) -- (4.56,2.09) -- (4.67,2.07) -- (4.78,2.04) -- (4.89,2.02) -- (5.00,2.00);
\draw[,] (3.00,0.00) -- (5.00,0.00);
\filldraw[fill=white] (3.00,0.00) ellipse (0.14cm and 0.14cm);
\end{tikzpicture}
  \end{array}
\end{myequation}
are satisfied, as well as the equations obtained by turning the diagrams
upside-down. A \emph{commutative monoid} is a symmetric monoid such that
the equality
\begin{mymath}
  \begin{tikzpicture}[xscale=0.40,yscale=0.40]
\useasboundingbox (-0.5,-0.5) rectangle (4.5,2.5);
\draw[] (0.00,2.00) -- (0.05,1.96) -- (0.10,1.91) -- (0.16,1.87) -- (0.21,1.83) -- (0.26,1.78) -- (0.31,1.74) -- (0.36,1.69) -- (0.41,1.64) -- (0.46,1.60) -- (0.51,1.55) -- (0.56,1.50) -- (0.61,1.45) -- (0.66,1.40) -- (0.71,1.35) -- (0.76,1.29) -- (0.81,1.24) -- (0.86,1.18) -- (0.91,1.12) -- (0.95,1.06) -- (1.00,1.00);
\draw[] (1.00,1.00) -- (1.05,0.94) -- (1.09,0.87) -- (1.14,0.80) -- (1.18,0.74) -- (1.23,0.67) -- (1.27,0.60) -- (1.32,0.54) -- (1.36,0.47) -- (1.41,0.41) -- (1.46,0.35) -- (1.51,0.29) -- (1.56,0.24) -- (1.61,0.19) -- (1.66,0.15) -- (1.71,0.11) -- (1.76,0.07) -- (1.82,0.04) -- (1.88,0.02) -- (1.94,0.01) -- (2.00,0.00);
\draw[] (2.00,0.00) -- (2.06,0.00) -- (2.11,0.01) -- (2.17,0.02) -- (2.23,0.04) -- (2.29,0.06) -- (2.35,0.09) -- (2.41,0.12) -- (2.47,0.15) -- (2.53,0.19) -- (2.59,0.24) -- (2.64,0.28) -- (2.69,0.33) -- (2.74,0.39) -- (2.79,0.44) -- (2.83,0.50) -- (2.87,0.56) -- (2.90,0.62) -- (2.93,0.68) -- (2.95,0.74) -- (2.97,0.80);
\draw[] (2.97,0.80) -- (2.98,0.81) -- (2.98,0.82) -- (2.98,0.83) -- (2.98,0.84) -- (2.99,0.85) -- (2.99,0.86) -- (2.99,0.87) -- (2.99,0.88) -- (2.99,0.89) -- (2.99,0.90) -- (2.99,0.91) -- (3.00,0.92) -- (3.00,0.93) -- (3.00,0.94) -- (3.00,0.95) -- (3.00,0.96) -- (3.00,0.97) -- (3.00,0.98) -- (3.00,0.99) -- (3.00,1.00);
\draw[] (3.00,1.00) -- (3.00,1.01) -- (3.00,1.02) -- (3.00,1.03) -- (3.00,1.04) -- (3.00,1.05) -- (3.00,1.06) -- (3.00,1.07) -- (3.00,1.08) -- (2.99,1.09) -- (2.99,1.10) -- (2.99,1.11) -- (2.99,1.12) -- (2.99,1.13) -- (2.99,1.14) -- (2.99,1.15) -- (2.98,1.16) -- (2.98,1.17) -- (2.98,1.18) -- (2.98,1.19) -- (2.97,1.20);
\draw[] (2.97,1.20) -- (2.95,1.26) -- (2.93,1.32) -- (2.90,1.38) -- (2.87,1.44) -- (2.83,1.50) -- (2.79,1.56) -- (2.74,1.61) -- (2.69,1.67) -- (2.64,1.72) -- (2.59,1.76) -- (2.53,1.81) -- (2.47,1.85) -- (2.41,1.88) -- (2.35,1.91) -- (2.29,1.94) -- (2.23,1.96) -- (2.17,1.98) -- (2.11,1.99) -- (2.06,2.00) -- (2.00,2.00);
\draw[] (2.00,2.00) -- (1.94,1.99) -- (1.88,1.98) -- (1.82,1.96) -- (1.76,1.93) -- (1.71,1.89) -- (1.66,1.85) -- (1.61,1.81) -- (1.56,1.76) -- (1.51,1.71) -- (1.46,1.65) -- (1.41,1.59) -- (1.36,1.53) -- (1.32,1.46) -- (1.27,1.40) -- (1.23,1.33) -- (1.18,1.26) -- (1.14,1.20) -- (1.09,1.13) -- (1.05,1.06) -- (1.00,1.00);
\draw[] (1.00,1.00) -- (0.95,0.94) -- (0.91,0.88) -- (0.86,0.82) -- (0.81,0.76) -- (0.76,0.71) -- (0.71,0.65) -- (0.66,0.60) -- (0.61,0.55) -- (0.56,0.50) -- (0.51,0.45) -- (0.46,0.40) -- (0.41,0.36) -- (0.36,0.31) -- (0.31,0.26) -- (0.26,0.22) -- (0.21,0.17) -- (0.16,0.13) -- (0.10,0.09) -- (0.05,0.04) -- (0.00,0.00);
\draw[,] (3.00,1.00) -- (4.00,1.00);
\end{tikzpicture}
  \qeq
  \begin{tikzpicture}[xscale=0.40,yscale=0.40]
\useasboundingbox (-0.5,-0.5) rectangle (2.5,2.5);
\draw[,] (1.00,1.00) -- (2.00,1.00);
\draw[] (0.00,2.00) -- (0.06,1.97) -- (0.13,1.94) -- (0.19,1.91) -- (0.26,1.88) -- (0.32,1.85) -- (0.38,1.81) -- (0.44,1.78) -- (0.50,1.74) -- (0.56,1.71) -- (0.61,1.67) -- (0.66,1.63) -- (0.71,1.59) -- (0.76,1.55) -- (0.80,1.51) -- (0.84,1.46) -- (0.87,1.41) -- (0.90,1.36) -- (0.93,1.31) -- (0.96,1.26) -- (0.97,1.20);
\draw[] (0.97,1.20) -- (0.98,1.19) -- (0.98,1.18) -- (0.98,1.17) -- (0.98,1.16) -- (0.99,1.15) -- (0.99,1.14) -- (0.99,1.13) -- (0.99,1.12) -- (0.99,1.11) -- (0.99,1.10) -- (0.99,1.09) -- (1.00,1.08) -- (1.00,1.07) -- (1.00,1.06) -- (1.00,1.05) -- (1.00,1.04) -- (1.00,1.03) -- (1.00,1.02) -- (1.00,1.01) -- (1.00,1.00);
\draw[] (1.00,1.00) -- (1.00,0.99) -- (1.00,0.98) -- (1.00,0.97) -- (1.00,0.96) -- (1.00,0.95) -- (1.00,0.94) -- (1.00,0.93) -- (1.00,0.92) -- (0.99,0.91) -- (0.99,0.90) -- (0.99,0.89) -- (0.99,0.88) -- (0.99,0.87) -- (0.99,0.86) -- (0.99,0.85) -- (0.98,0.84) -- (0.98,0.83) -- (0.98,0.82) -- (0.98,0.81) -- (0.97,0.80);
\draw[] (0.97,0.80) -- (0.96,0.74) -- (0.93,0.69) -- (0.90,0.64) -- (0.87,0.59) -- (0.84,0.54) -- (0.80,0.49) -- (0.76,0.45) -- (0.71,0.41) -- (0.66,0.37) -- (0.61,0.33) -- (0.56,0.29) -- (0.50,0.26) -- (0.44,0.22) -- (0.38,0.19) -- (0.32,0.15) -- (0.26,0.12) -- (0.19,0.09) -- (0.13,0.06) -- (0.06,0.03) -- (0.00,0.00);
\end{tikzpicture}
\end{mymath}
is satisfied. In particular, a commutative monoid in a symmetric monoidal
category is a commutative monoid whose symmetry corresponds to the symmetry of
the category: $\gamma=\gamma_{M,M}$. In this case, the equations
\eqref{eq:monoid-nat} can always be deduced from the naturality of the symmetry
of the monoidal category.

A \emph{comonoid} $(M,\delta,\varepsilon)$ in $\mathcal{C}$ is an object $M$
together with two morphisms
\begin{mymath}
\delta:M\to M\otimes M
\qtand
\varepsilon:M\to I
\end{mymath}
respectively drawn as
\begin{myequation}
  \label{eq:comonoid-string}
  \begin{tikzpicture}[xscale=0.40,yscale=0.40]
\useasboundingbox (-0.5,-0.5) rectangle (4.5,2.5);
\draw[,] (2.00,1.00) -- (1.00,1.00);
\draw[] (3.00,2.00) -- (2.94,1.97) -- (2.87,1.94) -- (2.81,1.91) -- (2.74,1.88) -- (2.68,1.85) -- (2.62,1.81) -- (2.56,1.78) -- (2.50,1.74) -- (2.44,1.71) -- (2.39,1.67) -- (2.34,1.63) -- (2.29,1.59) -- (2.24,1.55) -- (2.20,1.51) -- (2.16,1.46) -- (2.13,1.41) -- (2.10,1.36) -- (2.07,1.31) -- (2.04,1.26) -- (2.03,1.20);
\draw[] (2.03,1.20) -- (2.02,1.19) -- (2.02,1.18) -- (2.02,1.17) -- (2.02,1.16) -- (2.01,1.15) -- (2.01,1.14) -- (2.01,1.13) -- (2.01,1.12) -- (2.01,1.11) -- (2.01,1.10) -- (2.01,1.09) -- (2.00,1.08) -- (2.00,1.07) -- (2.00,1.06) -- (2.00,1.05) -- (2.00,1.04) -- (2.00,1.03) -- (2.00,1.02) -- (2.00,1.01) -- (2.00,1.00);
\draw[] (2.00,1.00) -- (2.00,0.99) -- (2.00,0.98) -- (2.00,0.97) -- (2.00,0.96) -- (2.00,0.95) -- (2.00,0.94) -- (2.00,0.93) -- (2.00,0.92) -- (2.01,0.91) -- (2.01,0.90) -- (2.01,0.89) -- (2.01,0.88) -- (2.01,0.87) -- (2.01,0.86) -- (2.01,0.85) -- (2.02,0.84) -- (2.02,0.83) -- (2.02,0.82) -- (2.02,0.81) -- (2.03,0.80);
\draw[] (2.03,0.80) -- (2.04,0.74) -- (2.07,0.69) -- (2.10,0.64) -- (2.13,0.59) -- (2.16,0.54) -- (2.20,0.49) -- (2.24,0.45) -- (2.29,0.41) -- (2.34,0.37) -- (2.39,0.33) -- (2.44,0.29) -- (2.50,0.26) -- (2.56,0.22) -- (2.62,0.19) -- (2.68,0.15) -- (2.74,0.12) -- (2.81,0.09) -- (2.87,0.06) -- (2.94,0.03) -- (3.00,0.00);
\draw (3.50,2.00) node{$M$};
\draw (0.50,1.00) node{$M$};
\draw (3.50,0.00) node{$M$};
\end{tikzpicture}
  \qqtand
  \begin{tikzpicture}[xscale=0.40,yscale=0.40]
\useasboundingbox (-0.5,-0.5) rectangle (2.5,2.5);
\draw[,] (2.00,1.00) -- (1.00,1.00);
\filldraw[fill=white] (2.00,1.00) ellipse (0.14cm and 0.14cm);
\draw (0.50,1.00) node{$M$};
\end{tikzpicture}

\end{myequation}
satisfying dual coherence diagrams. Similarly, the notions symmetric comonoid
and cocommutative comonoid can be defined by duality.

The definition of a monoid can be reformulated internally, in the language of
equational theories:
\begin{definition}
  \label{definition:e-t-monoid}
  The \emph{equational theory of monoids} $\eqth{M}$ has one object $1$ and two
  generators $\mu:2\to 1$ and $\eta:0\to 1$ subject to the three relations
  \begin{myequation}
    \label{eq:monoid-theory}
    \begin{array}{c}
      \mu\circ(\mu\otimes\id_1)
      \qeq
      \mu\circ(\id_1\otimes\mu)
      \\
      \mu\circ(\eta\otimes\id_1)
      \qeq
      \id_1
      \qeq
      \mu\circ(\id_1\otimes\eta)
    \end{array}
  \end{myequation}
\end{definition}
\vspace\reduce
\noindent
The equations~\eqref{eq:monoid-theory} are the algebraic formulation of the
equations~\eqref{eq:monoid}. If we write $\mathbb{M}$ for the monoidal category
generated by the equational theory $\eqth{M}$, the algebras of $\mathbb{M}$ in a
strict monoidal category~$\mathcal{C}$ are precisely its monoids: the category
$\Alg{\mathbb{M}}{\mathcal{C}}$ of algebras of the monoidal theory $\mathbb{M}$
in $\mathcal{C}$ is monoidally equivalent to the category of monoids in
$\mathcal{C}$. Similarly, all the algebraic structures introduced in this
section can be defined using algebraic theories.


\paragraph{Bialgebras.}
A \emph{bialgebra} $(B,\mu,\eta,\delta,\varepsilon,\gamma)$ in $\mathcal{C}$ is
an object $B$ together with four morphisms
\vspace\reduce
\[
\begin{array}{r@{\ :\ }l@{\qquad}r@{\ :\ }l}
  \hspace{-1ex}
  \mu&B\otimes B\to B
  &
  \eta&I\to B\\
  \delta&B\to B\otimes B
  &
  \varepsilon&B\to I\\
\end{array}
\ \tand\ 
\gamma : B \otimes B\to B\otimes B
\vspace\reduce
\]
such that $\gamma:B\otimes B\to B\otimes B$ is a symmetry for $B$,
$(B,\mu,\eta,\gamma)$ is a symmetric monoid and $(B,\delta,\varepsilon,\gamma)$
is a symmetric comonoid. The morphism $\gamma$ is thus pictured as
in~\eqref{eq:sym-string}, $\mu$ and $\eta$ as in \eqref{eq:monoid-string},
and~$\delta$ and~$\varepsilon$ as in~\eqref{eq:comonoid-string}. Those two
structures should be coherent, in the sense
that the four equalities
\begin{mymath}
\hspace{-3ex}
\begin{array}{c}
  \sstrid{hopf_l}
  =
  \sstrid{hopf_r}
  \\
  \sstrid{counit_mult}
  =
  \sstrid{counit_x_counit}
  \qquad
  \sstrid{comult_unit}
  =
  \sstrid{unit_x_unit}
  \qquad
  \sstrid{unit_counit}=
\end{array}
\end{mymath}
should be satisfied.

A bialgebra is \emph{commutative} (\resp \emph{cocommutative}) when the induced
symmetric monoid $(B,\mu,\eta,\gamma)$ (\resp symmetric comonoid
$(B,\delta,\varepsilon,\gamma)$) is commutative (\resp cocommutative), and
\emph{bicommutative} when it is both commutative and cocommutative. A bialgebra
is \emph{qualitative}
when the following equality holds:
\begin{mymath}
\begin{tikzpicture}[xscale=0.40,yscale=0.40]
\useasboundingbox (-0.5,-0.5) rectangle (6.5,2.5);
\draw[,] (1.00,1.00) -- (0.00,1.00);
\draw[] (3.00,2.00) -- (2.88,2.00) -- (2.75,1.99) -- (2.63,1.99) -- (2.51,1.98) -- (2.39,1.97) -- (2.27,1.95) -- (2.15,1.93) -- (2.04,1.90) -- (1.93,1.87) -- (1.82,1.84) -- (1.72,1.80) -- (1.62,1.76) -- (1.53,1.71) -- (1.44,1.65) -- (1.36,1.59) -- (1.29,1.53) -- (1.22,1.45) -- (1.15,1.37) -- (1.10,1.29) -- (1.05,1.19);
\draw[] (1.05,1.19) -- (1.05,1.18) -- (1.04,1.17) -- (1.04,1.17) -- (1.03,1.16) -- (1.03,1.15) -- (1.03,1.14) -- (1.02,1.13) -- (1.02,1.12) -- (1.02,1.11) -- (1.01,1.10) -- (1.01,1.09) -- (1.01,1.08) -- (1.01,1.07) -- (1.01,1.06) -- (1.00,1.05) -- (1.00,1.04) -- (1.00,1.03) -- (1.00,1.02) -- (1.00,1.01) -- (1.00,1.00);
\draw[] (1.00,1.00) -- (1.00,0.99) -- (1.00,0.98) -- (1.00,0.97) -- (1.00,0.96) -- (1.00,0.95) -- (1.01,0.94) -- (1.01,0.93) -- (1.01,0.92) -- (1.01,0.91) -- (1.01,0.90) -- (1.02,0.89) -- (1.02,0.88) -- (1.02,0.87) -- (1.03,0.86) -- (1.03,0.85) -- (1.03,0.84) -- (1.04,0.83) -- (1.04,0.83) -- (1.05,0.82) -- (1.05,0.81);
\draw[] (1.05,0.81) -- (1.10,0.71) -- (1.15,0.63) -- (1.22,0.55) -- (1.29,0.47) -- (1.36,0.41) -- (1.44,0.35) -- (1.53,0.29) -- (1.62,0.24) -- (1.72,0.20) -- (1.82,0.16) -- (1.93,0.13) -- (2.04,0.10) -- (2.15,0.07) -- (2.27,0.05) -- (2.39,0.03) -- (2.51,0.02) -- (2.63,0.01) -- (2.75,0.01) -- (2.88,0.00) -- (3.00,0.00);
\draw[] (3.00,0.00) -- (3.12,0.00) -- (3.25,0.01) -- (3.37,0.01) -- (3.49,0.02) -- (3.61,0.03) -- (3.73,0.05) -- (3.85,0.07) -- (3.96,0.10) -- (4.07,0.13) -- (4.18,0.16) -- (4.28,0.20) -- (4.38,0.24) -- (4.47,0.29) -- (4.56,0.35) -- (4.64,0.41) -- (4.71,0.47) -- (4.78,0.55) -- (4.85,0.63) -- (4.90,0.71) -- (4.95,0.81);
\draw[] (4.95,0.81) -- (4.95,0.82) -- (4.96,0.83) -- (4.96,0.83) -- (4.97,0.84) -- (4.97,0.85) -- (4.97,0.86) -- (4.98,0.87) -- (4.98,0.88) -- (4.98,0.89) -- (4.99,0.90) -- (4.99,0.91) -- (4.99,0.92) -- (4.99,0.93) -- (4.99,0.94) -- (5.00,0.95) -- (5.00,0.96) -- (5.00,0.97) -- (5.00,0.98) -- (5.00,0.99) -- (5.00,1.00);
\draw[] (5.00,1.00) -- (5.00,1.01) -- (5.00,1.02) -- (5.00,1.03) -- (5.00,1.04) -- (5.00,1.05) -- (4.99,1.06) -- (4.99,1.07) -- (4.99,1.08) -- (4.99,1.09) -- (4.99,1.10) -- (4.98,1.11) -- (4.98,1.12) -- (4.98,1.13) -- (4.97,1.14) -- (4.97,1.15) -- (4.97,1.16) -- (4.96,1.17) -- (4.96,1.17) -- (4.95,1.18) -- (4.95,1.19);
\draw[] (4.95,1.19) -- (4.90,1.29) -- (4.85,1.37) -- (4.78,1.45) -- (4.71,1.53) -- (4.64,1.59) -- (4.56,1.65) -- (4.47,1.71) -- (4.38,1.76) -- (4.28,1.80) -- (4.18,1.84) -- (4.07,1.87) -- (3.96,1.90) -- (3.85,1.93) -- (3.73,1.95) -- (3.61,1.97) -- (3.49,1.98) -- (3.37,1.99) -- (3.25,1.99) -- (3.12,2.00) -- (3.00,2.00);
\draw[,] (5.00,1.00) -- (6.00,1.00);
\end{tikzpicture}
\qeq
\begin{tikzpicture}[xscale=0.40,yscale=0.40]
\useasboundingbox (-0.5,-0.5) rectangle (6.5,2.5);
\draw[,] (0.00,1.00) -- (6.00,1.00);
\end{tikzpicture}

\end{mymath}



We write $\eqth{B}$ for the equational theory of bicommutative
bialgebras and $\eqth{R}$ for the equational theory of qualitative bicommutative
bialgebras.

\paragraph{Dual objects.}
An object $L$ of $\mathcal{C}$ is said to be \emph{left dual} to an object $R$
when there exists two morphisms
\begin{mymath}
\eta:I\to R\otimes L
\qtand
\varepsilon:L\otimes R\to I
\end{mymath}
called respectively the \emph{unit} and the \emph{counit} of the duality and
respectively pictured as
\begin{mymath}
\begin{tikzpicture}[xscale=0.40,yscale=0.40]
\useasboundingbox (-0.5,-0.5) rectangle (2.5,2.5);
\draw[] (1.00,2.00) -- (0.93,1.95) -- (0.85,1.90) -- (0.78,1.85) -- (0.70,1.80) -- (0.63,1.75) -- (0.56,1.70) -- (0.50,1.65) -- (0.43,1.60) -- (0.37,1.55) -- (0.31,1.50) -- (0.26,1.45) -- (0.21,1.40) -- (0.16,1.35) -- (0.12,1.30) -- (0.09,1.25) -- (0.06,1.20) -- (0.03,1.15) -- (0.01,1.10) -- (0.00,1.05) -- (0.00,1.00);
\draw[] (0.00,1.00) -- (0.00,0.95) -- (0.01,0.90) -- (0.03,0.85) -- (0.06,0.80) -- (0.09,0.75) -- (0.12,0.70) -- (0.16,0.65) -- (0.21,0.60) -- (0.26,0.55) -- (0.31,0.50) -- (0.37,0.45) -- (0.43,0.40) -- (0.50,0.35) -- (0.56,0.30) -- (0.63,0.25) -- (0.70,0.20) -- (0.78,0.15) -- (0.85,0.10) -- (0.93,0.05) -- (1.00,0.00);
\draw (1.50,2.00) node{$R$};
\draw (1.50,0.00) node{$L$};
\end{tikzpicture}
\qtand
\begin{tikzpicture}[xscale=0.40,yscale=0.40]
\useasboundingbox (-0.5,-0.5) rectangle (2.5,2.5);
\draw[] (1.00,2.00) -- (1.07,1.95) -- (1.15,1.90) -- (1.22,1.85) -- (1.30,1.80) -- (1.37,1.75) -- (1.44,1.70) -- (1.50,1.65) -- (1.57,1.60) -- (1.63,1.55) -- (1.69,1.50) -- (1.74,1.45) -- (1.79,1.40) -- (1.84,1.35) -- (1.88,1.30) -- (1.91,1.25) -- (1.94,1.20) -- (1.97,1.15) -- (1.99,1.10) -- (2.00,1.05) -- (2.00,1.00);
\draw[] (2.00,1.00) -- (2.00,0.95) -- (1.99,0.90) -- (1.97,0.85) -- (1.94,0.80) -- (1.91,0.75) -- (1.88,0.70) -- (1.84,0.65) -- (1.79,0.60) -- (1.74,0.55) -- (1.69,0.50) -- (1.63,0.45) -- (1.57,0.40) -- (1.50,0.35) -- (1.44,0.30) -- (1.37,0.25) -- (1.30,0.20) -- (1.22,0.15) -- (1.15,0.10) -- (1.07,0.05) -- (1.00,0.00);
\draw (0.50,2.00) node{$L$};
\draw (0.50,0.00) node{$R$};
\end{tikzpicture}
\end{mymath}
such that the two morphisms
\begin{mymath}
\begin{tikzpicture}[xscale=0.40,yscale=0.40]
\useasboundingbox (-0.5,-0.5) rectangle (4.5,4.5);
\draw[] (1.00,4.00) -- (1.05,4.01) -- (1.10,4.02) -- (1.15,4.02) -- (1.20,4.03) -- (1.25,4.04) -- (1.30,4.04) -- (1.35,4.05) -- (1.40,4.05) -- (1.45,4.06) -- (1.49,4.06) -- (1.54,4.06) -- (1.59,4.06) -- (1.64,4.06) -- (1.70,4.06) -- (1.75,4.05) -- (1.80,4.05) -- (1.85,4.04) -- (1.90,4.03) -- (1.95,4.01) -- (2.00,4.00);
\draw[] (2.00,4.00) -- (2.07,3.98) -- (2.15,3.95) -- (2.22,3.91) -- (2.29,3.88) -- (2.36,3.84) -- (2.43,3.79) -- (2.50,3.74) -- (2.56,3.69) -- (2.62,3.64) -- (2.68,3.59) -- (2.73,3.53) -- (2.79,3.48) -- (2.83,3.42) -- (2.87,3.36) -- (2.91,3.30) -- (2.94,3.24) -- (2.97,3.18) -- (2.98,3.12) -- (3.00,3.06) -- (3.00,3.00);
\draw[] (3.00,3.00) -- (3.00,2.94) -- (2.99,2.89) -- (2.97,2.83) -- (2.95,2.78) -- (2.92,2.72) -- (2.88,2.67) -- (2.84,2.62) -- (2.79,2.57) -- (2.74,2.52) -- (2.69,2.47) -- (2.63,2.42) -- (2.57,2.37) -- (2.51,2.33) -- (2.44,2.28) -- (2.37,2.23) -- (2.30,2.18) -- (2.22,2.14) -- (2.15,2.09) -- (2.08,2.05) -- (2.00,2.00);
\draw[] (2.00,2.00) -- (1.92,1.95) -- (1.85,1.91) -- (1.78,1.86) -- (1.70,1.82) -- (1.63,1.77) -- (1.56,1.72) -- (1.49,1.67) -- (1.43,1.63) -- (1.37,1.58) -- (1.31,1.53) -- (1.26,1.48) -- (1.21,1.43) -- (1.16,1.38) -- (1.12,1.33) -- (1.08,1.28) -- (1.05,1.22) -- (1.03,1.17) -- (1.01,1.11) -- (1.00,1.06) -- (1.00,1.00);
\draw[] (1.00,1.00) -- (1.00,0.94) -- (1.02,0.88) -- (1.03,0.82) -- (1.06,0.76) -- (1.09,0.70) -- (1.13,0.64) -- (1.17,0.58) -- (1.21,0.52) -- (1.27,0.47) -- (1.32,0.41) -- (1.38,0.36) -- (1.44,0.31) -- (1.50,0.26) -- (1.57,0.21) -- (1.64,0.16) -- (1.71,0.12) -- (1.78,0.09) -- (1.85,0.05) -- (1.93,0.02) -- (2.00,0.00);
\draw[] (2.00,0.00) -- (2.05,-0.01) -- (2.10,-0.03) -- (2.15,-0.04) -- (2.20,-0.05) -- (2.25,-0.05) -- (2.30,-0.06) -- (2.36,-0.06) -- (2.41,-0.06) -- (2.46,-0.06) -- (2.51,-0.06) -- (2.55,-0.06) -- (2.60,-0.05) -- (2.65,-0.05) -- (2.70,-0.04) -- (2.75,-0.04) -- (2.80,-0.03) -- (2.85,-0.02) -- (2.90,-0.02) -- (2.95,-0.01) -- (3.00,0.00);
\draw (0.50,4.00) node{$L$};
\draw (3.50,0.00) node{$L$};
\end{tikzpicture}
\qtand
\begin{tikzpicture}[xscale=0.40,yscale=0.40]
\useasboundingbox (-0.5,-0.5) rectangle (4.5,4.5);
\draw[] (1.00,0.00) -- (1.05,-0.01) -- (1.10,-0.02) -- (1.15,-0.02) -- (1.20,-0.03) -- (1.25,-0.04) -- (1.30,-0.04) -- (1.35,-0.05) -- (1.40,-0.05) -- (1.45,-0.06) -- (1.49,-0.06) -- (1.54,-0.06) -- (1.59,-0.06) -- (1.64,-0.06) -- (1.70,-0.06) -- (1.75,-0.05) -- (1.80,-0.05) -- (1.85,-0.04) -- (1.90,-0.03) -- (1.95,-0.01) -- (2.00,0.00);
\draw[] (2.00,0.00) -- (2.07,0.02) -- (2.15,0.05) -- (2.22,0.09) -- (2.29,0.12) -- (2.36,0.16) -- (2.43,0.21) -- (2.50,0.26) -- (2.56,0.31) -- (2.62,0.36) -- (2.68,0.41) -- (2.73,0.47) -- (2.79,0.52) -- (2.83,0.58) -- (2.87,0.64) -- (2.91,0.70) -- (2.94,0.76) -- (2.97,0.82) -- (2.98,0.88) -- (3.00,0.94) -- (3.00,1.00);
\draw[] (3.00,1.00) -- (3.00,1.06) -- (2.99,1.11) -- (2.97,1.17) -- (2.95,1.22) -- (2.92,1.28) -- (2.88,1.33) -- (2.84,1.38) -- (2.79,1.43) -- (2.74,1.48) -- (2.69,1.53) -- (2.63,1.58) -- (2.57,1.63) -- (2.51,1.67) -- (2.44,1.72) -- (2.37,1.77) -- (2.30,1.82) -- (2.22,1.86) -- (2.15,1.91) -- (2.08,1.95) -- (2.00,2.00);
\draw[] (2.00,2.00) -- (1.92,2.05) -- (1.85,2.09) -- (1.78,2.14) -- (1.70,2.18) -- (1.63,2.23) -- (1.56,2.28) -- (1.49,2.33) -- (1.43,2.37) -- (1.37,2.42) -- (1.31,2.47) -- (1.26,2.52) -- (1.21,2.57) -- (1.16,2.62) -- (1.12,2.67) -- (1.08,2.72) -- (1.05,2.78) -- (1.03,2.83) -- (1.01,2.89) -- (1.00,2.94) -- (1.00,3.00);
\draw[] (1.00,3.00) -- (1.00,3.06) -- (1.02,3.12) -- (1.03,3.18) -- (1.06,3.24) -- (1.09,3.30) -- (1.13,3.36) -- (1.17,3.42) -- (1.21,3.48) -- (1.27,3.53) -- (1.32,3.59) -- (1.38,3.64) -- (1.44,3.69) -- (1.50,3.74) -- (1.57,3.79) -- (1.64,3.84) -- (1.71,3.88) -- (1.78,3.91) -- (1.85,3.95) -- (1.93,3.98) -- (2.00,4.00);
\draw[] (2.00,4.00) -- (2.05,4.01) -- (2.10,4.03) -- (2.15,4.04) -- (2.20,4.05) -- (2.25,4.05) -- (2.30,4.06) -- (2.36,4.06) -- (2.41,4.06) -- (2.46,4.06) -- (2.51,4.06) -- (2.55,4.06) -- (2.60,4.05) -- (2.65,4.05) -- (2.70,4.04) -- (2.75,4.04) -- (2.80,4.03) -- (2.85,4.02) -- (2.90,4.02) -- (2.95,4.01) -- (3.00,4.00);
\draw (3.50,4.00) node{$R$};
\draw (0.50,0.00) node{$R$};
\end{tikzpicture}
\end{mymath}
are equal to the identities on $L$ and $R$ respectively. We write~$\eqth{D}$ for
the equational theory associated to dual objects.


\section{Presenting the category of relations}
\label{section:presentation-rel}
We now introduce a presentation of the category $\Rel$ of finite ordinals and
relations, by refining presentations of simpler categories. This result is
mentioned in Examples~6 and~7 of~\cite{hyland-power:symmetric-monoidal-sketches}
and is proved in three different ways
in~\cite{lafont:equational-reasoning-diagrams}, \cite{pirashvili:bialg-prop}
and~\cite{lack:composing-props}. The methodology adopted here to build this
presentation has the advantage of being simple to check (although very
repetitive) and can be extended to give the presentation of the category of
games and strategies described in Section~\ref{subsection:walking-inno}. For the
lack of space, most of the proofs have been omitted or only sketched; detailed
proofs can be found in the author's PhD thesis~\cite{mimram:phd}.

\paragraph{The simplicial category.}
The simplicial category $\Delta$ is the monoidal theory whose morphisms
$f:\intset{m}\to\intset{n}$ are the monotone functions from $\intset{m}$ to
$\intset{n}$. It has been known for a long time that this category is closely
related to the notion of monoid, see~\cite{maclane:cwm}
or~\cite{lafont:boolean-circuits} for example. This result can be formulated as
follows:
\begin{property}
  \label{property:delta-presentation}
  The monoidal category $\Delta$ is presented by the equational theory of
  monoids $\eqth{M}$.
\end{property}
\noindent
In this sense, the simplicial category $\Delta$ impersonates the notion of
monoid. We extend here this result to more complex categories.




\paragraph{Multirelations.}
A \emph{multirelation} $R$ between two finite sets~$A$ and~$B$ is a function
$R:A\times B\to\N$. It can be equivalently be seen as a multiset whose elements
are in $A\times B$ or as a matrix over $\N$ (or as a span in the category of
finite sets).
If \hbox{$R_1:A\to B$} and \hbox{$R_2:B\to C$} are two multirelations, their
composition is defined by
\begin{mymath}
R_2\circ R_1(a,c)
\qeq
\sum_{b\in B}R_1(a,b)\times R_2(b,c)
\tdot
\end{mymath}
(this corresponds to the usual composition of matrices if we see $R_1$ and $R_2$
as matrices over $\N$).
The cardinal $\card{R}$ of a multirelation $R:A\to B$ is the sum of its
coefficients.
We write $\FMR$ for the monoidal theory of multirelations: its objects are
finite ordinals and morphisms are multirelations between them. It is a strict
symmetric monoidal category with the tensor product $\otimes$ defined on objects
and morphisms by disjoint union, and thus a monoidal theory.
In this category, the object~$\intset{1}$ can be equipped with the obvious
bicommutative bialgebra structure $(1,R^\mu,R^\eta,R^\delta,R^\varepsilon)$. For
example, $R^\mu:\intset{2}\to\intset{1}$ is the multirelation defined by
$R^\mu(i,0)=1$ for $i=0$ or $i=1$. We now show that the category of
multirelations is presented by the equational theory~$\eqth{B}$ of bicommutative
bialgebras. We write $\catquot{\mathcal{B}}$ for the monoidal category generated
by $\eqth{B}$.


For every morphism $\phi:m\to n$ in $\mathcal{B}$, where $m>0$, we define a
morphism $S\phi:m+1\to n$ by
\begin{myequation}
  \label{eq:ctx-S}
  S\phi\qeq \phi\circ(\gamma\otimes \id_{m-1})
\end{myequation}
The \emph{stairs} morphisms are defined inductively as either $\id_1$ or
$S\phi'$ where $\phi'$ is a stair, and are represented graphically as
\begin{mymath}
\sstrid{gsym}
\end{mymath}
The \emph{length} of a stairs is defined as $0$ if it is an identity, or as the
length of the stairs $\phi'$ plus one if it is of the form $S\phi'$.

\pagebreak
Morphisms~$\phi$ which are \emph{precanonical forms} are defined inductively:
$\phi$ is either empty or
\vspace\reduce
\[
\begin{array}{ccccc}
  \sstrid{bialg_nf_mu}
  &
  \tor
  &
  \sstrid{bialg_nf_eps}
  &
  \tor
  &
  \sstrid{bialg_nf_eta}
\end{array}
\vspace\reduce
\]
where $\phi'$ is a precanonical form. In this case, we write respectively $\phi$
as~$Z$ (the identity morphism~$\id_{\intset{0}}$), as $W_i\phi'$ (where $i$ is
the length of the stairs in the morphism), as~$E\phi'$ or
as~$H\phi'$. Precanonical forms~$\phi$ are thus the well formed morphisms (where
compositions respect types) generated by the following grammar:
\begin{myequation}
  \label{eq:precan-mrel-gram}
  \phi\qgramdef Z\gramor W_i\phi\gramor E\phi\gramor H\phi
\end{myequation}
\vspace{-2ex}




It is easy to see that every non-identity morphism~$\phi$ of a category
generated by a monoidal equational theory (such as~$\eqth{B}$) can be written as
$\phi=\phi'\circ(\intset{m}\otimes\pi\otimes\intset{n})$, where~$\pi$ is a
generator, thus allowing us to reason inductively about morphisms, by case
analysis on the integer $\intset{m}$ and on the generator~$\pi$. Using this
technique, we can prove that

\begin{lemma}
  \label{lemma:mrel-precan}
  Every morphism~$\phi$ of~$\mathcal{B}$ is equivalent (\wrt{} the relation
  $\equiv$) to a precanonical form.
\end{lemma}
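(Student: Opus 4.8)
The plan is to prove the lemma by induction on the size $\size{\phi}$ of the morphism $\phi$ (as defined in Section~\ref{sec:pres}), using the standard decomposition recalled just before the statement: every non-identity morphism $\phi$ of $\mathcal{B}$ can be written as
\begin{mymath}
\phi \qeq \phi' \circ (\id_{\intset{m}} \otimes \pi \otimes \id_{\intset{n}})
\end{mymath}
where $\pi$ is one of the generators $\mu,\eta,\delta,\varepsilon,\gamma$ of $\eqth{B}$ and $\size{\phi'} = \size{\phi} - 1$. The base case is immediate: a morphism of size $0$ is an identity $\id_{\intset{k}}$, and one checks that $\id_{\intset{k}}$ is equivalent to a precanonical form — namely the one built by applying $H$ to $Z$ a total of $k$ times (the precanonical form $H^k Z$, which is a parallel composite of $k$ wires once the bialgebra axioms are used to straighten it; for $k=0$ it is $Z$ itself).

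For the inductive step, I would apply the induction hypothesis to $\phi'$, obtaining a precanonical form $\psi$ with $\phi' \equiv \psi$, so that $\phi \equiv \psi \circ (\id_{\intset{m}} \otimes \pi \otimes \id_{\intset{n}})$. It then remains to show that post-composing a precanonical form with a single generator $\pi$ (padded by identities) can be rewritten, using the relations of $\eqth{B}$, into another precanonical form. This is done by a case analysis following the grammar~\eqref{eq:precan-mrel-gram}: for each shape of $\psi$ ($Z$, $W_i\psi'$, $E\psi'$, $H\psi'$), each choice of generator $\pi$, and each position $\intset{m}$ at which $\pi$ is inserted, one pushes $\pi$ past the outermost constructor of $\psi$ using the bialgebra axioms (associativity/commutativity of $\mu$ and $\delta$, unit/counit laws, the bialgebra compatibility square relating $\delta$ and $\mu$, the Yang–Baxter and involutivity laws for $\gamma$, and the naturality equations~\eqref{eq:monoid-nat}), reducing to a smaller instance of the same problem on $\psi'$. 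A secondary induction — on the structure (equivalently, the size) of $\psi$ — closes this sub-argument. The role of the \emph{stairs} morphisms and their \emph{length} is exactly to give canonical representatives for the ``permutation part'' of the diagram, so that when a $\mu$ or $\gamma$ is absorbed the bookkeeping of which input it acts on is recorded by the index $i$ in $W_i$.

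The main obstacle is this case analysis: it is entirely routine diagram-chasing with the bialgebra axioms, but it is long and repetitive, since one must handle every combination of outermost constructor of the precanonical form, generator being absorbed, and insertion position, and check that the rewriting terminates in a precanonical form. In particular the interaction of $\mu$ (being absorbed) with an outermost $\delta$-constructor ($H$) of $\psi$ requires the bialgebra compatibility axiom and produces several parallel copies, which must then be reorganized back into precanonical shape; and absorbing a $\gamma$ must be shown to amount precisely to incrementing a stairs length or permuting adjacent precanonical blocks. I would organize the proof so that most of these verifications are delegated (as the excerpt already announces, ``most of the proofs have been omitted or only sketched; detailed proofs can be found in \cite{mimram:phd}''), presenting in the paper only the decomposition principle, the induction skeleton, and one or two representative cases — for instance the $\mu$-past-$H$ case — leaving the remaining cases to the reader as analogous.
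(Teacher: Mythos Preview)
Your overall strategy is exactly the one the paper sketches: induction on $\size{\phi}$, decomposition $\phi=\phi'\circ(\id_{\intset{m}}\otimes\pi\otimes\id_{\intset{n}})$ with $\pi$ a generator, induction hypothesis on $\phi'$, and then a (tedious) case analysis on $\pi$, on $m$, and on the outermost constructor of the precanonical form, with a secondary induction on that form. So the approach matches.

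Two points of detail are off, however, and both stem from a misreading of the precanonical grammar~\eqref{eq:precan-mrel-gram}. First, $H$ is the \emph{unit} constructor (it prepends an~$\eta$), not a $\delta$-constructor; in fact $\delta$ does not appear as a building block of precanonical forms at all, so your ``$\mu$-past-$H$'' case is really $\mu$ meeting $\eta$ (a unit law), and the genuinely interesting case is absorbing a \emph{$\delta$} generator into a precanonical form that contains no explicit $\delta$. Second, your base case is mis-stated: $H^kZ$ has type $\intset{0}\to\intset{k}$, not $\intset{k}\to\intset{k}$, so it cannot represent $\id_{\intset{k}}$. The identity $\id_{\intset{k}}$ is equivalent to a precanonical form, but it is built from interleaved $W_0$, $E$ and $H$ constructors (compare the enumeration in the proof of Lemma~\ref{lemma:mrel-bij}); the only identity that is \emph{literally} a precanonical form is $\id_{\intset{0}}=Z$. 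Neither error breaks the argument, but you should fix both before writing out any of the representative cases.
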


The \emph{canonical forms} are precanonical forms which are normal \wrt{} the
following rewriting system:
\begin{myequation}
  \label{eq:mrel-cf-rs}
  \begin{array}{r@{\quad\Longrightarrow\quad}l}
    HW_i&W_{i+1}H\\
    HE&EH\\
    W_iW_j&W_jW_i\qquad\text{when $i<j$}
  \end{array}
\end{myequation}
when considered as words generated by the
grammar~\eqref{eq:precan-mrel-gram}. This rewriting system can easily be shown
to be terminating and confluent, and moreover two morphisms~$\phi$ and~$\psi$
such that $\phi\Longrightarrow\psi$ can be shown to be equivalent. By
Lemma~\ref{lemma:mrel-precan}, every morphism of~$\mathcal{B}$ is therefore
equivalent to an unique canonical form.

\begin{lemma}
  \label{lemma:mrel-bij}
  Every multirelation $R:m\to n$ is represented by an unique canonical form.
\end{lemma}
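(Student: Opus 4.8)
The plan is to establish a bijection between canonical forms of type $m \to n$ and multirelations $R : m \to n$ by exhibiting two mutually inverse constructions, using the fact (Lemma~\ref{lemma:mrel-precan} together with confluence and termination of the rewriting system~\eqref{eq:mrel-cf-rs}) that every morphism of $\mathcal{B}$ is equivalent to a unique canonical form. Since $\catquot{\mathcal{B}}$ is by construction the quotient of $\mathcal{B}$ by $\equiv$, and $\FMR$ is a model of $\eqth{B}$ (the bicommutative bialgebra structure on $\intset{1}$ gives a functor $\represents{-} : \catquot{\mathcal{B}} \to \FMR$), it suffices to show this functor is a bijection on hom-sets, which by uniqueness of canonical forms reduces to: the assignment sending a canonical form $\phi$ to the multirelation $\represents{\phi}$ it represents is a bijection.

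First I would read off, from the grammar~\eqref{eq:precan-mrel-gram} constrained by the rewriting system~\eqref{eq:mrel-cf-rs}, a combinatorial normal shape for canonical forms: an $H$ (copy $=$ $\delta$) can never be preceded by a $W_i$ or an $E$ in a canonical word, so a canonical form splits as a block of $H$'s, then a block of $W_i$'s with weakly (in fact, by $W_iW_j \Rightarrow W_jW_i$ for $i<j$, strictly after accounting for repeats) ordered indices, then a block of $E$'s ($\varepsilon$ $=$ counit), applied to $Z = \id_{\intset{0}}$. Tracking the effect of each generator on the represented matrix, I would show that such a canonical form of type $m \to n$ encodes exactly the data of $n$ columns, where the $W_i$ block chooses, for each output, which input it attaches to (the stair length $i$ selecting the source index via the symmetry), the $H$ block allows an input to be shared among several outputs, and the $E$ block discards inputs not used by any output — i.e. it encodes an arbitrary function-with-multiplicities $m \times n \to \N$. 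Concretely I would define the inverse map: given $R : m \to n$, build the canonical form by listing, for each $b \in \intset{n}$ in order and each $a \in \intset{m}$, exactly $R(a,b)$ copies of a wire from $a$ to $b$, realized as an $H$-fan-out composed with appropriately-indexed stairs $W_i$, and finish with $E$'s to erase unused inputs; then argue this word is already in normal form for~\eqref{eq:mrel-cf-rs}.

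The two directions to check are then: (i) $\represents{-}$ applied to the canonical form built from $R$ gives back $R$ — a direct computation of matrix coefficients, using the definitions of $R^\mu, R^\eta, R^\delta, R^\varepsilon$ and composition of multirelations as matrix multiplication; and (ii) the canonical form built from $\represents{\phi}$ is syntactically $\phi$, which follows because both are normal forms for the same multirelation and the construction in (i)+(ii) is set up so that the round trip is the identity on normal words. I expect the main obstacle to be the bookkeeping in direction~(i): correctly accounting for how the stair indices $i$ in the $W_i$'s interact with the fan-out structure of the $H$-block, since composing $W_i$ after a sequence of $H$'s shifts which "slot" of the duplicated inputs each output lands in, and one must verify the indices match up so that each $(a,b)$ pair gets multiplicity exactly $R(a,b)$. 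This is exactly the repetitive-but-routine verification the paper warns about; once the normal-form shape is pinned down precisely, each case is a finite computation, and the bialgebra axioms~$\eqth{B}$ guarantee that the intermediate rewrites used to reach normal form preserve the represented relation (already noted: $\phi \Rightarrow \psi$ implies $\phi \equiv \psi$, and $\represents{-}$ factors through $\equiv$).
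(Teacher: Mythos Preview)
Your high-level plan (exhibit mutually inverse maps between canonical words and multirelations) is the right one and is what the paper does, but your concrete description of canonical forms is incorrect, and the argument collapses at that point.

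You have the direction of the rewriting rules backwards. The rule $HW_i\Rightarrow W_{i+1}H$ eliminates the pattern ``$H$ \emph{followed} by $W_i$'', not ``$H$ preceded by $W_i$''; similarly for $HE\Rightarrow EH$. Thus in a normal word an $H$ can only be followed by another $H$ (or by $Z$), so the $H$'s sit at the \emph{right} end, not the left. Moreover there is no rule relating $E$ and $W_i$, so canonical forms do \emph{not} factor as a $W$-block followed by an $E$-block: the $E$'s and $W$'s genuinely interleave. The actual shape of the canonical form of $R:m\to n$ is
\[
  [W\text{'s for row }0]\;E\;[W\text{'s for row }1]\;E\;\cdots\;[W\text{'s for row }m{-}1]\;E\;H^{n}\;Z,
\]
with exactly $m$ letters $E$ (one per input row), exactly $n$ letters $H$ (one per output column), and each $W$-block listing the nonzero entries of that row with multiplicity, in weakly decreasing column index. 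You also misidentify the generators: $H$ is built from the \emph{unit} $\eta$ (it creates a fresh output wire), not from $\delta$; the duplication of input $0$ is hidden inside each $W_i$, which is why $W_i$ preserves the type $m\to n$ while incrementing the entry $(0,i)$.

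The paper obtains this shape directly by a double induction on $m$ and on $|R|$: if $m=0$ output $H^nZ$; if row $0$ is zero output $E$ and recurse on the remaining $m{-}1$ rows; otherwise output $W_k$ for the greatest $k$ with $R(0,k)>0$, decrement that entry, and recurse. One then checks this word is already normal for~\eqref{eq:mrel-cf-rs} and that any precanonical representative of $R$ must begin with the same letter, giving uniqueness. Your block-decomposition idea could be repaired once the correct normal shape is in place, but as written the inverse map you propose (fan-out via $H$, then a $W$-block, then an $E$-block) does not produce canonical words and cannot be inverse to $\represents{-}$.
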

\begin{proof}
  We prove by induction on~$m$ and on the cardinal $\card{R}$ of $R$ that~$R$ is
  represented by a precanonical form.
  \begin{enumerate}
  \item If $m=0$ then $R$ is represented by the canonical form $H\ldots HZ$
    (with~$n$ occurrences of $H$).
  \item If $m>0$ and for every $j<n$, $R(0,j)=0$ then $R$ is of the form
    $R=R^\varepsilon\otimes R'$ and $R$ is necessarily represented by a
    precanonical form $E\phi'$ where $\phi'$ is a precanonical form representing
    $R':(m-1)\to n$, obtained by induction hypothesis.
  \item Otherwise, $R$ is necessarily represented by a precanonical form of the
    form $W_k\phi'$, where $k$ is the greatest index such that $R(0,k)>0$ and
    $\phi'$ is a precanonical form, obtained by induction, representing the
    relation $R':m\to n$ defined by
    \begin{mymath}
    R'(i,j)=
    \begin{cases}
      R(i,j)-1&\text{if $i=0$ and $j=k$,}\\
      R(i,j)&\text{otherwise.}
    \end{cases}
    \end{mymath}
  \end{enumerate}
  \vspace{-3ex} It can be moreover shown that every precanonical form
  representing~$R$ corresponds to such an enumeration of the coefficients
  of~$R$, that the precanonical form constructed by the proof above is
  canonical, and that it is the only way to obtain a precanonical form
  representing~$R$.
\end{proof}


Finally, we can deduce that
\begin{theorem}
  The category $\FMR$ of multirelations is presented by the equational theory
  $\eqth{B}$ of bicommutative bialgebras.
\end{theorem}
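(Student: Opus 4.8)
The plan is to apply the two-step methodology recalled in Section~\ref{sec:pres}: first exhibit a strict monoidal functor $\represents{-}\colon\catquot{\mathcal{B}}\to\FMR$, witnessing that $\FMR$ is a model of $\eqth{B}$, and then show that this functor is full and faithful, so that $\FMR$ is in fact a fully-complete model of $\eqth{B}$; the resulting monoidal equivalence $\catquot{\mathcal{B}}\simeq\FMR$ is exactly the statement that $\eqth{B}$ presents $\FMR$.

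For the first step I would send the generating object $1$ of $\eqth{B}$ to the ordinal $\intset{1}$, equipped with the bicommutative bialgebra structure $(1,R^\mu,R^\eta,R^\delta,R^\varepsilon)$ and with the canonical symmetry of the symmetric monoidal category $\FMR$. Since $\FMR$ is symmetric monoidal, the required symmetry and naturality equations, including \eqref{eq:monoid-nat}, hold automatically, so it remains only to verify by direct computation with matrices over $\N$ that $(R^\mu,R^\eta)$ is a commutative monoid, that $(R^\delta,R^\varepsilon)$ is a cocommutative comonoid, and that the four bialgebra coherence equations are satisfied. By freeness of $\mathcal{B}$ on the signature of $\eqth{B}$, this choice determines a strict monoidal functor $\mathcal{B}\to\FMR$, and since it sends each relation of $\eqth{B}$ to an equality in $\FMR$ it factors through the quotient, producing $\represents{-}$. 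This verification is routine, if repetitive.

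The substance is in the second step, where canonical forms serve as a common set of representatives on both sides. On the syntactic side, Lemma~\ref{lemma:mrel-precan} together with the termination of the rewriting system~\eqref{eq:mrel-cf-rs} --- each of whose steps preserves $\equiv$ --- shows that every morphism of $\catquot{\mathcal{B}}$ is equal to some canonical form, and confluence ensures that this canonical form is unique, so that canonical forms of type $m\to n$ are in bijection with $\catquot{\mathcal{B}}(m,n)$. On the semantic side, Lemma~\ref{lemma:mrel-bij} says precisely that $\represents{-}$ restricts to a bijection between canonical forms of type $m\to n$ and multirelations $\intset{m}\to\intset{n}$. Composing these two bijections shows that $\represents{-}$ is a bijection on every hom-set: it is full because each multirelation is represented by its own canonical form, and faithful because $\represents{f}=\represents{g}$ forces $f$ and $g$ to have the same canonical form, hence $f=g$. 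Since $\represents{-}$ is moreover a bijection $n\mapsto\intset{n}$ on objects, it is an isomorphism of monoidal categories, in particular a monoidal equivalence, whence $\eqth{B}$ presents $\FMR$.

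I expect the real difficulty to lie upstream, in the two lemmas this argument rests on: establishing that the rewriting system~\eqref{eq:mrel-cf-rs} is confluent, so that canonical forms are unambiguously defined, and above all the ``moreover'' part of Lemma~\ref{lemma:mrel-bij} --- that the inductive enumeration of the coefficients of a multirelation used to build its precanonical form is in fact forced, so that each multirelation is represented by exactly one canonical form. Once these are available, the theorem follows purely formally by the assembly above.
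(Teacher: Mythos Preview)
Your proposal is correct and follows exactly the paper's two-step methodology and canonical-form machinery, invoking Lemma~\ref{lemma:mrel-precan}, the rewriting system~\eqref{eq:mrel-cf-rs}, and Lemma~\ref{lemma:mrel-bij} in the same roles. One minor imprecision: confluence of~\eqref{eq:mrel-cf-rs} only yields a unique normal form per \emph{precanonical form}, not per $\equiv$-class, so it does not by itself give the bijection between canonical forms and $\catquot{\mathcal{B}}(m,n)$ --- but this does not damage your argument, since your faithfulness step actually runs directly through Lemma~\ref{lemma:mrel-bij} (and the paper's own exposition elides the same point).
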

\vspace{-2ex}

\paragraph{Relations.}
The monoidal category $\Rel$ has finite ordinals as objects and relations as
morphisms. This category can be obtained from $\FMR$ by quotienting the
morphisms by the equivalence relation $\sim$ on multirelations such that two
multirelations $R_1,R_2:m\to n$ are equivalent when they have the same null
coefficients. We can therefore easily adapt the previous presentation to show
that

\begin{theorem}
  The category $\Rel$ of relations is presented by the equational theory
  $\eqth{R}$ of \emph{qualitative} bicommutative bialgebras.
\end{theorem}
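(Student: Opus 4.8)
The plan is to mimic, step by step, the proof that $\eqth{B}$ presents $\FMR$, inserting the one extra relation of $\eqth{R}$ --- the qualitative equation, which on the object $\intset{1}$ amounts to $\mu\circ\delta=\id_{\intset{1}}$ --- into the canonical-form analysis. First I would observe that $\Rel$ is a model of $\eqth{R}$: the object $\intset{1}$ of $\FMR$ carries a bicommutative bialgebra, its image in $\Rel$ is still one, and there $\mu\circ\delta$ and $\id_{\intset{1}}$ are both the relation $\{(0,0)\}\subseteq\intset{1}\times\intset{1}$, so the qualitative equation holds. This yields an identity-on-objects strict monoidal functor $\represents{-}\colon\catquot{\mathcal{R}}\to\Rel$, where $\mathcal{R}$ is the free monoidal category on the generators of $\eqth{R}$ (the same generators as $\eqth{B}$) and $\catquot{\mathcal{R}}$ its quotient by the congruence generated by $\eqth{R}$. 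I would then note that $\represents{-}$ is automatically full: the equivalence $\catquot{\mathcal{B}}\to\FMR$ composed with the quotient $\FMR\to\Rel$ is full (both are), it sends the generators to the chosen bialgebra structure, hence it factors through the full quotient $\catquot{\mathcal{B}}\to\catquot{\mathcal{R}}$ followed by $\represents{-}$, which forces $\represents{-}$ to be full. So the whole problem reduces to proving that $\represents{-}$ is faithful.

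For faithfulness I would reuse the precanonical forms of~\eqref{eq:precan-mrel-gram} and Lemma~\ref{lemma:mrel-precan}, which still apply since the congruence generated by $\eqth{B}$ is contained in that generated by $\eqth{R}$: every morphism of $\mathcal{R}$ is $\eqth{R}$-congruent to a precanonical form. I would then define the $\Rel$-\emph{canonical forms} to be the precanonical forms that are normal for the rewriting system~\eqref{eq:mrel-cf-rs} augmented with the rule $W_iW_i\Longrightarrow W_i$. Three things must be checked: that this enlarged system is terminating (the new rule strictly decreases the word length, the old ones were already terminating) and confluent (beyond the old critical pairs one examines the overlaps of $W_iW_i$ with $W_iW_j$, with $W_jW_i$ and with $HW_i$, all of which close --- a routine but repetitive verification); and that each rewriting step is sound for $\eqth{R}$-congruence, which for the steps of~\eqref{eq:mrel-cf-rs} is inherited from the $\eqth{B}$ case. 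The remaining point, which I expect to be the real obstacle, is to show $W_iW_i\phi'\equiv W_i\phi'$: here one must use the Yang--Baxter equation and the symmetry/multiplication compatibilities~\eqref{eq:monoid-nat} to slide the stairs past one another and bring the two copies of $\mu$ into a configuration exhibiting a factor $\mu\circ\delta$, which the qualitative equation then rewrites to an identity. Granting this, every morphism of $\mathcal{R}$ is $\eqth{R}$-congruent to a $\Rel$-canonical form, unique by termination and confluence.

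Finally I would redo Lemma~\ref{lemma:mrel-bij} for $\Rel$: by induction on $m$ and on $\card{R}$, every relation $R\colon m\to n$ is represented by a $\Rel$-canonical form, the argument being that of Lemma~\ref{lemma:mrel-bij} except in the third case, where $k$ is now the greatest index with $R(0,k)=1$ and $R'$ is obtained from $R$ by setting this single coefficient to $0$ rather than decrementing it; then $\card{R'}<\card{R}$, the induction hypothesis yields a $\Rel$-canonical $\phi'$, and $W_k\phi'$ introduces no forbidden repetition $W_kW_k$, so it is $\Rel$-canonical. Conversely, decoding a $\Rel$-canonical form along the grammar~\eqref{eq:precan-mrel-gram} produces a well-defined relation, and the two constructions are mutually inverse, so the $\Rel$-canonical forms of a given type are in bijection with the relations of that type. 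Since each $\eqth{R}$-congruence class contains exactly one $\Rel$-canonical form and $\represents{-}$ maps it to the corresponding relation, $\represents{-}$ is bijective on hom-sets, hence faithful; being also full and the identity on objects, it is a monoidal equivalence, which is precisely the statement that $\eqth{R}$ presents $\Rel$.
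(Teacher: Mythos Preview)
Your proposal is correct and follows essentially the same route as the paper: keep the precanonical forms of the $\eqth{B}$ case unchanged, enlarge the rewriting system~\eqref{eq:mrel-cf-rs} by the single rule $W_iW_i\Longrightarrow W_i$, and redo the bijection of Lemma~\ref{lemma:mrel-bij} with $0/1$ coefficients in place of natural-number coefficients. Your write-up is more explicit than the paper about the side conditions (termination, confluence, the derivation of $W_iW_i\equiv W_i$ from the qualitative law, and the fullness argument via the factorisation through $\catquot{\mathcal{B}}$), but these are exactly the checks the paper leaves implicit.
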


\noindent
In particular, precanonical forms are the same and canonical forms are defined
by adding the rule $W_iW_i\Longrightarrow W_i$ to the rewriting
system~\eqref{eq:mrel-cf-rs}.

\section{A game semantics for first-order causality}
\label{section:games-strategies}

Suppose that we are given a fixed first-order language~$\mathcal{L}$, that is a
set of proposition symbols~$P,Q,\ldots$ with given arities, a set of function
symbols~$f,g,\ldots$ with given arities and a set of first-order
variables~$x,y,\ldots$. \emph{Terms}~$t$ and \emph{formulas}~$A$ are
respectively generated by the following grammars:
\begin{mymath}
\begin{array}{rcl}
  t&\qgramdef&x\gramor f(t,\ldots,t)
  \\
  A&\qgramdef&P(t,\ldots,t)\gramor\qforall{x}{A}\gramor\qexists{x}{A}
\end{array}
\end{mymath}
(we only consider formulas without connectives here). We suppose that
application of propositions and functions always respect arities.
Formulas are considered modulo renaming of bound variables and substitution
$A[t/x]$ of a free variable $x$ by a term $t$ in a formula $A$ is defined as
usual, avoiding capture of variables. In the following, we sometimes omit the
arguments of propositions when they are clear from the context. We also suppose
given a set~$\axioms$ of \emph{axioms}, that is pairs of propositions, which is
reflexive, transitive and closed under substitution. The logic associated to
these formulas has the following inference rules:
\[
\begin{array}{c@{\qquad}c}
  \inferrule{A[t/x]\vdash B}{\qforall x A\vdash B}{\lrule{$\forall$-L}}
  &
  \inferrule{A\vdash B}{A\vdash \qforall x B}{\lrule{$\forall$-R}}
  \displaybreak[0]
  \\
  &
  \text{(with $x$ not free in $A$)}
  \displaybreak[0]
  \\[2ex]
  \inferrule{A\vdash B}{\qexists x A\vdash B}{\lrule{$\exists$-L}}
  &
  \inferrule{A\vdash B[t/x]}{A\vdash \qexists x B}{\lrule{$\exists$-R}}
  \displaybreak[0]
  \\
  \text{(with $x$ not free in $B$)}
  &
  \displaybreak[0]
  \\[2ex]
  \inferrule{(P,Q)\in\axioms}{P\vdash Q}{\lrule{Ax}}
  &
  \inferrule{A\vdash B\\B\vdash C}{A\vdash C}{\lrule{Cut}}
\end{array}
\]

\paragraph{Games and strategies.}
\label{subsection:games-strategies}
Games are defined as follows.

\begin{definition}
  A \emph{game} $A=(\moves{A},\lambda_A,\leq_A)$ consists of a set of
  moves~$\moves{A}$, a polarization function $\lambda_A$ from~$\moves{A}$
  to~$\{-1,+1\}$ which to every move $m$ associates its \emph{polarity}, and a
  well-founded partial order $\leq_A$ on moves, called \emph{causality} or
  \emph{justification}.
  A move $m$ is said to be a Proponent move when $\lambda_A(m)=+1$ and an
  Opponent move otherwise.
\end{definition}

If $A$ and $B$ are two games, their tensor product $A\otimes B$ is defined by
disjoint union on moves, polarities and causality,
the opposite game $A^*$ of the game $A$ is obtained from~$A$ by inverting
polarities
and the arrow game $A\llimp B$ is defined by $A\llimp B=A^*\otimes B$.
A game $A$ is \emph{filiform} when the associated partial order is total (we are
mostly interested in such games in the following).


\begin{definition}
  \label{def:strategy}
  A \emph{strategy} $\sigma$ on a game $A$ is a partial order $\leq_\sigma$ on
  the moves of $A$ which satisfies the two following properties:
  \begin{enumerate}
  \item \emph{polarity}: for every pair of moves $m,n\in\moves{A}$ such that
    \hbox{$m<_\sigma n$}, we have $\lambda_A(m)=-1\tand\lambda_A(n)=+1$.
  \item \emph{acyclicity}: the partial order $\leq_\sigma$ is compatible with
    the partial order of the game, in the sense that the transitive closure of
    their union is still a partial order (\ie is acyclic).
  \end{enumerate}
\end{definition}
The \emph{size} $\size{A}$ of a game $A$ is the cardinal of $\moves{A}$ and the
\emph{size} $\size{\sigma}$ of a strategy $\sigma:A$ is the cardinal of the
relation~$\leq_\sigma$. If $\sigma:A\llimp B$ and $\tau:B\llimp C$ are two
strategies, their composite $\tau\circ\sigma:A\llimp C$ is the partial order
$\leq_{\tau\circ\sigma}$ on the moves of $A\llimp C$, defined as the restriction
of the transitive closure of the union~$\leq_\sigma\cup\leq_\tau$ of the partial
orders~$\leq_\sigma$ and~$\leq_\tau$ (considered as relations). The identity
strategy $\id_A:A\llimp A$ on a game $A$ is the strategy such that for every
move $m$ of $A$ we have $m_L\leq_{\id_A}m_R$ if $\lambda_A(m)=-1$ and
$m_R\leq_{\id_A}m_L$ if \hbox{$\lambda_A(m)=+1$}, where $m_L$ (\resp $m_R$) is
the instance of a move $m$ in the left-hand side (\resp right-hand side) copy
of~$A$.

Since composition of strategies is defined in the category of relations, we
still have to check that the composite of two strategies $\sigma$ and $\tau$ is
actually a strategy. The preservation by composition of the polarity condition
is immediate. However, proving that the relation $\leq_{\tau\circ\sigma}$
corresponding to the composite strategy is acyclic is more difficult: a direct
proof of this property is combinatorial, lengthy and requires global reasoning
about strategies. For now, we define the category~$\Games$ as the category whose
objects are finite filiform games, and whose sets of morphisms are the smallest
sets containing the strategies on the game $A\llimp B$ as morphisms between two
objects $A$ and $B$ and are moreover closed under composition. We will deduce at
the end of the section, from its presentation, that strategies are in fact the
only morphisms of this category.

If $A$ and $B$ are two games, the game $A\before{}B$ (to be read $A$
\emph{before} $B$) is the game defined as $A\lltens B$ on moves and polarities
and $\leq_{A\before{}B}$ is the transitive closure of the relation
\begin{mymath}
  \leq_{A\lltens B}\cup\;\setof{(a,b)\tq a\in M_A\tand b\in M_B}
\end{mymath}
This operation is extended as a bifunctor on strategies as follows. If
$\sigma:A\to B$ and $\tau:C\to D$ are two strategies, the strategy
$\sigma\before{}\tau:A\before{}C\to B\before{}D$ is defined as the relation
$\leq_{\sigma\before{}\tau}=\leq_\sigma\uplus\leq_\tau$.
This bifunctor induces a monoidal structure $(\Games,\before{},I)$ on the
category $\Games$, where $I$ denotes the empty game.

We write $O$ for a game with only one Opponent move and $P$ for a game with only
one Proponent move. It can be easily remarked that finite filiform games $A$ are
generated by the following grammar
\begin{mymath}
A\qqgramdef I\gramor O\before{}A\gramor P\before{}A
\end{mymath}
A strategy $\sigma:A\to B$ is represented graphically by drawing a line from a
move $m$ to a move $n$ whenever $m\leq_\sigma n$. For example, the strategy
$\mu^P:P\before{}P\to P$
\begin{mymath}
\begin{tikzpicture}[xscale=0.40,yscale=0.40]
\useasboundingbox (-0.5,-0.5) rectangle (6.5,2.5);
\draw[] (1.00,2.00) -- (1.05,2.01) -- (1.10,2.01) -- (1.15,2.02) -- (1.20,2.02) -- (1.25,2.03) -- (1.30,2.03) -- (1.35,2.04) -- (1.40,2.04) -- (1.45,2.04) -- (1.50,2.05) -- (1.55,2.05) -- (1.60,2.05) -- (1.65,2.05) -- (1.70,2.04) -- (1.75,2.04) -- (1.80,2.03) -- (1.85,2.03) -- (1.90,2.02) -- (1.95,2.01) -- (2.00,2.00);
\draw[] (2.00,2.00) -- (2.06,1.98) -- (2.13,1.96) -- (2.19,1.94) -- (2.25,1.92) -- (2.32,1.89) -- (2.38,1.86) -- (2.44,1.83) -- (2.50,1.79) -- (2.55,1.75) -- (2.61,1.71) -- (2.66,1.67) -- (2.71,1.63) -- (2.75,1.58) -- (2.80,1.53) -- (2.84,1.48) -- (2.87,1.43) -- (2.90,1.37) -- (2.93,1.32) -- (2.96,1.26) -- (2.97,1.20);
\draw[] (2.97,1.20) -- (2.98,1.19) -- (2.98,1.18) -- (2.98,1.17) -- (2.98,1.16) -- (2.99,1.15) -- (2.99,1.14) -- (2.99,1.13) -- (2.99,1.12) -- (2.99,1.11) -- (2.99,1.10) -- (2.99,1.09) -- (3.00,1.08) -- (3.00,1.07) -- (3.00,1.06) -- (3.00,1.05) -- (3.00,1.04) -- (3.00,1.03) -- (3.00,1.02) -- (3.00,1.01) -- (3.00,1.00);
\draw[] (3.00,1.00) -- (3.00,0.99) -- (3.00,0.98) -- (3.00,0.97) -- (3.00,0.96) -- (3.00,0.95) -- (3.00,0.94) -- (3.00,0.93) -- (3.00,0.92) -- (2.99,0.91) -- (2.99,0.90) -- (2.99,0.89) -- (2.99,0.88) -- (2.99,0.87) -- (2.99,0.86) -- (2.99,0.85) -- (2.98,0.84) -- (2.98,0.83) -- (2.98,0.82) -- (2.98,0.81) -- (2.97,0.80);
\draw[] (2.97,0.80) -- (2.96,0.74) -- (2.93,0.68) -- (2.90,0.63) -- (2.87,0.57) -- (2.84,0.52) -- (2.80,0.47) -- (2.75,0.42) -- (2.71,0.37) -- (2.66,0.33) -- (2.61,0.29) -- (2.55,0.25) -- (2.50,0.21) -- (2.44,0.17) -- (2.38,0.14) -- (2.32,0.11) -- (2.25,0.08) -- (2.19,0.06) -- (2.13,0.04) -- (2.06,0.02) -- (2.00,0.00);
\draw[] (2.00,0.00) -- (1.95,-0.01) -- (1.90,-0.02) -- (1.85,-0.03) -- (1.80,-0.03) -- (1.75,-0.04) -- (1.70,-0.04) -- (1.65,-0.05) -- (1.60,-0.05) -- (1.55,-0.05) -- (1.50,-0.05) -- (1.45,-0.04) -- (1.40,-0.04) -- (1.35,-0.04) -- (1.30,-0.03) -- (1.25,-0.03) -- (1.20,-0.02) -- (1.15,-0.02) -- (1.10,-0.01) -- (1.05,-0.01) -- (1.00,0.00);
\draw (1.40,0.06) -- (1.70,-0.05);
\draw (1.39,-0.14) -- (1.70,-0.05);
\draw (1.35,1.94) -- (1.65,2.05);
\draw (1.35,2.14) -- (1.65,2.05);
\draw[] (3.00,1.00) -- (3.05,1.00) -- (3.10,1.00) -- (3.15,1.00) -- (3.20,1.00) -- (3.25,1.00) -- (3.30,1.00) -- (3.35,1.00) -- (3.40,1.00) -- (3.45,1.00) -- (3.50,1.00) -- (3.55,1.00) -- (3.60,1.00) -- (3.65,1.00) -- (3.70,1.00) -- (3.75,1.00) -- (3.80,1.00) -- (3.85,1.00) -- (3.90,1.00) -- (3.95,1.00) -- (4.00,1.00);
\draw[] (4.00,1.00) -- (4.05,1.00) -- (4.10,1.00) -- (4.15,1.00) -- (4.20,1.00) -- (4.25,1.00) -- (4.30,1.00) -- (4.35,1.00) -- (4.40,1.00) -- (4.45,1.00) -- (4.50,1.00) -- (4.55,1.00) -- (4.60,1.00) -- (4.65,1.00) -- (4.70,1.00) -- (4.75,1.00) -- (4.80,1.00) -- (4.85,1.00) -- (4.90,1.00) -- (4.95,1.00) -- (5.00,1.00);
\draw (4.35,1.10) -- (4.65,1.00);
\draw (4.35,0.90) -- (4.65,1.00);
\draw (0.50,2.00) node{$P$};
\draw (5.50,1.00) node{$P$};
\draw (0.50,0.00) node{$P$};
\end{tikzpicture}
\end{mymath}
is the strategy on the game $(O\before{}O)\otimes P$ in which both Opponent move
of the left-hand game justify the Proponent move of the right-hand game. When a
move does not justify (or is not justified by) any other move, we draw a line
ended by a small circle. For example, the strategy \hbox{$\varepsilon^P:P\to
  I$}, drawn as
\begin{mymath}
\begin{tikzpicture}[xscale=0.40,yscale=0.40]
\useasboundingbox (-0.5,-0.5) rectangle (6.5,0.5);
\draw[,] (3.00,0.00) -- (1.00,0.00);
\draw (1.85,0.10) -- (2.15,0.00);
\draw (1.85,-0.10) -- (2.15,0.00);
\filldraw[fill=white] (3.00,0.00) ellipse (0.14cm and 0.14cm);
\draw (0.50,0.00) node{$P$};
\end{tikzpicture}

\end{mymath}
is the unique strategy from $P$ to the terminal object $I$. With these
conventions, we introduce notations for some morphisms which are depicted in
Figure~\ref{fig:inno-gen}.

\begin{figure}[!t]
  \vspace{-0.7ex}
  \fbox{
    \vbox{
      \begin{mymath}
        \hspace{-1.5ex}
        \begin{array}{c}
          \begin{array}{r@{\qcolon}l@{\quad}r@{\qcolon}l}
            \mu^O&O\before{} O\to O&\mu^P&P\before{} P\to P\\
            \eta^O&I\to O&\eta^P&I\to P\\
            \delta^O&O\to O\before{} O&\delta^P&P\to P\before{} P\\
            \varepsilon^O&O\to I&\varepsilon^P&P\to I\\
            \gamma^O&O\before{} O\to O\before{} O&\gamma^P&P\before{} P\to P\before{} P\\
            \eta^{OP}&I\to O\before{} P&\varepsilon^{OP}&P\before{} O\to I\\
          \end{array}
          \\
          \begin{array}{rcl}
            \gamma^{OP}&\colon&P\before{} O\to O\before{} P
          \end{array}
        \end{array}
      \end{mymath}
      respectively drawn as
      \begin{mymath}
        \begin{array}{c}
          \begin{array}{c@{\qquad}c}
            \begin{tikzpicture}[xscale=0.40,yscale=0.40]
\useasboundingbox (-0.5,-0.5) rectangle (6.5,2.5);
\draw[] (1.00,2.00) -- (1.05,2.01) -- (1.10,2.01) -- (1.15,2.02) -- (1.20,2.02) -- (1.25,2.03) -- (1.30,2.03) -- (1.35,2.04) -- (1.40,2.04) -- (1.45,2.04) -- (1.50,2.05) -- (1.55,2.05) -- (1.60,2.05) -- (1.65,2.05) -- (1.70,2.04) -- (1.75,2.04) -- (1.80,2.03) -- (1.85,2.03) -- (1.90,2.02) -- (1.95,2.01) -- (2.00,2.00);
\draw[] (2.00,2.00) -- (2.06,1.98) -- (2.13,1.96) -- (2.19,1.94) -- (2.25,1.92) -- (2.32,1.89) -- (2.38,1.86) -- (2.44,1.83) -- (2.50,1.79) -- (2.55,1.75) -- (2.61,1.71) -- (2.66,1.67) -- (2.71,1.63) -- (2.75,1.58) -- (2.80,1.53) -- (2.84,1.48) -- (2.87,1.43) -- (2.90,1.37) -- (2.93,1.32) -- (2.96,1.26) -- (2.97,1.20);
\draw[] (2.97,1.20) -- (2.98,1.19) -- (2.98,1.18) -- (2.98,1.17) -- (2.98,1.16) -- (2.99,1.15) -- (2.99,1.14) -- (2.99,1.13) -- (2.99,1.12) -- (2.99,1.11) -- (2.99,1.10) -- (2.99,1.09) -- (3.00,1.08) -- (3.00,1.07) -- (3.00,1.06) -- (3.00,1.05) -- (3.00,1.04) -- (3.00,1.03) -- (3.00,1.02) -- (3.00,1.01) -- (3.00,1.00);
\draw[] (3.00,1.00) -- (3.00,0.99) -- (3.00,0.98) -- (3.00,0.97) -- (3.00,0.96) -- (3.00,0.95) -- (3.00,0.94) -- (3.00,0.93) -- (3.00,0.92) -- (2.99,0.91) -- (2.99,0.90) -- (2.99,0.89) -- (2.99,0.88) -- (2.99,0.87) -- (2.99,0.86) -- (2.99,0.85) -- (2.98,0.84) -- (2.98,0.83) -- (2.98,0.82) -- (2.98,0.81) -- (2.97,0.80);
\draw[] (2.97,0.80) -- (2.96,0.74) -- (2.93,0.68) -- (2.90,0.63) -- (2.87,0.57) -- (2.84,0.52) -- (2.80,0.47) -- (2.75,0.42) -- (2.71,0.37) -- (2.66,0.33) -- (2.61,0.29) -- (2.55,0.25) -- (2.50,0.21) -- (2.44,0.17) -- (2.38,0.14) -- (2.32,0.11) -- (2.25,0.08) -- (2.19,0.06) -- (2.13,0.04) -- (2.06,0.02) -- (2.00,0.00);
\draw[] (2.00,0.00) -- (1.95,-0.01) -- (1.90,-0.02) -- (1.85,-0.03) -- (1.80,-0.03) -- (1.75,-0.04) -- (1.70,-0.04) -- (1.65,-0.05) -- (1.60,-0.05) -- (1.55,-0.05) -- (1.50,-0.05) -- (1.45,-0.04) -- (1.40,-0.04) -- (1.35,-0.04) -- (1.30,-0.03) -- (1.25,-0.03) -- (1.20,-0.02) -- (1.15,-0.02) -- (1.10,-0.01) -- (1.05,-0.01) -- (1.00,0.00);
\draw (1.70,0.05) -- (1.40,-0.04);
\draw (1.69,-0.15) -- (1.40,-0.04);
\draw (1.65,1.95) -- (1.35,2.04);
\draw (1.65,2.15) -- (1.35,2.04);
\draw[] (3.00,1.00) -- (3.05,1.00) -- (3.10,1.00) -- (3.15,1.00) -- (3.20,1.00) -- (3.25,1.00) -- (3.30,1.00) -- (3.35,1.00) -- (3.40,1.00) -- (3.45,1.00) -- (3.50,1.00) -- (3.55,1.00) -- (3.60,1.00) -- (3.65,1.00) -- (3.70,1.00) -- (3.75,1.00) -- (3.80,1.00) -- (3.85,1.00) -- (3.90,1.00) -- (3.95,1.00) -- (4.00,1.00);
\draw[] (4.00,1.00) -- (4.05,1.00) -- (4.10,1.00) -- (4.15,1.00) -- (4.20,1.00) -- (4.25,1.00) -- (4.30,1.00) -- (4.35,1.00) -- (4.40,1.00) -- (4.45,1.00) -- (4.50,1.00) -- (4.55,1.00) -- (4.60,1.00) -- (4.65,1.00) -- (4.70,1.00) -- (4.75,1.00) -- (4.80,1.00) -- (4.85,1.00) -- (4.90,1.00) -- (4.95,1.00) -- (5.00,1.00);
\draw (4.65,1.10) -- (4.35,1.00);
\draw (4.65,0.90) -- (4.35,1.00);
\draw (0.50,2.00) node{$O$};
\draw (5.50,1.00) node{$O$};
\draw (0.50,0.00) node{$O$};
\end{tikzpicture}&\begin{tikzpicture}[xscale=0.40,yscale=0.40]
\useasboundingbox (-0.5,-0.5) rectangle (6.5,2.5);
\draw[] (1.00,2.00) -- (1.05,2.01) -- (1.10,2.01) -- (1.15,2.02) -- (1.20,2.02) -- (1.25,2.03) -- (1.30,2.03) -- (1.35,2.04) -- (1.40,2.04) -- (1.45,2.04) -- (1.50,2.05) -- (1.55,2.05) -- (1.60,2.05) -- (1.65,2.05) -- (1.70,2.04) -- (1.75,2.04) -- (1.80,2.03) -- (1.85,2.03) -- (1.90,2.02) -- (1.95,2.01) -- (2.00,2.00);
\draw[] (2.00,2.00) -- (2.06,1.98) -- (2.13,1.96) -- (2.19,1.94) -- (2.25,1.92) -- (2.32,1.89) -- (2.38,1.86) -- (2.44,1.83) -- (2.50,1.79) -- (2.55,1.75) -- (2.61,1.71) -- (2.66,1.67) -- (2.71,1.63) -- (2.75,1.58) -- (2.80,1.53) -- (2.84,1.48) -- (2.87,1.43) -- (2.90,1.37) -- (2.93,1.32) -- (2.96,1.26) -- (2.97,1.20);
\draw[] (2.97,1.20) -- (2.98,1.19) -- (2.98,1.18) -- (2.98,1.17) -- (2.98,1.16) -- (2.99,1.15) -- (2.99,1.14) -- (2.99,1.13) -- (2.99,1.12) -- (2.99,1.11) -- (2.99,1.10) -- (2.99,1.09) -- (3.00,1.08) -- (3.00,1.07) -- (3.00,1.06) -- (3.00,1.05) -- (3.00,1.04) -- (3.00,1.03) -- (3.00,1.02) -- (3.00,1.01) -- (3.00,1.00);
\draw[] (3.00,1.00) -- (3.00,0.99) -- (3.00,0.98) -- (3.00,0.97) -- (3.00,0.96) -- (3.00,0.95) -- (3.00,0.94) -- (3.00,0.93) -- (3.00,0.92) -- (2.99,0.91) -- (2.99,0.90) -- (2.99,0.89) -- (2.99,0.88) -- (2.99,0.87) -- (2.99,0.86) -- (2.99,0.85) -- (2.98,0.84) -- (2.98,0.83) -- (2.98,0.82) -- (2.98,0.81) -- (2.97,0.80);
\draw[] (2.97,0.80) -- (2.96,0.74) -- (2.93,0.68) -- (2.90,0.63) -- (2.87,0.57) -- (2.84,0.52) -- (2.80,0.47) -- (2.75,0.42) -- (2.71,0.37) -- (2.66,0.33) -- (2.61,0.29) -- (2.55,0.25) -- (2.50,0.21) -- (2.44,0.17) -- (2.38,0.14) -- (2.32,0.11) -- (2.25,0.08) -- (2.19,0.06) -- (2.13,0.04) -- (2.06,0.02) -- (2.00,0.00);
\draw[] (2.00,0.00) -- (1.95,-0.01) -- (1.90,-0.02) -- (1.85,-0.03) -- (1.80,-0.03) -- (1.75,-0.04) -- (1.70,-0.04) -- (1.65,-0.05) -- (1.60,-0.05) -- (1.55,-0.05) -- (1.50,-0.05) -- (1.45,-0.04) -- (1.40,-0.04) -- (1.35,-0.04) -- (1.30,-0.03) -- (1.25,-0.03) -- (1.20,-0.02) -- (1.15,-0.02) -- (1.10,-0.01) -- (1.05,-0.01) -- (1.00,0.00);
\draw (1.40,0.06) -- (1.70,-0.05);
\draw (1.39,-0.14) -- (1.70,-0.05);
\draw (1.35,1.94) -- (1.65,2.05);
\draw (1.35,2.14) -- (1.65,2.05);
\draw[] (3.00,1.00) -- (3.05,1.00) -- (3.10,1.00) -- (3.15,1.00) -- (3.20,1.00) -- (3.25,1.00) -- (3.30,1.00) -- (3.35,1.00) -- (3.40,1.00) -- (3.45,1.00) -- (3.50,1.00) -- (3.55,1.00) -- (3.60,1.00) -- (3.65,1.00) -- (3.70,1.00) -- (3.75,1.00) -- (3.80,1.00) -- (3.85,1.00) -- (3.90,1.00) -- (3.95,1.00) -- (4.00,1.00);
\draw[] (4.00,1.00) -- (4.05,1.00) -- (4.10,1.00) -- (4.15,1.00) -- (4.20,1.00) -- (4.25,1.00) -- (4.30,1.00) -- (4.35,1.00) -- (4.40,1.00) -- (4.45,1.00) -- (4.50,1.00) -- (4.55,1.00) -- (4.60,1.00) -- (4.65,1.00) -- (4.70,1.00) -- (4.75,1.00) -- (4.80,1.00) -- (4.85,1.00) -- (4.90,1.00) -- (4.95,1.00) -- (5.00,1.00);
\draw (4.35,1.10) -- (4.65,1.00);
\draw (4.35,0.90) -- (4.65,1.00);
\draw (0.50,2.00) node{$P$};
\draw (5.50,1.00) node{$P$};
\draw (0.50,0.00) node{$P$};
\end{tikzpicture}\\
            \begin{tikzpicture}[xscale=0.40,yscale=0.40]
\useasboundingbox (-0.5,-0.5) rectangle (6.5,2.5);
\draw[,] (3.00,1.00) -- (5.00,1.00);
\draw (4.15,1.10) -- (3.85,1.00);
\draw (4.15,0.90) -- (3.85,1.00);
\filldraw[fill=white] (3.00,1.00) ellipse (0.14cm and 0.14cm);
\draw (5.50,1.00) node{$O$};
\end{tikzpicture}&\begin{tikzpicture}[xscale=0.40,yscale=0.40]
\useasboundingbox (-0.5,-0.5) rectangle (6.5,2.5);
\draw[,] (3.00,1.00) -- (5.00,1.00);
\draw (3.85,1.10) -- (4.15,1.00);
\draw (3.85,0.90) -- (4.15,1.00);
\filldraw[fill=white] (3.00,1.00) ellipse (0.14cm and 0.14cm);
\draw (5.50,1.00) node{$P$};
\end{tikzpicture}\\
            \begin{tikzpicture}[xscale=0.40,yscale=0.40]
\useasboundingbox (-0.5,-0.5) rectangle (6.5,2.5);
\draw[] (5.00,0.00) -- (4.95,-0.01) -- (4.90,-0.01) -- (4.85,-0.02) -- (4.80,-0.02) -- (4.75,-0.03) -- (4.70,-0.03) -- (4.65,-0.04) -- (4.60,-0.04) -- (4.55,-0.04) -- (4.50,-0.05) -- (4.45,-0.05) -- (4.40,-0.05) -- (4.35,-0.05) -- (4.30,-0.04) -- (4.25,-0.04) -- (4.20,-0.03) -- (4.15,-0.03) -- (4.10,-0.02) -- (4.05,-0.01) -- (4.00,0.00);
\draw[] (4.00,0.00) -- (3.94,0.02) -- (3.87,0.04) -- (3.81,0.06) -- (3.75,0.08) -- (3.68,0.11) -- (3.62,0.14) -- (3.56,0.17) -- (3.50,0.21) -- (3.45,0.25) -- (3.39,0.29) -- (3.34,0.33) -- (3.29,0.37) -- (3.25,0.42) -- (3.20,0.47) -- (3.16,0.52) -- (3.13,0.57) -- (3.10,0.63) -- (3.07,0.68) -- (3.04,0.74) -- (3.03,0.80);
\draw[] (3.03,0.80) -- (3.02,0.81) -- (3.02,0.82) -- (3.02,0.83) -- (3.02,0.84) -- (3.01,0.85) -- (3.01,0.86) -- (3.01,0.87) -- (3.01,0.88) -- (3.01,0.89) -- (3.01,0.90) -- (3.01,0.91) -- (3.00,0.92) -- (3.00,0.93) -- (3.00,0.94) -- (3.00,0.95) -- (3.00,0.96) -- (3.00,0.97) -- (3.00,0.98) -- (3.00,0.99) -- (3.00,1.00);
\draw[] (3.00,1.00) -- (3.00,1.01) -- (3.00,1.02) -- (3.00,1.03) -- (3.00,1.04) -- (3.00,1.05) -- (3.00,1.06) -- (3.00,1.07) -- (3.00,1.08) -- (3.01,1.09) -- (3.01,1.10) -- (3.01,1.11) -- (3.01,1.12) -- (3.01,1.13) -- (3.01,1.14) -- (3.01,1.15) -- (3.02,1.16) -- (3.02,1.17) -- (3.02,1.18) -- (3.02,1.19) -- (3.03,1.20);
\draw[] (3.03,1.20) -- (3.04,1.26) -- (3.07,1.32) -- (3.10,1.37) -- (3.13,1.43) -- (3.16,1.48) -- (3.20,1.53) -- (3.25,1.58) -- (3.29,1.63) -- (3.34,1.67) -- (3.39,1.71) -- (3.45,1.75) -- (3.50,1.79) -- (3.56,1.83) -- (3.62,1.86) -- (3.68,1.89) -- (3.75,1.92) -- (3.81,1.94) -- (3.87,1.96) -- (3.94,1.98) -- (4.00,2.00);
\draw[] (4.00,2.00) -- (4.05,2.01) -- (4.10,2.02) -- (4.15,2.03) -- (4.20,2.03) -- (4.25,2.04) -- (4.30,2.04) -- (4.35,2.05) -- (4.40,2.05) -- (4.45,2.05) -- (4.50,2.05) -- (4.55,2.04) -- (4.60,2.04) -- (4.65,2.04) -- (4.70,2.03) -- (4.75,2.03) -- (4.80,2.02) -- (4.85,2.02) -- (4.90,2.01) -- (4.95,2.01) -- (5.00,2.00);
\draw (4.61,2.14) -- (4.30,2.05);
\draw (4.60,1.94) -- (4.30,2.05);
\draw (4.65,-0.14) -- (4.35,-0.05);
\draw (4.65,0.06) -- (4.35,-0.05);
\draw[] (3.00,1.00) -- (2.95,1.00) -- (2.90,1.00) -- (2.85,1.00) -- (2.80,1.00) -- (2.75,1.00) -- (2.70,1.00) -- (2.65,1.00) -- (2.60,1.00) -- (2.55,1.00) -- (2.50,1.00) -- (2.45,1.00) -- (2.40,1.00) -- (2.35,1.00) -- (2.30,1.00) -- (2.25,1.00) -- (2.20,1.00) -- (2.15,1.00) -- (2.10,1.00) -- (2.05,1.00) -- (2.00,1.00);
\draw[] (2.00,1.00) -- (1.95,1.00) -- (1.90,1.00) -- (1.85,1.00) -- (1.80,1.00) -- (1.75,1.00) -- (1.70,1.00) -- (1.65,1.00) -- (1.60,1.00) -- (1.55,1.00) -- (1.50,1.00) -- (1.45,1.00) -- (1.40,1.00) -- (1.35,1.00) -- (1.30,1.00) -- (1.25,1.00) -- (1.20,1.00) -- (1.15,1.00) -- (1.10,1.00) -- (1.05,1.00) -- (1.00,1.00);
\draw (1.70,1.10) -- (1.40,1.00);
\draw (1.70,0.90) -- (1.40,1.00);
\draw (5.50,2.00) node{$O$};
\draw (0.50,1.00) node{$O$};
\draw (5.50,0.00) node{$O$};
\end{tikzpicture}&\begin{tikzpicture}[xscale=0.40,yscale=0.40]
\useasboundingbox (-0.5,-0.5) rectangle (6.5,2.5);
\draw[] (5.00,0.00) -- (4.95,-0.01) -- (4.90,-0.01) -- (4.85,-0.02) -- (4.80,-0.02) -- (4.75,-0.03) -- (4.70,-0.03) -- (4.65,-0.04) -- (4.60,-0.04) -- (4.55,-0.04) -- (4.50,-0.05) -- (4.45,-0.05) -- (4.40,-0.05) -- (4.35,-0.05) -- (4.30,-0.04) -- (4.25,-0.04) -- (4.20,-0.03) -- (4.15,-0.03) -- (4.10,-0.02) -- (4.05,-0.01) -- (4.00,0.00);
\draw[] (4.00,0.00) -- (3.94,0.02) -- (3.87,0.04) -- (3.81,0.06) -- (3.75,0.08) -- (3.68,0.11) -- (3.62,0.14) -- (3.56,0.17) -- (3.50,0.21) -- (3.45,0.25) -- (3.39,0.29) -- (3.34,0.33) -- (3.29,0.37) -- (3.25,0.42) -- (3.20,0.47) -- (3.16,0.52) -- (3.13,0.57) -- (3.10,0.63) -- (3.07,0.68) -- (3.04,0.74) -- (3.03,0.80);
\draw[] (3.03,0.80) -- (3.02,0.81) -- (3.02,0.82) -- (3.02,0.83) -- (3.02,0.84) -- (3.01,0.85) -- (3.01,0.86) -- (3.01,0.87) -- (3.01,0.88) -- (3.01,0.89) -- (3.01,0.90) -- (3.01,0.91) -- (3.00,0.92) -- (3.00,0.93) -- (3.00,0.94) -- (3.00,0.95) -- (3.00,0.96) -- (3.00,0.97) -- (3.00,0.98) -- (3.00,0.99) -- (3.00,1.00);
\draw[] (3.00,1.00) -- (3.00,1.01) -- (3.00,1.02) -- (3.00,1.03) -- (3.00,1.04) -- (3.00,1.05) -- (3.00,1.06) -- (3.00,1.07) -- (3.00,1.08) -- (3.01,1.09) -- (3.01,1.10) -- (3.01,1.11) -- (3.01,1.12) -- (3.01,1.13) -- (3.01,1.14) -- (3.01,1.15) -- (3.02,1.16) -- (3.02,1.17) -- (3.02,1.18) -- (3.02,1.19) -- (3.03,1.20);
\draw[] (3.03,1.20) -- (3.04,1.26) -- (3.07,1.32) -- (3.10,1.37) -- (3.13,1.43) -- (3.16,1.48) -- (3.20,1.53) -- (3.25,1.58) -- (3.29,1.63) -- (3.34,1.67) -- (3.39,1.71) -- (3.45,1.75) -- (3.50,1.79) -- (3.56,1.83) -- (3.62,1.86) -- (3.68,1.89) -- (3.75,1.92) -- (3.81,1.94) -- (3.87,1.96) -- (3.94,1.98) -- (4.00,2.00);
\draw[] (4.00,2.00) -- (4.05,2.01) -- (4.10,2.02) -- (4.15,2.03) -- (4.20,2.03) -- (4.25,2.04) -- (4.30,2.04) -- (4.35,2.05) -- (4.40,2.05) -- (4.45,2.05) -- (4.50,2.05) -- (4.55,2.04) -- (4.60,2.04) -- (4.65,2.04) -- (4.70,2.03) -- (4.75,2.03) -- (4.80,2.02) -- (4.85,2.02) -- (4.90,2.01) -- (4.95,2.01) -- (5.00,2.00);
\draw (4.31,2.15) -- (4.60,2.04);
\draw (4.30,1.95) -- (4.60,2.04);
\draw (4.35,-0.15) -- (4.65,-0.04);
\draw (4.35,0.05) -- (4.65,-0.04);
\draw[] (1.00,1.00) -- (1.05,1.00) -- (1.10,1.00) -- (1.15,1.00) -- (1.20,1.00) -- (1.25,1.00) -- (1.30,1.00) -- (1.35,1.00) -- (1.40,1.00) -- (1.45,1.00) -- (1.50,1.00) -- (1.55,1.00) -- (1.60,1.00) -- (1.65,1.00) -- (1.70,1.00) -- (1.75,1.00) -- (1.80,1.00) -- (1.85,1.00) -- (1.90,1.00) -- (1.95,1.00) -- (2.00,1.00);
\draw[] (2.00,1.00) -- (2.05,1.00) -- (2.10,1.00) -- (2.15,1.00) -- (2.20,1.00) -- (2.25,1.00) -- (2.30,1.00) -- (2.35,1.00) -- (2.40,1.00) -- (2.45,1.00) -- (2.50,1.00) -- (2.55,1.00) -- (2.60,1.00) -- (2.65,1.00) -- (2.70,1.00) -- (2.75,1.00) -- (2.80,1.00) -- (2.85,1.00) -- (2.90,1.00) -- (2.95,1.00) -- (3.00,1.00);
\draw (1.35,1.10) -- (1.65,1.00);
\draw (1.35,0.90) -- (1.65,1.00);
\draw (5.50,2.00) node{$P$};
\draw (0.50,1.00) node{$P$};
\draw (5.50,0.00) node{$P$};
\end{tikzpicture}\\
            \begin{tikzpicture}[xscale=0.40,yscale=0.40]
\useasboundingbox (-0.5,-0.5) rectangle (6.5,2.5);
\draw[,] (3.00,1.00) -- (1.00,1.00);
\draw (2.15,1.10) -- (1.85,1.00);
\draw (2.15,0.90) -- (1.85,1.00);
\filldraw[fill=white] (3.00,1.00) ellipse (0.14cm and 0.14cm);
\draw (0.50,1.00) node{$O$};
\end{tikzpicture}&\begin{tikzpicture}[xscale=0.40,yscale=0.40]
\useasboundingbox (-0.5,-0.5) rectangle (6.5,2.5);
\draw[,] (3.00,1.00) -- (1.00,1.00);
\draw (1.85,1.10) -- (2.15,1.00);
\draw (1.85,0.90) -- (2.15,1.00);
\filldraw[fill=white] (3.00,1.00) ellipse (0.14cm and 0.14cm);
\draw (0.50,1.00) node{$P$};
\end{tikzpicture}\\
            \begin{tikzpicture}[xscale=0.40,yscale=0.40]
\useasboundingbox (-0.5,-0.5) rectangle (6.5,2.5);
\draw[] (1.00,2.00) -- (1.05,2.01) -- (1.11,2.01) -- (1.16,2.02) -- (1.21,2.03) -- (1.26,2.03) -- (1.32,2.04) -- (1.37,2.04) -- (1.42,2.05) -- (1.47,2.05) -- (1.52,2.05) -- (1.57,2.06) -- (1.62,2.06) -- (1.67,2.05) -- (1.72,2.05) -- (1.77,2.05) -- (1.82,2.04) -- (1.86,2.03) -- (1.91,2.03) -- (1.96,2.01) -- (2.00,2.00);
\draw[] (2.00,2.00) -- (2.06,1.98) -- (2.12,1.95) -- (2.18,1.92) -- (2.23,1.88) -- (2.29,1.85) -- (2.34,1.80) -- (2.40,1.76) -- (2.45,1.71) -- (2.50,1.66) -- (2.55,1.61) -- (2.59,1.56) -- (2.64,1.50) -- (2.69,1.44) -- (2.73,1.38) -- (2.78,1.32) -- (2.82,1.26) -- (2.87,1.19) -- (2.91,1.13) -- (2.96,1.06) -- (3.00,1.00);
\draw[] (3.00,1.00) -- (3.04,0.94) -- (3.09,0.87) -- (3.13,0.81) -- (3.18,0.74) -- (3.22,0.68) -- (3.27,0.62) -- (3.31,0.56) -- (3.36,0.50) -- (3.41,0.44) -- (3.45,0.39) -- (3.50,0.34) -- (3.55,0.29) -- (3.60,0.24) -- (3.66,0.20) -- (3.71,0.15) -- (3.77,0.12) -- (3.82,0.08) -- (3.88,0.05) -- (3.94,0.02) -- (4.00,0.00);
\draw[] (4.00,0.00) -- (4.04,-0.01) -- (4.09,-0.03) -- (4.14,-0.03) -- (4.18,-0.04) -- (4.23,-0.05) -- (4.28,-0.05) -- (4.33,-0.05) -- (4.38,-0.06) -- (4.43,-0.06) -- (4.48,-0.05) -- (4.53,-0.05) -- (4.58,-0.05) -- (4.63,-0.04) -- (4.68,-0.04) -- (4.74,-0.03) -- (4.79,-0.03) -- (4.84,-0.02) -- (4.89,-0.01) -- (4.95,-0.01) -- (5.00,0.00);
\draw (4.63,-0.15) -- (4.33,-0.06);
\draw (4.62,0.05) -- (4.33,-0.06);
\draw (1.63,1.96) -- (1.32,2.05);
\draw (1.62,2.16) -- (1.32,2.05);
\draw[] (5.00,2.00) -- (4.95,2.01) -- (4.89,2.01) -- (4.84,2.02) -- (4.79,2.03) -- (4.74,2.03) -- (4.68,2.04) -- (4.63,2.04) -- (4.58,2.05) -- (4.53,2.05) -- (4.48,2.05) -- (4.43,2.06) -- (4.38,2.06) -- (4.33,2.05) -- (4.28,2.05) -- (4.23,2.05) -- (4.18,2.04) -- (4.14,2.03) -- (4.09,2.03) -- (4.04,2.01) -- (4.00,2.00);
\draw[] (4.00,2.00) -- (3.94,1.98) -- (3.88,1.95) -- (3.82,1.92) -- (3.77,1.88) -- (3.71,1.85) -- (3.66,1.80) -- (3.60,1.76) -- (3.55,1.71) -- (3.50,1.66) -- (3.45,1.61) -- (3.41,1.56) -- (3.36,1.50) -- (3.31,1.44) -- (3.27,1.38) -- (3.22,1.32) -- (3.18,1.26) -- (3.13,1.19) -- (3.09,1.13) -- (3.04,1.06) -- (3.00,1.00);
\draw[] (3.00,1.00) -- (2.96,0.94) -- (2.91,0.87) -- (2.87,0.81) -- (2.82,0.74) -- (2.78,0.68) -- (2.73,0.62) -- (2.69,0.56) -- (2.64,0.50) -- (2.59,0.44) -- (2.55,0.39) -- (2.50,0.34) -- (2.45,0.29) -- (2.40,0.24) -- (2.34,0.20) -- (2.29,0.15) -- (2.23,0.12) -- (2.18,0.08) -- (2.12,0.05) -- (2.06,0.02) -- (2.00,0.00);
\draw[] (2.00,0.00) -- (1.96,-0.01) -- (1.91,-0.03) -- (1.86,-0.03) -- (1.82,-0.04) -- (1.77,-0.05) -- (1.72,-0.05) -- (1.67,-0.05) -- (1.62,-0.06) -- (1.57,-0.06) -- (1.52,-0.05) -- (1.47,-0.05) -- (1.42,-0.05) -- (1.37,-0.04) -- (1.32,-0.04) -- (1.26,-0.03) -- (1.21,-0.03) -- (1.16,-0.02) -- (1.11,-0.01) -- (1.05,-0.01) -- (1.00,0.00);
\draw (1.68,0.04) -- (1.37,-0.05);
\draw (1.67,-0.16) -- (1.37,-0.05);
\draw (4.68,2.15) -- (4.38,2.06);
\draw (4.67,1.95) -- (4.38,2.06);
\draw (0.50,2.00) node{$O$};
\draw (5.50,2.00) node{$O$};
\draw (0.50,0.00) node{$O$};
\draw (5.50,0.00) node{$O$};
\end{tikzpicture}&\begin{tikzpicture}[xscale=0.40,yscale=0.40]
\useasboundingbox (-0.5,-0.5) rectangle (6.5,2.5);
\draw[] (1.00,2.00) -- (1.05,2.01) -- (1.11,2.01) -- (1.16,2.02) -- (1.21,2.03) -- (1.26,2.03) -- (1.32,2.04) -- (1.37,2.04) -- (1.42,2.05) -- (1.47,2.05) -- (1.52,2.05) -- (1.57,2.06) -- (1.62,2.06) -- (1.67,2.05) -- (1.72,2.05) -- (1.77,2.05) -- (1.82,2.04) -- (1.86,2.03) -- (1.91,2.03) -- (1.96,2.01) -- (2.00,2.00);
\draw[] (2.00,2.00) -- (2.06,1.98) -- (2.12,1.95) -- (2.18,1.92) -- (2.23,1.88) -- (2.29,1.85) -- (2.34,1.80) -- (2.40,1.76) -- (2.45,1.71) -- (2.50,1.66) -- (2.55,1.61) -- (2.59,1.56) -- (2.64,1.50) -- (2.69,1.44) -- (2.73,1.38) -- (2.78,1.32) -- (2.82,1.26) -- (2.87,1.19) -- (2.91,1.13) -- (2.96,1.06) -- (3.00,1.00);
\draw[] (3.00,1.00) -- (3.04,0.94) -- (3.09,0.87) -- (3.13,0.81) -- (3.18,0.74) -- (3.22,0.68) -- (3.27,0.62) -- (3.31,0.56) -- (3.36,0.50) -- (3.41,0.44) -- (3.45,0.39) -- (3.50,0.34) -- (3.55,0.29) -- (3.60,0.24) -- (3.66,0.20) -- (3.71,0.15) -- (3.77,0.12) -- (3.82,0.08) -- (3.88,0.05) -- (3.94,0.02) -- (4.00,0.00);
\draw[] (4.00,0.00) -- (4.04,-0.01) -- (4.09,-0.03) -- (4.14,-0.03) -- (4.18,-0.04) -- (4.23,-0.05) -- (4.28,-0.05) -- (4.33,-0.05) -- (4.38,-0.06) -- (4.43,-0.06) -- (4.48,-0.05) -- (4.53,-0.05) -- (4.58,-0.05) -- (4.63,-0.04) -- (4.68,-0.04) -- (4.74,-0.03) -- (4.79,-0.03) -- (4.84,-0.02) -- (4.89,-0.01) -- (4.95,-0.01) -- (5.00,0.00);
\draw (4.33,-0.16) -- (4.63,-0.05);
\draw (4.32,0.04) -- (4.63,-0.05);
\draw (1.33,1.95) -- (1.62,2.06);
\draw (1.32,2.15) -- (1.62,2.06);
\draw[] (5.00,2.00) -- (4.95,2.01) -- (4.89,2.01) -- (4.84,2.02) -- (4.79,2.03) -- (4.74,2.03) -- (4.68,2.04) -- (4.63,2.04) -- (4.58,2.05) -- (4.53,2.05) -- (4.48,2.05) -- (4.43,2.06) -- (4.38,2.06) -- (4.33,2.05) -- (4.28,2.05) -- (4.23,2.05) -- (4.18,2.04) -- (4.14,2.03) -- (4.09,2.03) -- (4.04,2.01) -- (4.00,2.00);
\draw[] (4.00,2.00) -- (3.94,1.98) -- (3.88,1.95) -- (3.82,1.92) -- (3.77,1.88) -- (3.71,1.85) -- (3.66,1.80) -- (3.60,1.76) -- (3.55,1.71) -- (3.50,1.66) -- (3.45,1.61) -- (3.41,1.56) -- (3.36,1.50) -- (3.31,1.44) -- (3.27,1.38) -- (3.22,1.32) -- (3.18,1.26) -- (3.13,1.19) -- (3.09,1.13) -- (3.04,1.06) -- (3.00,1.00);
\draw[] (3.00,1.00) -- (2.96,0.94) -- (2.91,0.87) -- (2.87,0.81) -- (2.82,0.74) -- (2.78,0.68) -- (2.73,0.62) -- (2.69,0.56) -- (2.64,0.50) -- (2.59,0.44) -- (2.55,0.39) -- (2.50,0.34) -- (2.45,0.29) -- (2.40,0.24) -- (2.34,0.20) -- (2.29,0.15) -- (2.23,0.12) -- (2.18,0.08) -- (2.12,0.05) -- (2.06,0.02) -- (2.00,0.00);
\draw[] (2.00,0.00) -- (1.96,-0.01) -- (1.91,-0.03) -- (1.86,-0.03) -- (1.82,-0.04) -- (1.77,-0.05) -- (1.72,-0.05) -- (1.67,-0.05) -- (1.62,-0.06) -- (1.57,-0.06) -- (1.52,-0.05) -- (1.47,-0.05) -- (1.42,-0.05) -- (1.37,-0.04) -- (1.32,-0.04) -- (1.26,-0.03) -- (1.21,-0.03) -- (1.16,-0.02) -- (1.11,-0.01) -- (1.05,-0.01) -- (1.00,0.00);
\draw (1.38,0.05) -- (1.67,-0.06);
\draw (1.37,-0.15) -- (1.67,-0.06);
\draw (4.38,2.16) -- (4.68,2.05);
\draw (4.37,1.96) -- (4.68,2.05);
\draw (0.50,2.00) node{$P$};
\draw (5.50,2.00) node{$P$};
\draw (0.50,0.00) node{$P$};
\draw (5.50,0.00) node{$P$};
\end{tikzpicture}\\
            \begin{tikzpicture}[xscale=0.40,yscale=0.40]
\useasboundingbox (-0.5,-0.5) rectangle (6.5,2.5);
\draw[] (5.00,2.00) -- (4.93,1.95) -- (4.85,1.90) -- (4.78,1.85) -- (4.70,1.80) -- (4.63,1.75) -- (4.56,1.70) -- (4.50,1.65) -- (4.43,1.60) -- (4.37,1.55) -- (4.31,1.50) -- (4.26,1.45) -- (4.21,1.40) -- (4.16,1.35) -- (4.12,1.30) -- (4.09,1.25) -- (4.06,1.20) -- (4.03,1.15) -- (4.01,1.10) -- (4.00,1.05) -- (4.00,1.00);
\draw[] (4.00,1.00) -- (4.00,0.95) -- (4.01,0.90) -- (4.03,0.85) -- (4.06,0.80) -- (4.09,0.75) -- (4.12,0.70) -- (4.16,0.65) -- (4.21,0.60) -- (4.26,0.55) -- (4.31,0.50) -- (4.37,0.45) -- (4.43,0.40) -- (4.50,0.35) -- (4.56,0.30) -- (4.63,0.25) -- (4.70,0.20) -- (4.78,0.15) -- (4.85,0.10) -- (4.93,0.05) -- (5.00,0.00);
\draw (4.09,1.16) -- (4.01,0.85);
\draw (3.89,1.14) -- (4.01,0.85);
\draw (5.50,2.00) node{$O$};
\draw (5.50,0.00) node{$P$};
\end{tikzpicture}&\begin{tikzpicture}[xscale=0.40,yscale=0.40]
\useasboundingbox (-0.5,-0.5) rectangle (6.5,2.5);
\draw[] (1.00,2.00) -- (1.07,1.95) -- (1.15,1.90) -- (1.22,1.85) -- (1.30,1.80) -- (1.37,1.75) -- (1.44,1.70) -- (1.50,1.65) -- (1.57,1.60) -- (1.63,1.55) -- (1.69,1.50) -- (1.74,1.45) -- (1.79,1.40) -- (1.84,1.35) -- (1.88,1.30) -- (1.91,1.25) -- (1.94,1.20) -- (1.97,1.15) -- (1.99,1.10) -- (2.00,1.05) -- (2.00,1.00);
\draw[] (2.00,1.00) -- (2.00,0.95) -- (1.99,0.90) -- (1.97,0.85) -- (1.94,0.80) -- (1.91,0.75) -- (1.88,0.70) -- (1.84,0.65) -- (1.79,0.60) -- (1.74,0.55) -- (1.69,0.50) -- (1.63,0.45) -- (1.57,0.40) -- (1.50,0.35) -- (1.44,0.30) -- (1.37,0.25) -- (1.30,0.20) -- (1.22,0.15) -- (1.15,0.10) -- (1.07,0.05) -- (1.00,0.00);
\draw (2.11,1.14) -- (1.99,0.85);
\draw (1.91,1.16) -- (1.99,0.85);
\draw (0.50,2.00) node{$P$};
\draw (0.50,0.00) node{$O$};
\end{tikzpicture}
          \end{array}
          \\
          \begin{array}{c}
            \begin{tikzpicture}[xscale=0.40,yscale=0.40]
\useasboundingbox (-0.5,-0.5) rectangle (6.5,2.5);
\draw[] (1.00,2.00) -- (1.05,2.01) -- (1.11,2.01) -- (1.16,2.02) -- (1.21,2.03) -- (1.26,2.03) -- (1.32,2.04) -- (1.37,2.04) -- (1.42,2.05) -- (1.47,2.05) -- (1.52,2.05) -- (1.57,2.06) -- (1.62,2.06) -- (1.67,2.05) -- (1.72,2.05) -- (1.77,2.05) -- (1.82,2.04) -- (1.86,2.03) -- (1.91,2.03) -- (1.96,2.01) -- (2.00,2.00);
\draw[] (2.00,2.00) -- (2.06,1.98) -- (2.12,1.95) -- (2.18,1.92) -- (2.23,1.88) -- (2.29,1.85) -- (2.34,1.80) -- (2.40,1.76) -- (2.45,1.71) -- (2.50,1.66) -- (2.55,1.61) -- (2.59,1.56) -- (2.64,1.50) -- (2.69,1.44) -- (2.73,1.38) -- (2.78,1.32) -- (2.82,1.26) -- (2.87,1.19) -- (2.91,1.13) -- (2.96,1.06) -- (3.00,1.00);
\draw[] (3.00,1.00) -- (3.04,0.94) -- (3.09,0.87) -- (3.13,0.81) -- (3.18,0.74) -- (3.22,0.68) -- (3.27,0.62) -- (3.31,0.56) -- (3.36,0.50) -- (3.41,0.44) -- (3.45,0.39) -- (3.50,0.34) -- (3.55,0.29) -- (3.60,0.24) -- (3.66,0.20) -- (3.71,0.15) -- (3.77,0.12) -- (3.82,0.08) -- (3.88,0.05) -- (3.94,0.02) -- (4.00,0.00);
\draw[] (4.00,0.00) -- (4.04,-0.01) -- (4.09,-0.03) -- (4.14,-0.03) -- (4.18,-0.04) -- (4.23,-0.05) -- (4.28,-0.05) -- (4.33,-0.05) -- (4.38,-0.06) -- (4.43,-0.06) -- (4.48,-0.05) -- (4.53,-0.05) -- (4.58,-0.05) -- (4.63,-0.04) -- (4.68,-0.04) -- (4.74,-0.03) -- (4.79,-0.03) -- (4.84,-0.02) -- (4.89,-0.01) -- (4.95,-0.01) -- (5.00,0.00);
\draw (4.33,-0.16) -- (4.63,-0.05);
\draw (4.32,0.04) -- (4.63,-0.05);
\draw (1.33,1.95) -- (1.62,2.06);
\draw (1.32,2.15) -- (1.62,2.06);
\draw[] (5.00,2.00) -- (4.95,2.01) -- (4.89,2.01) -- (4.84,2.02) -- (4.79,2.03) -- (4.74,2.03) -- (4.68,2.04) -- (4.63,2.04) -- (4.58,2.05) -- (4.53,2.05) -- (4.48,2.05) -- (4.43,2.06) -- (4.38,2.06) -- (4.33,2.05) -- (4.28,2.05) -- (4.23,2.05) -- (4.18,2.04) -- (4.14,2.03) -- (4.09,2.03) -- (4.04,2.01) -- (4.00,2.00);
\draw[] (4.00,2.00) -- (3.94,1.98) -- (3.88,1.95) -- (3.82,1.92) -- (3.77,1.88) -- (3.71,1.85) -- (3.66,1.80) -- (3.60,1.76) -- (3.55,1.71) -- (3.50,1.66) -- (3.45,1.61) -- (3.41,1.56) -- (3.36,1.50) -- (3.31,1.44) -- (3.27,1.38) -- (3.22,1.32) -- (3.18,1.26) -- (3.13,1.19) -- (3.09,1.13) -- (3.04,1.06) -- (3.00,1.00);
\draw[] (3.00,1.00) -- (2.96,0.94) -- (2.91,0.87) -- (2.87,0.81) -- (2.82,0.74) -- (2.78,0.68) -- (2.73,0.62) -- (2.69,0.56) -- (2.64,0.50) -- (2.59,0.44) -- (2.55,0.39) -- (2.50,0.34) -- (2.45,0.29) -- (2.40,0.24) -- (2.34,0.20) -- (2.29,0.15) -- (2.23,0.12) -- (2.18,0.08) -- (2.12,0.05) -- (2.06,0.02) -- (2.00,0.00);
\draw[] (2.00,0.00) -- (1.96,-0.01) -- (1.91,-0.03) -- (1.86,-0.03) -- (1.82,-0.04) -- (1.77,-0.05) -- (1.72,-0.05) -- (1.67,-0.05) -- (1.62,-0.06) -- (1.57,-0.06) -- (1.52,-0.05) -- (1.47,-0.05) -- (1.42,-0.05) -- (1.37,-0.04) -- (1.32,-0.04) -- (1.26,-0.03) -- (1.21,-0.03) -- (1.16,-0.02) -- (1.11,-0.01) -- (1.05,-0.01) -- (1.00,0.00);
\draw (1.68,0.04) -- (1.37,-0.05);
\draw (1.67,-0.16) -- (1.37,-0.05);
\draw (4.68,2.15) -- (4.38,2.06);
\draw (4.67,1.95) -- (4.38,2.06);
\draw (0.50,2.00) node{$P$};
\draw (5.50,2.00) node{$O$};
\draw (0.50,0.00) node{$O$};
\draw (5.50,0.00) node{$P$};
\end{tikzpicture}
          \end{array}
        \end{array}
      \end{mymath}
    }
  }
  \vspace{-3ex}
  \caption{Generators of the strategies.}
  \vspace{-2ex}
  \label{fig:inno-gen}
\end{figure}
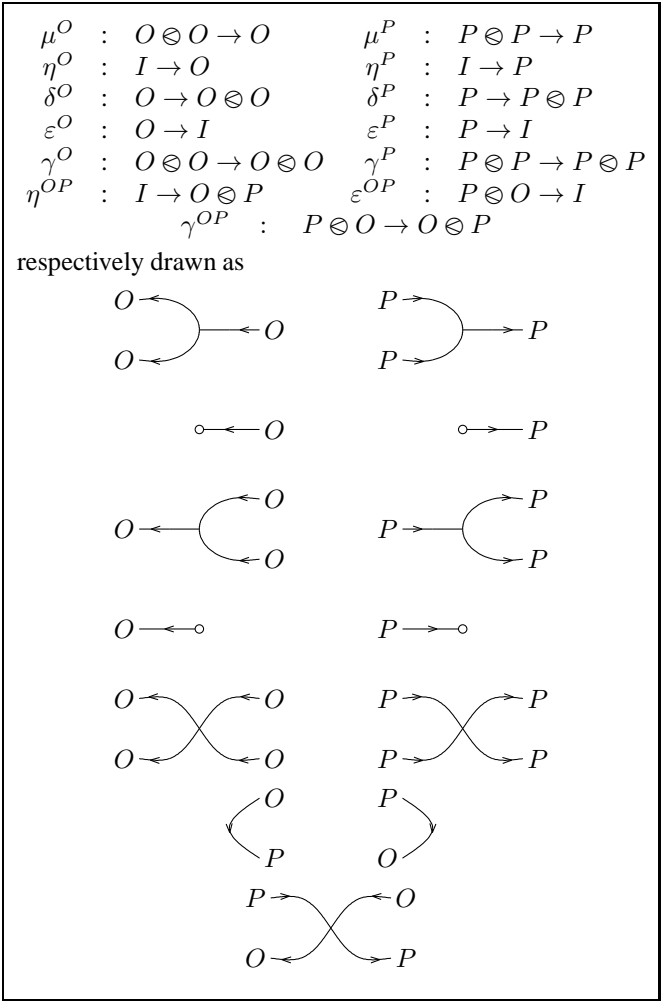

\pagebreak
\paragraph{A game semantics.}
A formula $A$ is interpreted as a filiform game~$\intp{A}$ by
\vspace\reduce
\[
\intp{P}=I
\qquad
\intp{\qforall x A}=O\before{}\intp{A}
\qquad
\intp{\qexists x A}=P\before{}\intp{A}
\vspace\reduce
\]
A cut-free proof $\pi:A\vdash B$ is interpreted as a strategy
$\sigma:\intp{A}\llimp\intp{B}$ whose causality partial order $\leq_\sigma$ is
defined as follows. For every Proponent move $P$ interpreting a quantifier
introduced by a rule which is either
\begin{mymath}
\inferrule{A[t/x]\vdash B}{\qforall x A\vdash B}{\lrule{$\forall$-L}}
\qqtor
\inferrule{A\vdash B[t/x]}{A\vdash \qexists x B}{\lrule{$\exists$-R}}
\end{mymath}
every Opponent move $O$ interpreting an universal quantification $\forall x$ on
the right-hand side of a sequent, or an existential quantification $\exists x$
on the left-hand side of a sequent, is such that $O\leq_\sigma P$ whenever the
variable $x$ is free in the term $t$.
For example, a proof
\begin{mymath}
\inferrule{
\inferrule{
\inferrule{
\inferrule{\null}
{P\vdash Q[t/z]}{\lrule{Ax}}
}
{P\vdash\qexists z Q}{\lrule{$\exists$-R}}
}
{\qexists y P\vdash\qexists z Q}{\lrule{$\exists$-L}}
}
{\qexists x{\qexists y P}\vdash\qexists z Q}{\lrule{$\exists$-L}}
\end{mymath}
is interpreted respectively by the strategies
\begin{myequation}
  \label{eq:ex-intp}
  \hspace{-2ex}
  \sstrid{strat_ex_xy}
  \ \ \ 
  \sstrid{strat_ex_x}
  \ \ \ 
  \sstrid{strat_ex_}
\end{myequation}
when the free variables of $t$ are $\{x,y\}$, $\{x\}$ or $\emptyset$.


\paragraph{An equational theory of strategies.}
\label{subsection:walking-inno}
We can now introduce the equational theory which will be shown to present the
category~$\Games$.

\begin{definition}
  \label{definition:innocent-strategies}
  The \emph{equational theory of strategies} is the equational theory $\eqth{G}$
  with two atomic types $O$ and $P$ and thirteen generators depicted in
  Figure~\ref{fig:inno-gen} such that
  \begin{itemize}
  \item the Opponent structure
    \begin{myequation}
      \label{eq:O-struct}
      (O,\mu^O,\eta^O,\delta^O,\varepsilon^O,\gamma^O)
    \end{myequation}
    is a bicommutative qualitative bialgebra,
  \item the Proponent structure
    $(P,\mu^P,\eta^P,\delta^P,\varepsilon^P,\gamma^P)$, as well as the
    morphism~$\gamma^{OP}$, are deduced from the Opponent structure
    \eqref{eq:O-struct} by composition with the duality morphisms~$\eta^{OP}$
    and~$\varepsilon^{OP}$ (for example
    \hbox{$\eta^P=(\varepsilon^O\before\id_P)\circ\eta^{OP}$}).
  \end{itemize}
\end{definition}
\noindent
We write $\catquot{\mathcal{G}}$ for the monoidal category generated by
$\eqth{G}$. It can be noticed that the generators $\mu^P$, $\eta^P$, $\delta^P$,
$\varepsilon^P$, $\gamma^P$ and~$\gamma^{OP}$ are superfluous in this
presentation (since they can be deduced from the Opponent structure and
duality). However, removing them would seriously complicate the proofs.

We can now proceed as in Section~\ref{section:presentation-rel} to show that the
theory~$\eqth{G}$ introduced in Definition~\ref{definition:innocent-strategies}
presents the category~$\Games$.
First, in the category $\Games$ with the monoidal structure induced by
$\before{}$, the objects $O$ and $P$ can be canonically equipped with thirteen
morphisms as shown in Figure~\ref{fig:inno-gen} in order to form a model of the
theory~$\eqth{G}$.

Conversely, we need to introduce a notion of canonical form for the morphisms
of~$\mathcal{G}$. \emph{Stairs} are defined similarly as before, but are now
constructed from the three kinds of polarized crossings $\gamma^O$, $\gamma^P$
and $\gamma^{OP}$ instead of simply~$\gamma$ in~\eqref{eq:ctx-S}.
The notion of \emph{precanonical form}~$\phi$ is now defined inductively as
shown in Figure~\ref{fig:precan-strat},
\begin{figure}[!t]
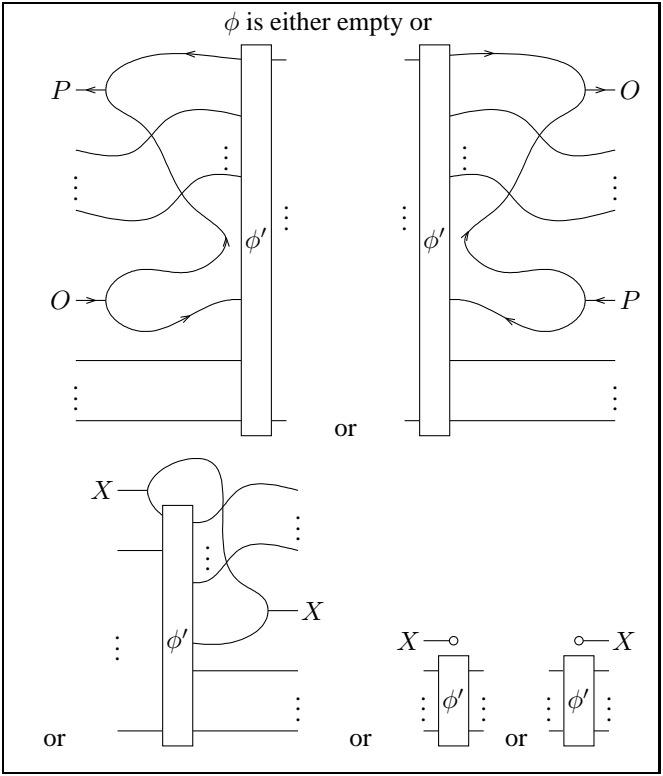

  \centering
  \fbox{
    \vbox{
      $\phi$ is either empty or
      \begin{mymath}
        \begin{array}{c}
          \sstrid{nf_adj}
          \qtor
          \sstrid{nf_coadj}
          \\
          \tor
          \sstrid{nf_mu}
          \tor
          \sstrid{nf_eps}
          \tor
          \sstrid{nf_eta}
        \end{array}
      \end{mymath}
    }
  }
  \caption{Precanonical forms for strategies.}
  \vspace{-4ex}
  \label{fig:precan-strat}
\end{figure}
where the object~$X$ is either~$O$ or~$P$ and $\phi'$ is a precanonical
form. These cases correspond respectively to the productions of the following
grammar
\begin{mymath}
\phi
\qgramdef
Z
\sgramor
A_i\phi
\sgramor
B_i\phi
\sgramor
W^X_i\phi
\sgramor
E^X\phi
\sgramor
H^X\phi
\end{mymath}
By induction on the size of morphisms, it can be shown that every morphism
of~$\mathcal{G}$ is equivalent to a precanonical form and a notion of canonical
form can be defined by adapting the rewriting system~\eqref{eq:mrel-cf-rs} into
a normalizing rewriting system for precanonical forms. Finally, a reasoning
similar to Lemma~\ref{lemma:mrel-bij} shows that canonical forms are in
bijection with morphisms of the category~$\Games$.

\begin{theorem}
  The monoidal category $\Games{}$ (with the~$\before{}$ tensor product) is
  presented by the equational theory $\eqth{G}$.
\end{theorem}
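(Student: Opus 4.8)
The plan is to mirror the proof strategy used for $\Rel$ in Section~\ref{section:presentation-rel}, following the Burroni--Lafont methodology recalled in Section~\ref{sec:pres}. Concretely, one must exhibit a strict monoidal functor $\represents{-}\colon\catquot{\mathcal{G}}\to\Games$ and show it is full and faithful. For the first (model) part, we already noted that the objects $O$ and $P$ of $\Games$ carry the thirteen strategies of Figure~\ref{fig:inno-gen}; so the work is to check that these strategies satisfy the axioms of $\eqth{G}$, i.e.\ that $(O,\mu^O,\eta^O,\delta^O,\varepsilon^O,\gamma^O)$ is a qualitative bicommutative bialgebra in $\Games$ and that the Proponent structure together with $\gamma^{OP}$ arises from the duality $\eta^{OP},\varepsilon^{OP}$ as prescribed. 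Since all these strategies are explicit partial orders on very small games, each axiom is a finite check on composites of relations; by the defining closure property of $\Games$ these composites are morphisms, and acyclicity is automatic here because the games involved are filiform with the expected polarities. This gives the functor $\represents{-}$, hence point~(1).

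For fullness and faithfulness I would follow the canonical-form route. First, by induction on the size of a morphism of $\mathcal{G}$ — writing any non-identity morphism as $\phi'\circ(\intset{m}\otimes\pi\otimes\intset{n})$ for a generator $\pi$ and doing case analysis on $\pi$ and on $\intset{m}$, exactly as in Lemma~\ref{lemma:mrel-precan} — one shows every morphism of $\mathcal{G}$ is $\equiv$-equivalent to a precanonical form in the grammar $\phi\qgramdef Z\sgramor A_i\phi\sgramor B_i\phi\sgramor W^X_i\phi\sgramor E^X\phi\sgramor H^X\phi$. The new ingredients relative to $\FMR$ are the polarized crossings $\gamma^O,\gamma^P,\gamma^{OP}$ (used to build the three flavours of stairs) and the duality generators $\eta^{OP},\varepsilon^{OP}$ (giving the productions $A_i,B_i$); one checks that a generator appearing to the left of a precanonical form can always be absorbed, using the bialgebra relations, the duality zig-zag equations, and the naturality of the polarized symmetries. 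Then one defines canonical forms as the normal forms of a terminating, confluent rewriting system extending~\eqref{eq:mrel-cf-rs} — adding, as in the $\Rel$ case, the idempotence rule for the $W$-letters, plus rules that commute the duality letters $A_i,B_i$ past the bialgebra letters and order stairs of different polarities — and one verifies that each rewrite step relates $\equiv$-equivalent morphisms. Hence every morphism of $\mathcal{G}$ has a unique canonical form.

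Finally, mirroring Lemma~\ref{lemma:mrel-bij}, one shows the canonical forms are in bijection with the morphisms of $\Games$: by induction on the source game and on the size $\size{\sigma}$ of a strategy $\sigma\colon A\to B$, one reads off which production of the grammar must produce it — stripping a leading $O$ from the source via $E^O$ if that move justifies nothing, via $W^O_k$ if $k$ is the largest index it justifies, or via a duality letter when the outermost move interacts with a Proponent move — and one checks this enumeration is forced, so the precanonical form produced is canonical and unique. Combined with point~(1), this shows $\represents{-}$ is bijective on hom-sets, i.e.\ full and faithful, so $\eqth{G}$ presents $\Games$; as a by-product the earlier promise that $\Games$ contains no morphisms beyond strategies follows, since every morphism of $\mathcal{G}$ represents an honest strategy. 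The main obstacle I expect is the combinatorial bookkeeping in the precanonical-form lemma and the confluence of the extended rewriting system: the polarization doubles or triples several cases (three kinds of crossings, two kinds of units/counits on $O$ and $P$) and the duality generators create genuinely new critical pairs between the zig-zag equations and the bialgebra equations that must all be resolved; this is exactly the ``repetitive but straightforward'' verification the paper warns about, and it is where the bulk of the detailed argument in~\cite{mimram:phd} lives.
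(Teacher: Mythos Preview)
Your proposal is correct and follows essentially the same route as the paper: verify that $\Games$ is a model of $\eqth{G}$ via the thirteen strategies of Figure~\ref{fig:inno-gen}, show by induction on size that every morphism of $\mathcal{G}$ is $\equiv$-equivalent to a precanonical form in the grammar $Z\mid A_i\phi\mid B_i\phi\mid W^X_i\phi\mid E^X\phi\mid H^X\phi$, normalize these by a terminating confluent extension of~\eqref{eq:mrel-cf-rs}, and then argue as in Lemma~\ref{lemma:mrel-bij} that canonical forms biject with strategies. Your identification of the new ingredients (three polarized crossings, the duality productions $A_i,B_i$, the extra critical pairs from the zig-zag laws) and of where the bulk of the checking lies is also in line with the paper's own account and its deferral of details to~\cite{mimram:phd}.
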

\vspace{-2.3ex}

\pagebreak
\ 
\vspace{1ex}

\noindent
As a direct consequence of this Theorem, we deduce that
\begin{enumerate}
\item the composite of two strategies, in the sense of
  Definition~\ref{def:strategy}, is itself a strategy (in particular, the
  acyclicity property is preserved by composition),
\item the strategies of~$\Games$ are definable (when the set~$\axioms$ of axioms
  is reasonable enough): it is enough to check that generators are definable --
  for example, the first case of~\eqref{eq:ex-intp} shows that~$\mu^P$ is
  definable.
\end{enumerate}

\section{Conclusion}
\vspace{-1ex}
We have constructed a game semantics for first-order propositional logic and
given a presentation of the category~$\Games$ of games and definable
strategies. This has revealed the essential structure of causality induced by
quantifiers as well as provided technical tools to show definability and
composition of strategies.

We consider this work much more as a starting point to bridge semantics and
algebra than as a final result. The methodology presented here seems to be very
general and many tracks remain to be explored.
First, we would like to extend the presentation to a game semantics for richer
logic systems, containing connectives (such as conjunction or
disjunction). Whilst we do not expect essential technical complications, this
case is much more difficult to grasp and manipulate, since a presentation of
such a semantics would have generators up to dimension 3 (one dimension is added
since games would be trees instead of lines) and corresponding diagrams would
now live in a 3-dimensional space.
It would also be interesting to know whether it is possible to orient the
equalities in the presentations in order to obtain strongly normalizing
rewriting systems for the algebraic structures described in the paper. Such
rewriting systems are given in~\cite{lafont:boolean-circuits} -- for monoids and
commutative monoids for example -- but finding a strongly normalizing rewriting
system presenting the theory of bialgebras is a difficult
problem~\cite{mimram:phd}.
Finally, some of the proofs (such as the proof of Lemma~\ref{lemma:mrel-precan})
are very repetitive and we believe that they could be mechanically checked or
automated. However, finding a good representation of morphisms, in order for a
program to be able to manipulate them, is a difficult task that we should
address in subsequent works.

\paragraph{Acknowledgements.}
I would like to thank my PhD supervisor Paul-André Melliès as well as John Baez,
Albert Burroni, Yves Guiraud, Martin Hyland, Yves Lafont and François Métayer
for the lively discussion we had, during which I learned so much.

\vspace{-1ex}
\bibliographystyle{latex8}
\bibliography{these}

\end{document}